\newcommand{\dhat}{\widehat{d}}
\newcommand{\dtilde}{\tilde{d}}
\newcommand{\eps}{\varepsilon}
\newcommand{\epshat}{\widehat{\eps}}
\newcommand{\e}{\mathrm{e}}
\newcommand{\R}{\mathbb{R}}
\newcommand{\polylog}{\mathrm{polylog}}
\newcommand{\sym}{\mathrm{Sym}}
\newcommand{\vol}{\mathrm{vol}}
\newtheorem{theorem}{Theorem}[section]
\newtheorem{definition}[theorem]{Definition}
\newtheorem{lemma}[theorem]{Lemma}
\newtheorem{remark}[theorem]{Remark}
\newtheorem{corollary}[theorem]{Corollary}
\newtheorem{proposition}[theorem]{Proposition}
\newcommand{\SRn} {\ensuremath{[0,\sqrt{n}]^2}}
\newcommand{\Prob}[1]{\mathbb{P} \left( #1 \right)}
\newcommand{\whp}{w.h.p.}
\newcommand{\RG} {\ensuremath{\mathcal G(n,r)}}
\newcommand{\RGp} {\ensuremath{\mathcal G(n,r,p)}}
\newcommand{\remove}[1]{}
\DeclareMathOperator{\Exp}{\mathbb{E}}
\newcommand{\ExpCond}[2]{\mathbb{E}\left(#1 \mid #2\right)}
\newcommand{\jdc}[1]{{\color{red}#1}}
\begin{document}
\date{}
\title{Reconstruction of Random Geometric Graphs: Breaking the $\Omega (r)$ distortion barrier}

\author{Varsha Dani\thanks{Dept. of Computer Science, Rochester Institute of Technology, Rochester, NY}
\and Josep D\'{i}az\thanks{Dept. Comput. Science, UPC, Barcelona. Partially supported  by PID-2020-112581GB-C21 (MOTION)}
\and Thomas P. Hayes\thanks{Dept. Computer Science, U. New Mexico, Albuquerque, NM.}
\and Cristopher Moore\thanks{Santa Fe Institute, Santa Fe, NM. Partially supported by NSF grant IIS-1838251}}

\maketitle

\thispagestyle{empty}
\begin{abstract}
Embedding graphs in a geographical or latent space, i.e.\ inferring locations for vertices in Euclidean space or on a smooth manifold or submanifold, is a common task in network analysis, statistical inference, and graph visualization. We consider the classic model of random geometric graphs where $n$ points are scattered uniformly in a square of area $n$, and two points have an edge between them if and only if their Euclidean distance is less than $r$. The reconstruction problem then consists of inferring the vertex positions, up to the symmetries of the square, given only the adjacency matrix of the resulting graph. We give an algorithm that, if $r=n^\alpha$ for any $\alpha > 0$, with high probability reconstructs the vertex positions with a maximum error of $O(n^\beta)$ where $\beta=1/2-(4/3)\alpha$, until $\alpha \ge 3/8$ where $\beta=0$ and the error becomes $O(\sqrt{\log n})$. This improves over earlier results, which were unable to reconstruct with error less than $r$. Our method estimates Euclidean distances using a hybrid of graph distances and short-range estimates based on the number of common neighbors. We extend our results to the surface of the sphere in $\R^3$ and to hypercubes in any constant fixed dimension.\footnote{An extended abstract of this paper will be presented in ICALP-2022}
Additionally we examine the extent to which reconstruction is still possible when the original adjacency lists have had  a subset of the edges independently deleted at random.
\end{abstract}


\section{Introduction}
Graph embedding is the art of assigning a position in some smooth space to each vertex, so that the graph's structure corresponds in some way to the metric structure of that space. If vertices with edges between them are geometrically close, this embedding can help us predict new or unobserved links, devise efficient routing strategies, and cluster vertices by similarity---not to mention (if the embedding is in two dimensions) give us a picture of the graph that we can look at and perhaps interpret. In social networks, this space might correspond literally to geography, or it might be a ``latent space'' whose coordinates measure ideologies, affinities between individuals, or other continuous demographic variables (e.g.~\cite{Handcock02latentspace}). In some applications the underlying space is known; in others we wish to infer it, including the number of dimensions, whether it is flat or hyperbolic, and so on.

The literature on graph embedding is vast, and we apologize to the many authors who we will fail to cite. However, despite the broad utility of graph embedding in practice (see~\cite{zhang2021} for a recent experimental review) many popular heuristics lack rigorous guarantees. Here we pursue algorithms that reconstruct the position of every vertex with high accuracy, up to a symmetry of the underlying space.

Many versions of the reconstruction problem, including recognizing whether a graph has a realization as a geometric graph, are NP-complete~\cite{Breu98,Aspnes04,Bruck05} in the worst case. Thus we turn to distributions of random instances, and design algorithms that succeed with high probability in the instance. For many inference problems, there is a natural generative model where a ground truth structure is ``planted,'' and the instance is then chosen from a simple distribution conditioned on its planted structure. For community detection a.k.a.\ the planted partition problem, for instance, we can consider graphs produced by the stochastic block model, a generative model where each vertex has a ground-truth label, and each edge $(u,v)$ exists with a probability that depends on the labels of $u$ and $v$. Reconstructing these labels from the adjacency matrix then becomes a well-defined problem in statistical inference, which may or may not be solvable depending on the parameters of the model~(e.g.~\cite{Mossel12,moore-review,abbe-survey}). In the same spirit, a series of papers has asked to what extent we can reconstruct vertex positions from the adjacency matrix in random geometric graphs, where vertex positions are chosen independently from a simple distribution.
\subsection{Random geometric graph models.} 
Let $n$ be an integer and let $r > 0$ be real. Let $V=\{v_i\}_{i=1}^n$ be a set of points chosen uniformly at random in the square  $\left[0,\sqrt{n}\right]^2$. Then the \emph{random geometric graph} $G \in \RG$ has vertex set $V$ and edge set $E=\{ (u,v) : \|u-v\| < r \}$ where $\|u-v\|$ denotes the Euclidean distance. (We will often abuse notation by identifying a vertex with its position.)

This is simply a rescaling of the \emph{unit disk model} where $(u,v) \in E$ if the unit disks centered at $u$ and $v$ intersect, \cite{Clark90}. However, we follow previous authors in changing the density of the graph by varying $r$ rather than varying the density of points in the plane. Since the square has area $n$, the density is always $1$: that is, the expected number of points in any measurable subset is equal to its area.

It is also natural to consider a Poisson model, where the points are generated by a Poisson point process with intensity $1$. In that case the number of vertices fluctuates but is concentrated around $n$, and the local properties of the two models are asymptotically the same. We will occasionally refer to the Poisson model below. 

Note that the number of points in a region of area $A$ is binomially distributed in the uniform model, and Poisson distributed with mean $A$ in the Poisson model. In both cases, the probability that such a region of area is empty is at most $\e^{-A}$; this is exact in the Poisson model, and is an upper bound on the probability $(1-A/n)^n$ in the uniform model.

Random geometric graphs (RGGs) were first introduced by Gilbert in the early 1960s to model communications between radio stations~\cite{Gilbert}. Since then, RGGs have been widely used as models for wireless communication, in particular for wireless sensor networks. Moreover, RGG have also been extensively studied as mathematical objects, and much is known about their asymptotic properties, see for example~\cite{MathewBook, Walters}. 
One  well-known result is that $r_c=\sqrt{\log n/\pi }$ is a \emph{sharp threshold} for connectivity for $G \in \RG$ in the square in both the uniform and Poisson models: that is, for any $\eps > 0$, with high probability (i.e., with probability tending to $1$ as $n \to \infty$) $G$ is connected if $r > (1+\eps)r_c$ and disconnected if $r < (1-\eps) r_c$.

More generally, we can define RGGs on any compact Riemannian submanifold, by scattering $n$ points uniformly according to the surface area or volume. We then define the edges as $E=\{ (u,v) : \Vert u-v\Vert_g \le r \}$ where $\Vert \cdot \Vert_g$ is the geodesic distance, i.e.\ the arc length of the shortest geodesic between $u$ and $v$. On the sphere in particular this includes the cosine distance, since $\Vert u-v\Vert_g$ is a monotonic function of the angle between $u$ and $v$.
\subsection{The reconstruction problem.} 
Given the adjacency matrix $A$ of a random geometric graph defined on a smooth submanifold $M$, we want to find an embedding $\phi:V \to M$ which is as close as possible to the true positions of the vertices. We focus on the max distance $\max_v \|\phi(v)-v\|$ where we identify each vertex $v$ with its true position. (In the sequel we sometimes say ``distortion'' or ``error'' for subsets of vertices or single vertices.) 

However, if we are only given $A$, the most we can ask is for $\phi$ to be accurate up to $M$'s symmetries. In the square, for instance, applying a rotation or reflection to the true positions results in exactly the same adjacency matrix. Thus we define the \emph{distortion}  $d^*(\phi)$ as the minimum of the maximum error achieved by composing $\phi$ with some element of the symmetry group $\sym(M)$, 
\begin{equation}
\label{eq:distortion}
d^*(\phi) = \min_{\sigma \in \sym(M)} \max_{v \in V} \| (\sigma \circ \phi)(v)-v \| \, .
\end{equation}
As in previous work, our strategy is to estimate the distances between pairs of vertices, and then use geometry to find points with those pairwise distances. We focus on the case where $M = \left[0,\sqrt{n}\right]^2$ and $\Vert \cdot \Vert$ is the Euclidean distance. However, many of our results apply more generally, both in higher dimensions and on curved manifolds.

For those who enjoy group theory, if $M$ is the square then $\sym(M)$ is the dihedral group $D_8$. For the sphere, $\sym(M)$ is the continuous group $O(3)$ of all rotations and reflections, i.e., all $3 \times 3$ orthogonal matrices. 
\subsection{Our contribution and previous work.} 
An intuitive way to estimate the Euclidean distance $\|u-v\|$ in a random geometric graph is to relate it to the graph distance $d_G(u,v)$, i.e., the number of edges in a topologically shortest path from $u$ to $v$. The upper bound $\|u-v\| \le r d_G(u,v)$ is obvious. Moreover, if the graph is dense enough, then shortest paths are fairly straight geometrically and most of their edges have Euclidean length almost $r$, and this upper bound is not too far from the truth, e.g.~\cite{Bradonjic10,Ellis07,Muthu05,AriasCastro}.

As far as we know, the best upper and lower bounds relating Euclidean distances to graph distances in RGGs are given in~\cite{diaz2016relation}. In~\cite{diaz2019learning} these bounds were used to reconstruct with distortion $(1+o(1))r$ when $r$ is sufficiently large, namely if $r=n^\alpha$ for some $\alpha > 3/14$. 

However, since the graph distance $d_G$ is an integer, so the bound $\|u-v\| \le r d_G(u,v)$ cannot distinguish Euclidean distances that are between two multiples of $r$. Thus, as discussed after the statement of Theorem~\ref{thm:greedy-routing} below, the methods of~\cite{diaz2019learning} cannot avoid a distortion that grows as $\Omega(r)$. Intuitively, the opposite should hold: as $r$ grows the graph gets denser, neighborhoods get smoother, and more precise reconstructions should be possible.

We break this $\Omega(r)$ barrier by using a hybrid distance estimate. First we note that $r d_G(u,v)$ is a rather good estimate of $\|u-v\|$ if $\|u-v\|$ is just below a multiple of $r$, and we improve the bounds of~\cite{diaz2016relation} using a greedy routing analysis. Then, we combine $r d_G$ with a more precise short-range estimate based on the number of neighbors that $u$ and $v$ have in common. In essence, we use a quantitative version of the popular heuristic that two vertices are close if they have a large Jaccard coefficient (see e.g.~\cite{sarkar2011theoretical} for link prediction, and~\cite{AbrahamCKS15} for a related approach to small-world graphs). 
\begin{figure}
\centering
  \includegraphics[width=2.7in]{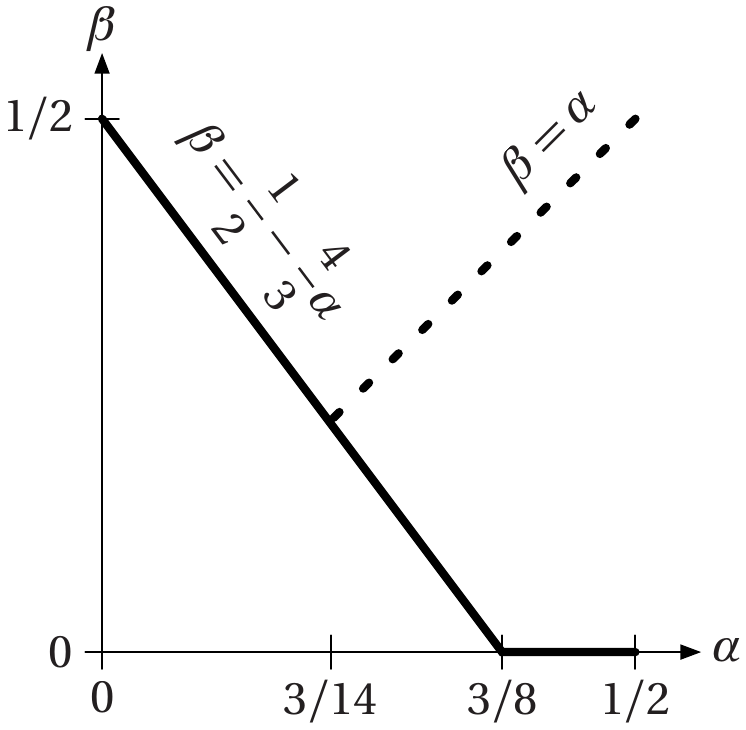}
  \caption{Our results (solid) compared to those of~\cite{diaz2019learning} (dotted). If $r=n^\alpha$, 
  our reconstruction has distortion $O(n^\beta)$ where $\beta = 1/2 - (4/3)\alpha$, except for
 $\alpha > 3/8$ where the distortion is $O(\sqrt{\log n})$. The algorithm of~\cite{diaz2019learning} applies when $\alpha > 3/14$ and gives $\beta=\alpha$, i.e. distortion $O(r)$. Our results apply for any constant $0 < \alpha < 1/2$ and give lower distortion when $\alpha > 3/14$.}
 \label{fig:comparison}
\end{figure}

As a result, we obtain distortion $d^*$ that decreases with $r$: namely, if $r=n^\alpha$ for $\alpha > 0$, then $d^*=O(n^\beta)$ where $\beta = \frac{1}{2} - \frac{4}{3}\alpha$, until for $\alpha \ge 3/8$ where $d^*=O(\sqrt{\log n})$. (Note that any $\alpha > 0$ puts us well above the connectivity threshold.) Since it uses graph distances, the running time of our algorithm is essentially the same as that of All-Pairs Shortest Paths. To our knowledge, this is the smallest distortion achieved by any known polynomial-time algorithm. We compare our results with those of~\cite{diaz2019learning} in Figure~\ref{fig:comparison}. 

We show that our results extend to higher dimensions and some curved manifolds as well. With small modifications, our algorithm works in the $m$-dimensional hypercube for any fixed $m$ (the distortion depends on $m$, but the running time does not). We also sketch a proof that it works on the surface of the sphere, using spherical geometry rather than Euclidean geometry, solving an open problem posed in~\cite{diaz2019learning}. Our techniques are designed to be easy to apply on a variety of curved manifolds and submanifolds, although we leave the fullest generalizations to future work.

We also show that our results can be extended to ``soft'' random geometric graphs, \emph{i.e.,} random geometric graphs in which a fraction of the edges are independently deleted from the adjacency list representation before it is shown to us.  Perhaps surprisingly, our techniques work very well in this setting, with only minor modifications.

We use $N(u) = \{ w : (u,w) \in E \}$ to denote the topological neighborhood of a vertex $u$, and $B(u,r)$ to denote the geometrical ball around it. Our results, as well as many of the cited results, hold \emph{with high probability} (\whp) in the random instance $G \in \RG$, i.e.\ with probability tending to $1$ as $n\to \infty$. When we consider randomized algorithms, the probability is over both $\RG$ and the randomness of the algorithm.
\subsection{Other related work.} 
In the statistics community there are a number of consistency results for maximum-likelihood methods (e.g.~\cite{shalizi2019consistency}) but it is not clear how the accuracy of these methods scales with the size or density of the graph, or how to find the maximum-likelihood estimator efficiently. There are also results on the convergence of spectral methods, using relationships between the graph Laplacian and the Laplace-Beltrami operator on the underlying manifold (e.g.~\cite{araya2019}). This approach yields bounded distortion for random dot-product graphs in certain regimes.

We assume that parameters of the model are known, including the underlying space and its metric structure (in particular, its curvature and the number of dimensions). Thus we avoid questions of model selection or hypothesis testing, for which some lovely techniques have been proposed (e.g.~\cite{Parthasa17,lubold2021identifying,Bubeck16}). We also assume that the parameter $r$ is known, since this is easy to estimate from the typical degree.

\subsection{Organization of the paper.}

In Section~\ref{sec:prelim} we present some concentration results and define the concept of a \emph{deep} vertex.  Intuitively, a vertex is deep if it is more than $r$ from the boundary of the square, so that its ball of potential neighbors is entirely in the interior. However, since we are only given the adjacency matrix of the graph, we base our definition on the number of vertices two steps away from $v$, and show that these topological and geometric properties are closely related.

Section~\ref{sec:dist} shows that we can closely approximate Euclidean distances $\|u-v\|$ given the adjacency matrix whenever $v$ is deep. We do this in two steps: we give a precise short-range estimate of $\|u-v\|$ when $u,v$ have graph distance $2$, and also a long-range estimate that uses the existence of a greedy path. By ``hybridizing'' these two distance estimates, switching from long to short range at a carefully chosen intermediate point, we obtain a significantly better estimate of $\|u-v\|$ than was given in~\cite{diaz2016relation}. We believe these distance estimation techniques may be of interest in themselves. 

In Section~\ref{sec:reconstruction}, we use this new estimate of Euclidean distances to reconstruct the vertex positions up to a symmetry of the square, by starting with a few deep ``landmarks'' and then triangulating to the other vertices.  This gives smaller distortion than the algorithm in~\cite{diaz2019learning}, achieving the scaling shown in Figure~\ref{fig:comparison}.

In Section~\ref{sec:extensions}, we extend our method to random geometric graphs in the $m$-dimensional hypercube and the $m$-dimensional sphere for fixed $m$.

In Section~\ref{sec:missing}, we discuss extensions to ``soft'' random geometric graphs, in which a fraction of the edges have been randomly deleted from the graph before its adjacency matrix is shown to us.

Finally, in section~\ref{sec:conclusion} we give some conclusions and pose a set of open research problems.

\section{Preliminaries}\label{sec:prelim}
\subsection{Concentration of Measure}

We will use the following version of Chernoff's bound, which applies to both Poisson and binomial random variables, and in which the statement of the result has been ``inverted'' to describe a confidence interval for the outcome of the experiment, for a given confidence parameter $\delta.$

\begin{lemma} \label{lem:chernoff-confidence-interval}
Let $X$ be either a Poisson or binomial random variable with mean $\mu$. Then, for every $\delta > 0$, 
\[
\Prob{|X - \mu| > \max \left\{ 3 \log \frac{1}{\delta} ,  \sqrt{3 \mu \log \frac{1}{\delta} } \right\} } \le \delta.
\]
\end{lemma}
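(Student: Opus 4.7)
The plan is to derive this confidence-interval form from standard Chernoff/Bernstein inequalities for binomial and Poisson variables, and then verify the stated constants by a short case analysis. The starting point is the moment generating function bound $\Exp[\e^{\lambda X}] \le \exp(\mu(\e^\lambda - 1))$, which is an equality for Poisson and follows for binomial by applying $1-p+p\e^\lambda \le \e^{p(\e^\lambda - 1)}$ to each Bernoulli factor. Combined with Markov's inequality and optimization over $\lambda > 0$, this yields the classical one-sided bound $\Prob{X \ge \mu + t} \le \exp(-\mu\, h(t/\mu))$ with $h(x) = (1+x)\log(1+x) - x$, and an analogous (tighter) bound for the lower tail.

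Next I would invoke the elementary estimate $h(x) \ge x^2/(2+2x/3)$, valid for all $x \ge 0$, to get the Bernstein-style form
\[
\Prob{X \ge \mu + t} \le \exp\!\left( -\frac{t^2}{2\mu + 2t/3} \right),
\qquad
\Prob{X \le \mu - t} \le \exp\!\left( -\frac{t^2}{2\mu} \right).
\]
Now set $t = \max\{3\log(1/\delta),\ \sqrt{3\mu\log(1/\delta)}\}$ and split according to which argument of the maximum dominates. In the regime $\mu \ge 3\log(1/\delta)$ the square-root term wins, so $t \le \mu$ and $2\mu + 2t/3 \le 3\mu$; the exponent in the upper-tail bound is then at most $-t^2/(3\mu) = -\log(1/\delta)$. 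In the regime $\mu < 3\log(1/\delta)$ the logarithmic term wins, so $t \ge \mu$ and $2\mu + 2t/3 \le (8/3)t$; the exponent becomes at most $-3t/8 \le -\log(1/\delta)$. The lower tail is handled by the same case split and is in fact strictly better, since its denominator is only $2\mu$ and in the second regime $\mu - t < 0$ so the event is vacuous.

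Combining the two tails by a union bound gives a bound of the form $C\,\delta$ for a small constant $C$. The main obstacle is therefore not conceptual but bookkeeping: matching the exact constant $1$ on the right-hand side of the lemma, rather than something like $2\delta$. To squeeze out the union-bound factor one can use the tighter multiplicative Chernoff form
\[
\Prob{X \ge \mu + t} \le \left( \frac{\e\mu}{\mu+t} \right)^{\mu+t} \e^t
\]
in the large-deviation regime and observe that the lower-tail bound beats the upper tail by a factor of $\exp(\Omega(\log(1/\delta)))$, which absorbs the $2$. Once the right one-sided bound is chosen the remainder of the argument is routine arithmetic.
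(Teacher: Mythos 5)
The paper states Lemma~\ref{lem:chernoff-confidence-interval} without proof (it only remarks that no attempt was made to optimize constants), so there is no paper proof to compare against; what follows is an assessment of the correctness of your proposed argument on its own terms.

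Your overall approach --- MGF bound $\Exp[\e^{\lambda X}] \le \exp(\mu(\e^\lambda-1))$, the Chernoff/Bennett exponent $\mu\, h(t/\mu)$, the standard inequality $h(x) \ge x^2/(2+2x/3)$, and then inverting the resulting Bernstein form --- is the natural and correct route, and your case split on which argument of the maximum dominates is exactly the right bookkeeping. The case $\mu < 3\log(1/\delta)$ is fully handled: the upper tail is at most $\delta^{9/8}\le\delta$ and the lower-tail event is vacuous since $\mu - t < 0 \le X$.

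The genuine gap is in the other regime, $\mu \ge 3\log(1/\delta)$, and you have honestly flagged it but not closed it. There your Bernstein bounds give upper tail $\le \delta^{9/8}$ and lower tail $\le \delta^{3/2}$, so the union bound yields $\delta^{9/8}+\delta^{3/2} = \delta\bigl(\delta^{1/8}+\delta^{1/2}\bigr)$, which exceeds $\delta$ unless $\delta \lesssim 0.075$. The trivial bound $\Prob{\cdot}\le 1$ does not rescue the range $\delta \in (0.075,\,1)$. Your proposed fix --- switch to the sharper multiplicative Chernoff form and argue that the lower tail ``absorbs the $2$'' --- is not carried through, and as stated it does not obviously work: even with the sharp exponent $\mu\,h(t/\mu)$, in the CLT regime with $t\ll\mu$ both tails degenerate to $\approx\delta^{3/2}$, so the sum is $\approx 2\delta^{3/2}$, which needs $\delta\le 1/4$. (There is also a small typo in the multiplicative Chernoff formula: the extra factor should be $\e^{-\mu}$, not $\e^{t}$.) The lemma does appear to be true for all $\delta>0$ --- for $\delta$ near $1$, with $L=\log(1/\delta)$, the Gaussian heuristic gives $2\Phi(-\sqrt{3L})\le \e^{-L}$ with equality at $L=0$ and strict inequality otherwise --- but the Bernstein route alone does not deliver it, and a complete proof for the large-$\delta$ range would need either a tighter tail comparison or an explicit argument that the left-hand side is already $\le 1 \le \delta$ only when $\delta \ge 1$. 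For the confidence levels the paper actually uses ($\delta = n^{-C}$, hence $\delta$ tiny), your argument is complete; it is only the claim ``for every $\delta>0$'' that is not established as written.
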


We will apply this lemma to derive a confidence interval for the number of vertices within a given region of the square $\SRn = [0, \sqrt{n}] \times [0, \sqrt{n}]$.
 
\begin{corollary} \label{cor:chernoff-for-area}
Let $R \subseteq \SRn$ be a region of area $A$.  Then, for any $C \ge 1$, with probability
at least $1 - 1/n^C$, the number $X$ of vertices in $R$ satisfies
\[
|X - A| \le \max \left\{ 4 C \log n, 2 \sqrt{C A \log n} \right\}.
\]
Furthermore, for sufficiently large $n$, this bound still holds even if we have conditioned on the positions of a constant number of vertices, and $R$ is allowed to depend on these.
\end{corollary}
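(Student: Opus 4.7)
The plan is to derive this as a direct corollary of Lemma~\ref{lem:chernoff-confidence-interval}. First I would identify the random variable: in the uniform model, each of the $n$ points lands in $R$ independently with probability $A/n$, so the count $X$ is $\mathrm{Bin}(n, A/n)$ with mean $\mu = A$; in the Poisson model, $X$ is Poisson with mean $A$. Either way, Lemma~\ref{lem:chernoff-confidence-interval} applies. I would then set the confidence parameter to $\delta = 1/n^C$, so that $\log(1/\delta) = C \log n$, giving
\[
\Prob{|X - A| > \max\bigl\{3C \log n,\ \sqrt{3 C A \log n}\bigr\}} \le \frac{1}{n^C}.
\]
Since $3 \le 4$ and $\sqrt{3} \le 2$, the bound stated in the corollary is a (slightly) weaker upper bound on the same quantity, so the first claim follows.

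For the conditional statement, suppose we condition on the positions $v_1, \dots, v_k$ of a constant number $k$ of vertices, and $R$ is any region (possibly depending on these positions) of area $A$. The remaining $n-k$ vertices are still i.i.d.\ uniform on $\SRn$, independently of the choice of $R$. Write $X = X' + Y$, where $X'$ is the number of unconditioned vertices in $R$ and $Y \in \{0, \dots, k\}$ is the (deterministic, given the conditioning) number of conditioned vertices in $R$. Then $X'$ is binomial with mean $\mu' = (n-k)A/n$, so $|\mu' - A| \le kA/n \le k$ and $Y \le k$, giving $|X - A| \le |X' - \mu'| + 2k$.

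Now I would apply Lemma~\ref{lem:chernoff-confidence-interval} to $X'$ with $\delta = 1/n^C$, obtaining $|X' - \mu'| \le \max\{3C\log n, \sqrt{3 C \mu' \log n}\} \le \max\{3C\log n, \sqrt{3 C A \log n}\}$ with probability at least $1-n^{-C}$. It then suffices to absorb the additive $2k$ into the slack between $3C\log n$ and $4C\log n$, and between $\sqrt{3CA\log n}$ and $2\sqrt{CA\log n}$. In the first regime we need $C\log n \ge 2k$, and in the second we need $(2-\sqrt{3})\sqrt{CA\log n} \ge 2k$; one of these regimes controls the bound, and since $k$ is a fixed constant while $C \ge 1$ and $\log n \to \infty$, both inequalities hold for all sufficiently large $n$ (using, in the second case, that the sqrt-regime only dominates when $A \ge \Omega(\log n)$).

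No real obstacle arises here: the content is a bookkeeping exercise around the Chernoff bound. The only subtlety worth flagging is that, because $R$ may depend on $v_1, \dots, v_k$, one must condition first and then invoke the unconditional i.i.d.\ structure of the remaining $n-k$ vertices, rather than trying to apply the bound directly to $X$.
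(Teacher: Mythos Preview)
Your proposal is correct and is exactly the intended derivation: the paper states this as a corollary of Lemma~\ref{lem:chernoff-confidence-interval} without giving an explicit proof, and your argument—plugging in $\delta=n^{-C}$ for the unconditioned case, then handling the conditioned case by applying the lemma to the $n-k$ remaining i.i.d.\ points and absorbing the $O(k)$ discrepancy into the slack between the constants $3,\sqrt{3}$ and $4,2$—is precisely the routine verification the paper leaves to the reader.
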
 
\subsection{Deep Vertices}
In what follows we assume that we are working with random geometric graphs in the square $\SRn$.
Because some of our arguments will break down for vertices near the boundary of this square, it will be useful to have an easy way to tell these vertices apart from the rest.  To this end, we introduce the following notion of ``deep" vertex.

\begin{definition}\label{def:deep-vert}
Let $r$ be fixed.  We say that a vertex $v \in V$ is \emph{deep} if at least $11 r^2$ vertices have graphical distance $2$ or less from $v$.
\end{definition}

Note that being deep is a property of the graph, rather than its embedding in the plane.  This will be important for our algorithms, which are only given access to the adjacency matrix.  On the other hand, the following observation shows that being deep almost surely implies being fairly far from the boundary,

\begin{lemma} \label{lem:deep-are-safe}
There exist constants $C_1, C_2 > 0$ such that, for every $n \ge 1, r \ge C_1 \sqrt{\log n}$,
with probability at least $1 - C_2/n^2$,  every deep vertex of $G$ is located in the square $[r, \sqrt{n} - r]^2$, that is, the entire ball of radius $r$ centered at the vertex is contained in $\SRn$.
\end{lemma}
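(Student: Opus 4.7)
The plan is to prove the contrapositive: with probability at least $1 - O(1/n^2)$, every vertex $v$ lying in the boundary strip $\SRn \setminus [r, \sqrt{n}-r]^2$ fails to be deep, i.e., has strictly fewer than $11 r^2$ vertices at graph distance $\le 2$. Fix such a $v$. Since every edge has Euclidean length $< r$, any $w$ with $d_G(v,w) \le 2$ must satisfy $\|w - v\| < 2r$, so the number of such $w$ is bounded above by
\[
X_v \;:=\; |\,V \cap B(v, 2r) \cap \SRn\,|,
\]
and it suffices to show $X_v < 11 r^2$ with high probability.

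The key geometric step is to upper-bound the area of $R(v) := B(v, 2r) \cap \SRn$ uniformly over $v$ in the boundary strip. Such a $v$ lies within distance $d \le r$ of at least one side of the square, and the area of $B(v, 2r)$ on the far side of that line (i.e., the part cut off outside the square) equals, by the standard circular-segment formula, $(2r)^2 \arccos(d/(2r)) - d\sqrt{(2r)^2 - d^2}$. Differentiating shows this is a decreasing function of $d$ on $[0, r]$, so it is minimized at $d = r$ with value $(4\pi/3 - \sqrt{3})\, r^2$. Clipping by additional sides of the square only removes further area, so subtracting from $\pi (2r)^2 = 4\pi r^2$ yields the uniform bound
\[
\mathrm{Area}(R(v)) \;\le\; (8\pi/3 + \sqrt{3})\, r^2 \;<\; 10.2\, r^2 .
\]

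Now apply Corollary~\ref{cor:chernoff-for-area} to $R(v)$ with $C = 3$, conditioning on the position of $v$ (so that the remaining $n-1$ vertices are still i.i.d.\ uniform on $\SRn$). This gives $X_v \le 10.2\, r^2 + O(r\sqrt{\log n})$ except on an event of probability at most $1/n^3$. Choosing $C_1$ large enough that this $O(r\sqrt{\log n})$ deviation is smaller than $0.8\, r^2$ whenever $r \ge C_1 \sqrt{\log n}$ then forces $X_v < 11\, r^2$, so $v$ is not deep. A union bound over the $n$ vertices of $V$ gives total failure probability at most $n \cdot n^{-3} = n^{-2}$, yielding the claim with, e.g., $C_2 = 1$ (and any small-$n$ cases absorbed into the constants).

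The only real technical step is the area computation in the second paragraph. The constant $11$ in Definition~\ref{def:deep-vert} was chosen precisely so as to lie strictly between the worst-case near-boundary area $(8\pi/3 + \sqrt{3}) \approx 10.11$ and the full-disk area $4\pi \approx 12.57$ available to truly interior vertices; this leaves an $\Omega(r^2)$ gap that the Chernoff deviation $O(r\sqrt{\log n})$ cannot close once $r \gg \sqrt{\log n}$. Everything else is a routine combination of concentration and a union bound.
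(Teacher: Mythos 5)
Your proof is correct and follows essentially the same route as the paper: argue the contrapositive, bound the area of $B(v,2r)\cap\SRn$ via the circular-segment formula (showing it is at most $(8\pi/3+\sqrt{3})r^2\approx 10.1\,r^2$ for $v$ within $r$ of the boundary), then apply Corollary~\ref{cor:chernoff-for-area} and a union bound over the $n$ vertices. You merely spell out a couple of details (the explicit segment-area formula, its monotonicity in $d$, and the choice $C=3$) that the paper states more tersely.
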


\begin{proof}
Suppose $v$ is a vertex located within distance $r$ of the boundary of $\SRn$.  Let $B=B(v,2r)$ be the ball of radius $2r$ centered at $v$.  Then $B \setminus \SRn$ contains a circular segment of height at least $r$, and therefore central angle at least $2\pi/3$, and area at least $\left( \frac{4}{3} \pi - \sqrt{3} \right) r^2$.
It follows that $B \cap \SRn$ has area at most $\left( \frac{8}{3} \pi + \sqrt{3} \right) r^2 \approx 10.1 r^2$.

Since every vertex within distance 2 of $v$ is located within $B \cap \SRn$, in expectation the number of such vertices is at most $10.1 r^2$.  Since each vertex other than $v$ is independently in $B$ with probability
$\frac{\vol(B \cap \SRn)}{\vol(\SRn)}$, we can apply a Chernoff bound to conclude that, with high probability, fewer than $11 r^2$ vertices are in $B \cap \SRn$, and therefore $v$ is not deep.
\end{proof}

On the other hand, we will also need a large supply of deep vertices.  Fortunately, this follows from another observation, which we now state and prove.

\begin{figure} 
    \centering
    \includegraphics[width=2in]{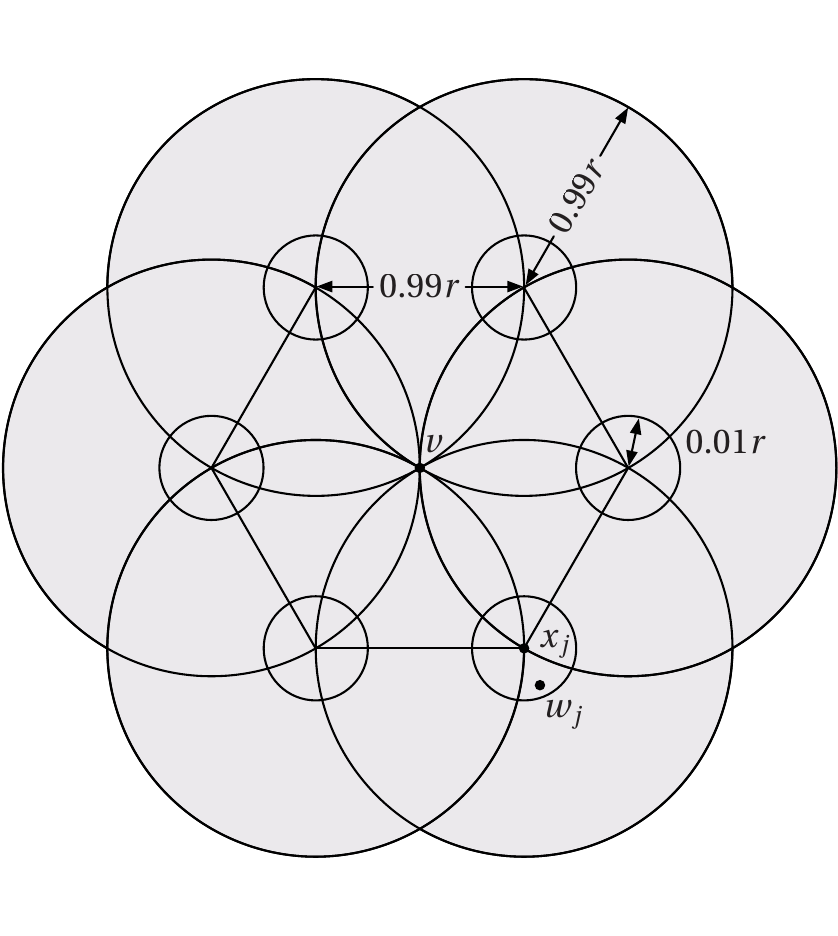}
    \caption{The ``flower'' of the proof of Lemma~\ref{lem:lots-of-deep}. The hexagon is centered at $v$, its side length is $0.99r$, and its corners are $x_1,\ldots,x_6$. With high probability, each of the balls of radius $0.01r$ (not shown to scale) centered on the $x_j$ contains at least one point $w_j$; these are neighbors of $v$ since $0.99r+0.01r=r$. Moreover, each $w_j$ is a neighbor of every point in the ball of radius $0.99r$ centered at $x_j$. Therefore, every point in the gray region has graph distance at most $2$ from $v$.} 
\label{fig:flower}
\end{figure}

\begin{lemma} \label{lem:lots-of-deep}
There exist constants $C_1, C_2 > 0$ such that, if $r \ge C_1 \sqrt{\log n}$, with probability at least $1 - C_2/n^2$ every vertex located in $[2r, \sqrt{n} - 2r]^2$ is deep. 
\end{lemma}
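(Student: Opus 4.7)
The plan is to exhibit, for each fixed point $v \in [2r, \sqrt{n}-2r]^2$, a geometric region around $v$ of area just exceeding $11 r^2$, all of whose vertices are guaranteed to be within graph distance $2$ of $v$; Chernoff concentration and a union bound over $v$ then finish the argument.

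Following the caption of Figure~\ref{fig:flower}, I would let $x_1, \ldots, x_6$ be the corners of a regular hexagon of circumradius $\rho := 0.99 r$ centered at $v$, and set $B_j := B(x_j, 0.01 r)$ and $D_j := B(x_j, \rho)$. Since $v$ is at distance at least $2r > 2\rho$ from the boundary of $\SRn$, the entire flower $F := \bigcup_j D_j \subseteq B(v, 2\rho)$ lies inside $\SRn$. Each anchor ball $B_j$ has area $\pi \cdot 10^{-4} r^2$, so the probability that $B_j$ is empty of other vertices is at most $\exp(-\pi \cdot 10^{-4} r^2) \le n^{-4}$ once $C_1$ is taken large enough that $r \ge C_1 \sqrt{\log n}$ forces $\pi \cdot 10^{-4} r^2 \ge 4 \log n$. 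A union bound then shows that with probability at least $1 - 6 n^{-4}$, each $B_j$ contains a witness $w_j$; by the triangle inequality, $w_j$ is a neighbor of $v$ (distance $\le \rho + 0.01 r = r$) and of every point in $D_j$ (distance $\le 0.01 r + \rho = r$), so every vertex in $F$ is within graph distance $2$ of $v$.

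It remains to lower bound the number of vertices in $F$. A polar-coordinate computation about $v$, using sixfold symmetry and noting that the outer boundary of $F$ in the wedge $|\theta| \le \pi/6$ around the direction of $x_1$ is the curve $t(\theta) = 2\rho \cos\theta$, gives
\[
\mathrm{area}(F) = 6 \int_{-\pi/6}^{\pi/6} \frac{(2\rho \cos\theta)^2}{2} \, d\theta = (2\pi + 3\sqrt{3})\, \rho^2 > 11.24\, r^2.
\]
Applying Corollary~\ref{cor:chernoff-for-area} with $C = 3$, conditioned on the position of $v$, the number of vertices in $F$ deviates from $\mathrm{area}(F)$ by at most $O(r\sqrt{\log n})$, which is $o(r^2)$; in particular it exceeds $11 r^2$ except with probability at most $n^{-3}$, so $v$ is deep. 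A union bound over the at most $n$ vertices in $[2r, \sqrt{n}-2r]^2$ yields total failure probability $O(n^{-2})$.

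The main obstacle is the tightness of the area bound: the coefficient $2\pi + 3\sqrt{3} \approx 11.48$ only barely exceeds $11$, which forces the constants in the flower construction to be carefully tuned (the choice of $0.99 r$ rather than a smaller fraction) and requires the Chernoff deviations to be strictly sublinear in $r^2$ — which is exactly what the hypothesis $r \ge C_1 \sqrt{\log n}$ with $C_1$ sufficiently large guarantees.
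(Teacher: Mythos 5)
Your proof follows the same flower construction as the paper: hexagon of circumradius $0.99r$, anchor balls of radius $0.01r$ at each corner, a union bound to guarantee witness neighbors $w_j$, and Chernoff concentration on the flower $\bigcup_j B(x_j, 0.99r)$ of area $(0.99)^2(2\pi+3\sqrt{3})r^2 \approx 11.25 r^2$. The only addition is that you spell out the polar-coordinate integral for the flower area, which the paper leaves as an ``easy exercise in high-school geometry''; otherwise the argument is identical.
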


\begin{proof}
Let $v$ be located in $[2r, \sqrt{n}-2r]^2$.  Let $x_1, \dots, x_6$ be the corners of a regular hexagon whose center is at $v$, and whose side length is $0.99 r$.  

We claim that, with high probability, there exist vertices $w_1, \dots, w_6 \in V$ such that $\|w_j - x_j\| \le 0.01 r$ for each $j$. To see this, note that the ball of radius $0.01 r$ centered at $x_j$ has area $A=0.0001 \pi r^2$. Since $r \ge  C_1 \sqrt{\log n}$, the probability that a given one of these balls is empty is at most $\e^{-A} \le 1/6n^3$ if $C_1$ is sufficiently large. We then apply a union bound over these six events. 

By the triangle inequality, $\|w_j - v\| \le 0.99r+0.01r = r$, so these $w_j$ are neighbors of $v$. To see that the $w_j$ in turn have many neighbors, which have graph distance $2$ from $v$, we apply Corollary~\ref{cor:chernoff-for-area} to the ``flower'' formed by the union of the six balls of radius $0.99 r$ centered at each of the $x_j$'s, shown in gray in Figure~\ref{fig:flower}. Any point in one of these balls is a neighbor of the corresponding $w_j$, again by the triangle inequality. Thus this region contains only vertices at distance $2$ or less from $v$. 

An easy exercise in geometry shows that the area of this flower equals
$(0.99)^2 (2 \pi + 3 \sqrt{3}) r^2 \approx 11.25 r^2$. Corollary~\ref{cor:chernoff-for-area} then implies that, with high probability, $\left| \bigcup_{j=1}^6 N(w_j) \right| \ge 11 r^2$ and hence $v$ is deep. Finally, we take the union bound over all vertices in $[2r,\sqrt{n}-2r]^2$, of which there are at most $n$.
\end{proof}

\section{Distance Estimation: Breaking the $\Omega(r)$ barrier} 
\label{sec:dist}
\subsection{Estimating short-range distances: Lunes and lenses}
In this section we show how to estimate the Euclidean distance between two vertices $v,u$ such that $d_G(v,u) = x \le 2r$. In Figure~\ref{fig:distance2}, we have two  balls  centered at $v$ and $u$ and with radius  $r$, $B(v,r)$ and $B(u,r)$. 
In the case $x<r$  we will be dealing 
with the {\em lune} defined by each half of the symmetric difference between $B(u,r)$ and $B(v,r)$ (that is the case reflected in the figure). In the case that $r\le x<2r$ , we will be dealing with the {\em lens} defined by the intersection of the balls.
\begin{figure}
    \centering
    \includegraphics[width=2.1in]{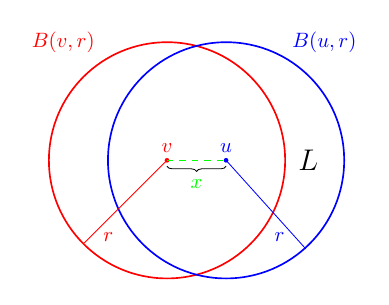}
    \caption{We can estimate the Euclidean distance $\Vert u-v \Vert = x$ of two vertices with $d_G(u,v) \le 2$ using the area of the lune $L=B(u,r)\setminus B(v,r) \neq \emptyset$.
     Denoting this area $F(x)$, we can estimate $x$ by applying the inverse $F^{-1}$ to the number of points in $N(v) \setminus N(u)$. For the case $x>r$ we should talk about a lense
     $B(u,r)\cap B(v,r)$.}
    \label{fig:distance2}
\end{figure}

Let $0 \le x \le 2r$. We define the function $F(x) = F(x,r)$ to be one half of the area of the symmetric difference of two balls of radius $r$ whose centers are at distance $x$, that is, the area of the lune $B((x,0),r) \setminus B((0,0),r)$, here for concreteness $u$ and $v$ are located at $(0,0)$ and $(x,0)$) respectively.
Equivalently, $F$ is given by the following formula:
\begin{equation}
\label{eq:F}
F(x) = \pi r^2 - 2 r^2 \arccos \frac{x}{2r} 
+ \frac{x}{2} \sqrt{4r^2 - x^2}
\end{equation}
or the following differential equation:
\begin{equation}
\label{eq:F-diffeq}
F'(x) = \sqrt{4r^2 - x^2},
\end{equation}
with boundary condition $F(0) = 0$.  This also implies that, on the interval $(0,r)$, we have
$F'(x) \ge r \sqrt{3}$.

We will need the following easy fact about $F$.
\begin{lemma} \label{lem:concave-MVT}
Let $0 \le x_1 \le x_2 \le 2r$. Then 
\[
F(x_2) \ge F(x_1) + (x_2-x_1) \frac{F'(x_1) + F'(x_2)}{2}.
\]
\end{lemma}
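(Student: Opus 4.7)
The plan is to recognize this inequality as a standard consequence of concavity: the right-hand side is exactly the trapezoidal-rule approximation of $\int_{x_1}^{x_2} F'(t)\,dt$, and the trapezoidal rule underestimates the integral of a concave function. So the whole proof reduces to checking that $F'$ is concave on $[0,2r]$.

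First I would note that by the fundamental theorem of calculus applied to \eqref{eq:F-diffeq},
\[
F(x_2)-F(x_1) \;=\; \int_{x_1}^{x_2} F'(t)\,dt.
\]
Next, I would verify that $F'(t)=\sqrt{4r^2-t^2}$ is concave on $[0,2r]$. The cleanest way is geometric: the graph of $F'$ is the upper quarter of the circle of radius $2r$ centered at the origin, which is clearly a concave arc. If one prefers a calculation, differentiating twice gives
\[
(F')''(t) \;=\; \frac{-4r^2}{(4r^2-t^2)^{3/2}} \;<\; 0
\]
on $(0,2r)$, so $F'$ is strictly concave.

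Given concavity, the chord joining $(x_1,F'(x_1))$ and $(x_2,F'(x_2))$ lies below the graph of $F'$ on $[x_1,x_2]$, i.e.,
\[
F'(t) \;\ge\; \frac{x_2-t}{x_2-x_1}F'(x_1) + \frac{t-x_1}{x_2-x_1}F'(x_2)
\quad\text{for all } t \in [x_1,x_2].
\]
Integrating this pointwise inequality over $[x_1,x_2]$ yields
\[
F(x_2)-F(x_1) \;\ge\; (x_2-x_1)\,\frac{F'(x_1)+F'(x_2)}{2},
\]
which rearranges to the stated bound. The degenerate cases $x_1=x_2$ or $x_2=2r$ (where $F'(x_2)=0$) are handled by the same argument with no change.

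There is no real obstacle here: the only substantive step is the concavity of $\sqrt{4r^2-t^2}$, and everything else is the trapezoidal-rule lower bound for concave integrands. I would keep the write-up to a few lines.
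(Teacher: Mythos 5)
Your proof is correct, and it rests on the same two observations as the paper's: (i) $F''' \le 0$, so $F'$ is concave; and (ii) the fundamental theorem of calculus expresses $(F(x_2)-F(x_1))/(x_2-x_1)$ as an average of $F'$ over $[x_1,x_2]$, which concavity bounds from below by the average of the endpoint values. The only difference is in how step (ii) is packaged: you integrate the chord-below-graph inequality directly (the classical trapezoidal-rule bound for concave integrands), whereas the paper sets up a two-stage random sampling scheme --- $X$ uniform on $[x_1,x_2]$, then $Y$ uniform on the endpoints with $\ExpCond{Y}{X}=X$ --- and invokes Jensen's inequality. These are two presentations of the same fact; your version is a bit more elementary and self-contained, while the paper's probabilistic phrasing is slicker but requires the reader to check that such a coupling exists. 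Either is perfectly acceptable.
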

\begin{proof}
Since $F'(x) = \sqrt{4r^2 - x^2}$, differentiating twice more, we find that 
\[
F'''(x) = \frac{-4r^2}{(4r^2-x^2)^{3/2}} \le 0 \, ,
\]
so $F'$ is concave.  

Sample $X$ uniformly from the interval $[x_1,x_2]$, then sample $Y$ uniformly from the two endpoints, $\{x_1,x_2\}$, 
such that its conditional expectation is $X$.  Note that this implies that $Y$ is uniform over $\{x_1,x_2\}$.
Then we have:
\begin{align*}
\frac{F(x_2) - F(x_1)}{x_2-x_1} &= \Exp(F'(X)) & \mbox{by the Fundamental Theorem of Calculus} \\
&= \Exp(F'(\ExpCond{Y}{X})) & \mbox{by definition of Y} \\
&\ge \Exp(\ExpCond{F'(Y)}{X}) &\mbox{by Jensen's Inequality} \\
&= \Exp(F'(Y)) &\mbox{by the Law of Total Probability} \\
&= \frac{F'(x_1) + F'(x_2)}{2} & \mbox{since $Y$ is uniform over $\{x_1,x_2\}$.}
\end{align*}
Rearranging, this completes the proof.
\end{proof}

Now, for a random geometric graph, $G$, and two vertices, $v, w$, the neighbors of $v$ that are non-neighbors
of $w$ are exactly the vertices in the lune $B(v,r) \setminus B(w,r)$, which has area $F(\|v-w\|)$.  Assuming $v$ is deep, this lune is
entirely within the domain $\SRn$, so, intuitively, $|N(v) \setminus N(w)|$ should be close to $F(\|v - w\|)$.
With this in mind, we define
\begin{equation}
\label{eq:dtilde}
\dtilde(v,w) =  F^{-1}\left( \min\{ |N(v) \setminus N(w)|, F(r) \} \right) ,
\end{equation}
with the hope that $\dtilde(v,w)$ will be a good approximation to the true Euclidean distance, $\|v-w\|$, most of the time. 
Note that $\dtilde(v,w)$ is a function of the adjacency matrix and $r$.

The following Lemma makes the above intuition precise.
\begin{lemma} \label{lem:adjacent-distances}
Let $G$ be a random geometric graph with parameters $r,n$. With probability at least $1 - 1/n^2$ we have, for all $(v,w) \in E(G)$ such that $v$ is deep, 
\begin{equation}
\label{eq:dtilde-good}
\left| \| v - w \| - \dtilde(v,w) \right| \le  100 \max \left\{ \frac{\log n}{r}, \sqrt{\frac{\|v-w\| \log n}{r}} \right\}.
\end{equation}
\end{lemma}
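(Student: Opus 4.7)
The plan is to apply a Chernoff-type concentration bound to the number of vertices in the lune $L = B(v,r)\setminus B(w,r)$, whose area is exactly $F(\|v-w\|)$, and then to convert this concentration into an error bound on $\dtilde(v,w) = F^{-1}(\cdot)$ using the fact that $F$ grows at rate at least $r\sqrt 3$ on $[0,r]$, so that $F^{-1}$ is Lipschitz with constant $1/(r\sqrt 3)$ there.

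First I would fix an edge $\{v,w\}$ with $v$ deep and set $x := \|v-w\| < r$. On the high-probability event of Lemma~\ref{lem:deep-are-safe}, the ball $B(v,r)$, and hence $L$, is contained in $\SRn$, so $L$ has area exactly $F(x)$ with no clipping against the boundary. Because $\|v-w\|<r$ puts $v$ inside $B(w,r)$ and $w$ sits at the center of $B(w,r)$, neither $v$ nor $w$ belongs to $L$, and a short check gives $|N(v)\setminus N(w)| = 1 + |L\cap V|$, where the $+1$ is the vertex $w$ itself. Conditioning on the positions of $v$ and $w$ and applying Corollary~\ref{cor:chernoff-for-area} to the region $R = L$ with confidence parameter $C$, we obtain, with probability at least $1-1/n^C$,
\[
|Y - F(x)| \;\le\; \max\!\left\{5C\log n,\; 2\sqrt{C\, F(x)\log n}\right\},
\]
where $Y := |N(v)\setminus N(w)|$ and the harmless $+1$ has been absorbed into the additive term. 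Using the crude bound $F(x) = \int_0^x\!\sqrt{4r^2-t^2}\,dt \le 2rx$, the square-root term simplifies to $2\sqrt{2Crx\log n}$.

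Next I would invert $F$. Since $F(x)\le F(r)$, the truncation $F(\dtilde) = \min\{Y,F(r)\}$ can only bring $F(\dtilde)$ closer to $F(x)$ than $Y$ is, so $|F(\dtilde)-F(x)|\le |Y - F(x)|$ in both branches of the minimum. Both $x$ and $\dtilde$ lie in $[0,r]$, and on that interval $F'(t)=\sqrt{4r^2-t^2}$ is minimized at $t=r$ with value $r\sqrt 3$; integrating $F'$ between $\min(x,\dtilde)$ and $\max(x,\dtilde)$ therefore yields $|F(\dtilde)-F(x)| \ge r\sqrt{3}\,|\dtilde-x|$. Dividing the concentration inequality by $r\sqrt 3$ produces a bound on $|\dtilde-x|$ of exactly the form stated, with absolute constants $O(C)$ and $O(\sqrt C)$ that are well below $100$ for any modest $C$.

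The final step is a union bound over the at most $n^2$ edges, together with the failure event of Lemma~\ref{lem:deep-are-safe}; choosing $C$ large enough (say $C=4$) and slightly adjusting the universal constant gives total failure probability at most $1/n^2$. The main obstacle here is really bookkeeping rather than a single hard step: tracking the off-by-one from $w \in N(v)\setminus N(w)$, verifying that the $\min$-truncation at $F(r)$ only helps (rather than hurts) in the extreme-deviation regime, and, most importantly, using that $v$ is \emph{deep} to guarantee that the whole lune, not merely the ball $B(v,r)$, is contained in $\SRn$, so that its area really is $F(x)$ and the Chernoff bound is centered at the right value.
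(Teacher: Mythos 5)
Your proposal is correct and follows essentially the same route as the paper's proof: condition on the positions of $v,w$, use Lemma~\ref{lem:deep-are-safe} to guarantee the lune $B(v,r)\setminus B(w,r)$ lies inside $\SRn$, apply Corollary~\ref{cor:chernoff-for-area} to concentrate $|N(v)\setminus N(w)|$ around $F(\|v-w\|)$ (accounting for $w$ itself), invert $F$ via the lower bound $F'\ge r\sqrt3$ on $[0,r]$, simplify with $F(x)\le 2rx$, and union-bound over pairs. Your explicit remark that the $\min\{\cdot,F(r)\}$ truncation can only reduce $|F(\dtilde)-F(x)|$ is a small clarification the paper leaves implicit, but the argument is otherwise the same.
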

\begin{proof}
Fix a pair of vertices $v,w$, and condition on their positions.  
Let $d = \| v - w \|$.
Assume that $v \in [r, \sqrt{n}-r]^2$, so that
the ball of radius $r$ centered at $v$ is entirely within the domain, $\SRn$.
By Lemma~\ref{lem:deep-are-safe}, with high probability, this assumption will not
cause us to miss any deep vertices.  Also assume that $d \le r$, so that $v$ and $w$ are neighbors,
since otherwise there is nothing to prove.

Observe that $N(v) \setminus N(w)$ consists of $w$, together with exactly those vertices
located within the lune $L = B(v,r) \setminus B(w,r)$. 
Since $L$ has area $F(d)$ and is contained within the domain, $\SRn$, 
it follows by Corollary~\ref{cor:chernoff-for-area} that
with probability at least $1 - \frac{1}{n^4}$, the number of vertices, $X$, in $L$
satisfies
\[
|X - F(d)| \le \max \left\{ 16 \log n, 4 \sqrt{F(d) \log n} \right\},
\]
and hence
\[
\left| |N(v) \setminus N(w)| - F(d) \right| \le 1 + \max \left\{ 16 \log n, 4 \sqrt{F(d) \log n} \right\}.
\]
Now, since the derivative of $F$ satisfies
\[
F'(x) \ge r \sqrt{3}
\]
for $0 \le x \le r$, it follows by the Inverse Function Theorem that $0 \le (F^{-1})'(y) \le \frac{1}{r \sqrt{3}}$
for $0 \le y \le F(r)$, and hence that 
\[
|\dtilde(v,w) - d| \le \frac{1 + 16 \left( \log n + \sqrt{F(d) \log n} \right)}{r \sqrt{3}}
\, .
\]

Since $F(d) \le 2 d r$, this implies

\[
|\dtilde(v,w) - d| \le \frac{1 + \max \left\{ 16 \log n, 4 \sqrt{2dr \log n} \right\}}{r \sqrt{3}} 
\, ,
\]
which is \eqref{eq:dtilde-good}, only with slightly better constants. Taking a union bound over all possible choices of $v,w$ completes the proof.
\end{proof}

\begin{remark}
Notice equation \ref{eq:dtilde-good}  is only useful when $r$ is at least a constant factor greater than the connectivity threshold. 
In particular, for $r \le 100 \sqrt{\log n}$, the conclusion of Lemma~\ref{lem:adjacent-distances} becomes trivial,
because the right-hand side of \eqref{eq:dtilde-good} exceeds $r$.  
\end{remark}

So far we have only defined $\dtilde(v,w)$ for adjacent pairs of vertices where one is deep. We now extend this definition to pairs of vertices at graphical distance 2. Once again, we will require that at least one of them is deep. 

Let $v, w \in V$ be such that $v$ is deep, and $d_G(v,w)=2$. Let $d = \|v-w\|$. Note that $r < d \le 2r$. This time, rather than using the number of vertices in $B(v,r) \setminus B(w, r)$ to estimate the distance, we will instead use the intersection of the balls, which has area $\pi r^2 - F(d)$. 

$N(v) \cap N(w)$ is the set of vertices of $V \setminus \{v,w\}$ that are located within the lens-shaped intersection $B(v,r) \cap B(w,r)$. Since $v$ is deep,  $B(v,r) \cap B(w,r) \subset \SRn$.  Therefore, conditioned on the positions of $v$ and $w$,
 \[
\Exp\left[ \left|N(v) \cap N(w)\right| \right] = \frac{(n-2) \left(\pi r^2 - F(d)\right)}{n} \,.
\]
With this in mind, we will write down the distance estimate as
\begin{equation}
\label{eq:dtilde2}
\dtilde(v,w) =  F^{-1}\left( \max\left\{ F(r), \pi r^2 - |N(v) \cap N(w)| \right\}\right)
\end{equation}
and we will now prove that in this case also,  $\dtilde(v,w) \approx \|v-w\|$ with high probability.

\begin{lemma} \label{lem:2-apart-distances}
Let $G = (V,E)$ be a random geometric graph with parameters $r,n$.  Then, with probability at least $1 - 1/n^2$ we have, for all $v,w \in V$ such that $v$ is deep and $d_G(v,w) = 2$, 
\begin{equation} \label{eq:dtilde-good-near-2r}
\left| \| v - w \| - \dtilde(v,w) \right| \le  100 \max \left\{ \frac{(\log n)^{2/3}}{r^{1/3}}, 
\left( \frac{2r - \|v-w\|}{r} \right)^{1/4} \sqrt{\log n} \right\}.
\end{equation}
\end{lemma}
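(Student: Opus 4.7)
The plan is to follow the template of the proof of Lemma~\ref{lem:adjacent-distances}: condition on the positions of $v,w$, identify the geometric region whose vertex count coincides (up to $O(1)$) with $|N(v)\cap N(w)|$, apply Corollary~\ref{cor:chernoff-for-area} to control the fluctuation of that count, and then translate a bound on $|F(\dtilde)-F(d)|$ into one on $|\dtilde-d|$ by inverting $F$. Here the relevant region is the lens $L=B(v,r)\cap B(w,r)$ rather than a lune, and since $v$ is deep, Lemma~\ref{lem:deep-are-safe} guarantees (with high probability) that $B(v,r)\subset\SRn$, so $L\subset\SRn$ and its area is exactly $\pi r^2-F(d)$, where $d=\|v-w\|$. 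The assumption $d_G(v,w)=2$ forces $r<d\le 2r$.

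Applying Corollary~\ref{cor:chernoff-for-area} with $C=4$ and union-bounding over the at most $n^2$ candidate pairs, with probability at least $1-1/n^2$ we obtain $\bigl||N(v)\cap N(w)|-(\pi r^2-F(d))\bigr|\le E$ with
\[ E = O\!\left(\max\left\{\log n,\;\sqrt{(\pi r^2-F(d))\log n}\right\}\right). \]
Since $d\ge r$ forces $F(d)\ge F(r)$, a short case analysis shows that the clipping by $F(r)$ in~\eqref{eq:dtilde2} can only move $F(\dtilde)$ closer to $F(d)$, so $|F(\dtilde)-F(d)|\le E$ unconditionally.

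The substantive work is inverting $F$ near $x=2r$, where $F'(x)=\sqrt{4r^2-x^2}$ vanishes. Writing $a=2r-d$ and $a'=2r-\dtilde$, set $g(a):=\pi r^2-F(2r-a)=\int_0^a\sqrt{t(4r-t)}\,dt$, so that for $a\le r$ one has $g(a)=\Theta(\sqrt{r}\,a^{3/2})$ and $g'(a)=F'(2r-a)=\Theta(\sqrt{ra})$. We split into two regimes. When $\pi r^2-F(d)=O(\log n)$, so $d$ is essentially at $2r$, $E$ degenerates to $O(\log n)$, hence $g(a)$ and $g(a')$ are both $O(\log n)$; inverting $g$ forces $a,a'=O((\log n)^{2/3}/r^{1/3})$, and since both are nonnegative, $|a-a'|\le\max(a,a')$ already gives the first term in~\eqref{eq:dtilde-good-near-2r}. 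In the complementary regime $\pi r^2-F(d)\gg\log n$, the Chernoff deviation $E$ is a small fraction of $\pi r^2-F(d)$, which yields $a'=\Theta(a)$ and hence $F'(d)+F'(\dtilde)=\Theta(\sqrt{ra})$; feeding this into Lemma~\ref{lem:concave-MVT} gives $|d-\dtilde|\le 2E/(F'(d)+F'(\dtilde))=O((a/r)^{1/4}\sqrt{\log n})$, matching the second term.

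The principal obstacle, and the reason the bound has two qualitatively different summands, is precisely the degeneracy of $F'$ at $2r$: a naive inversion produces $|\dtilde-d|\sim E/F'(d)$, which blows up as $d\to 2r$. The saving grace is that $E$ itself shrinks in that limit, because the lens area and therefore the expected number of common neighbors tend to zero; the two-regime split above is the cleanest way to make this cancellation rigorous. The one delicate point is choosing the constant in the threshold $\pi r^2-F(d)>C_0\log n$ large enough that the Chernoff deviation is a small fraction of the mean, thereby ensuring the self-consistent relation $a'=\Theta(a)$ in the non-degenerate regime.
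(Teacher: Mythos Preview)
Your proposal is correct and follows essentially the same route as the paper: identify the lens $B(v,r)\cap B(w,r)$, apply Corollary~\ref{cor:chernoff-for-area} and union-bound over pairs, change variables to $\eps=2r-d$, invoke Lemma~\ref{lem:concave-MVT} to get $|d-\dtilde|\le 2E/(F'(d)+F'(\dtilde))$, and split into the two regimes depending on whether the lens area is $O(\log n)$ or not. The only cosmetic difference is in the degenerate case near $d=2r$: the paper sets $\eps\approx 0$ and solves the self-referential inequality $\epstilde\lesssim (\log n)/\sqrt{r\epstilde}$, whereas you bound both $a$ and $a'$ directly by inverting $g(a)=\Theta(\sqrt{r}\,a^{3/2})$; these are equivalent, and your explicit remark that the clipping at $F(r)$ can only help is a point the paper leaves implicit.
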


\begin{remark}
Similarly to Lemma~\ref{lem:adjacent-distances}, we note that 
\eqref{eq:dtilde-good-near-2r} is
only useful when $r$ is at least a constant factor greater than the connectivity threshold. 
In particular, for $r \le 100 \sqrt{\log n}$, the conclusion of Lemma~\ref{lem:adjacent-distances} becomes trivial,
because the right-hand side of \eqref{eq:dtilde-good-near-2r} exceeds $r$.  
\end{remark}

\begin{proof}
Fix a pair of vertices $v,w$, and condition on their positions. Let $d = \| v - w \|$.  
Assume that $B(v,r) \subset \SRn$, since otherwise $v$ is almost surely not deep, so we would have nothing to prove. Also assume that $r < \|v-w\| \le 2r$, since otherwise $d_G(v,w)$ cannot be 2, and we would again have nothing to prove.

Observe that, since $d(v,w) > r$, $N(v) \cap N(w)$ consists of exactly those vertices located within the lens $L = B(v,r) \cap B(w,r)$. Since $L$ has area $\pi r^2 - F(d)$ and is contained within the domain, $\SRn$, it follows by Corollary~\ref{cor:chernoff-for-area} that
with probability at least $1 - \frac{1}{n^4}$, the number of vertices, $X$, in $L$ satisfies
\[
|X - (\pi r^2 - F(d))| \le \max \left\{ 16 \log n, 4 \sqrt{ (\pi r^2 - F(d)) \log n} \right\},
\]
and hence
\[
\left| |N(v) \cap N(w)| - (\pi r^2 - F(d)) \right| \le \max \left\{ 16 \log n, 4 \sqrt{(\pi r^2 - F(d)) \log n} \right\}.
\]

By Lemma~\ref{lem:concave-MVT} applied with $\{x_1,x_2\} = \{d, \dtilde\}$, we have
\[
|F(d) - F(\dtilde)| \ge |d - \dtilde| \frac{F'(d) + F'(\dtilde)}{2}
\]
\newcommand{\epstilde}{\widetilde{\eps}}
We change variables, using $\varepsilon = 2r - d$, $\widetilde{\varepsilon} = 2r - \dtilde$, and apply the definition of $F'$
to find
\begin{align*}
|d - \dtilde| &\le \frac{2 |F(d) - F(\dtilde)|}{F'(d) + F'(\dtilde)} \\
&= \frac{2 | F(d) - \max\{F(r), \pi r^2 - |N(v) \cap N(w)| \} | }{F'(d) + F'(\dtilde)} \\
&= \frac{2 | F(d) - \max\{F(r), \pi r^2 - |N(v) \cap N(w)| \} | }{ \sqrt{4r \eps - \eps^2} + \sqrt{4r \epstilde - \epstilde^2} } \\
&\le \frac{2 \max \left\{ 16 \log n, 4 \sqrt{(\pi r^2 - F(d)) \log n} \right\} }{ \sqrt{4r \eps - \eps^2} + \sqrt{4r \epstilde - \epstilde^2} } \\
&\le \frac{2 \max \left\{ 16 \log n, C \sqrt{ \eps^{3/2} r^{1/2} \log n} \right\} }{ \sqrt{4r \eps - \eps^2} + \sqrt{4r \epstilde - \epstilde^2} } 
\\
&= O\!\left( \frac{\eps}{r} \right)^{1/4} \sqrt{\log n} \, ,
\end{align*}
assuming $\eps^3 r > C (\log n)^2$.  For the other case, we may as well treat $\eps$ as zero, in which case we get
\begin{align*}
|d - \dtilde| &= \epstilde \le \frac{16 \log n}{\sqrt{4r \epstilde - \epstilde^2}} \\
&\approx \frac{16 \log n}{\sqrt{3r \epstilde}}, \\
\end{align*}
which implies
\[
\epstilde \le \left(\frac{16 \log n}{\sqrt{3r}}\right)^{2/3},
\]
as desired.
\end{proof}

The results of Lemmas~\ref{lem:adjacent-distances} and \ref{lem:2-apart-distances} are summarized in the following theorem.

\begin{theorem} \label{thm:short-dis}
Let $G = (V,E)$ be a random geometric graph with parameters $r,n$, where $r \ge 100 \sqrt{\log n}$. 
With probability at least $1 - 2/n^2$ we have, for all vertices $v \ne w$ such that 
$d_G(v,w) \le 2$ and $v$ is deep, 
\begin{equation}
\label{eq:dtilde-good-combined-summary}
\left| \Vert  v - w \Vert  - \dtilde(v,w) \right| \le  100 \eta (\Vert v-w\Vert ) \sqrt{\log n} \, ,
\end{equation}
where $\eta: [0,2r] \to [0,1]$ is defined by
\[
\eta(x) = \begin{cases}
\frac{\sqrt{\log n}}{r} &\mbox{for $0 \le x \le \frac{\log n}{r}$}, \\
\sqrt{\frac{x}{r}}&\mbox{for $\frac{\log n}{r} \le x \le r$}, \\
\left(\frac{2r-x}{r}\right)^{1/4} & \mbox{for $r \le x \le 2r - \frac{(\log n)^{2/3}}{r^{1/3}}$}, \\
\frac{(\log n)^{1/6}}{r^{1/3}} & \mbox{for $2r - \frac{(\log n)^{2/3}}{r^{1/3}} \le x \le 2r$}.
\end{cases}
\]
\end{theorem}

\subsection{Estimating long-range distances}

Next we show a fairly tight relationship between geometric and topological distance for all pairs of vertices, including distant ones. This is a slightly sharper version of~\cite[Thm 1.1]{diaz2016relation}.  The main difference in the proof is that, where before, a short path between two given 
vertices is found by finding vertices close to a straight line between the endpoints, our proof instead analyzes a greedy algorithm generating a path that may deviate further from the straight line.
We start with the following geometrical lemma. 

\begin{lemma} \label{lem:lens-width-area}
Let $B_1$, $B_2$ be two overlapping Euclidean balls in $\R^2$ of radius $r_1$ and $r_2$ respectively, and let $d$ be the distance between their centers. Consider the lens $L = B_1 \cap B_2$.  Let $\delta$ denote the width of $L$, that is, 
\[
\delta = \min\{ r_1 + r_2 - d, 2r_1, 2r_2\} \, .
\]
Then the area $A$ of $L$ satisfies
\[
A = \Theta\!\left( \delta^{3/2} \min\{r_1,r_2\}^{1/2} \right) \, .
\]
\end{lemma}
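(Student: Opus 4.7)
The plan is to cut the lens $L$ along the chord joining the two intersection points of $\partial B_1$ and $\partial B_2$, producing two circular segments (``caps'') on a common chord, and estimate each cap's area using the standard sagitta formula. After relabeling so that $r_1 \le r_2$, I first dispose of the containment case $d + r_1 \le r_2$: here $L = B_1$, so $A = \pi r_1^2$ and $\delta = 2 r_1$, which matches $A = (\pi / 2\sqrt{2})\, \delta^{3/2} r_1^{1/2}$ up to a constant.

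In the non-degenerate overlap case $r_2 - r_1 < d < r_1 + r_2$, I place the centers at $(0,0)$ and $(d, 0)$ and locate the chord at $x_1 = (r_1^2 - r_2^2 + d^2)/(2d)$. The two cap sagittas then work out to
\[
a_1 \;=\; r_1 - x_1 \;=\; \frac{\delta\,(d + r_2 - r_1)}{2d}, \qquad a_2 \;=\; x_1 - (d - r_2) \;=\; \frac{\delta\,(d + r_1 - r_2)}{2d},
\]
with $a_1 + a_2 = \delta$. Since $r_1 \le r_2$, one gets $a_1 \in [\delta/2, \delta]$, so $a_1 = \Theta(\delta)$. Writing $u = r_2 - r_1 \ge 0$ and using the overlap bound $d + u = d + r_2 - r_1 \le 2 r_2$, one also obtains $a_2 / a_1 = 1 - 2u/(d + u) \le r_1 / r_2$.

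Next I would invoke the elementary identity that a circular segment of sagitta $s$ on a disk of radius $r$ (with $0 \le s \le 2r$) has area $\Theta(s^{3/2} r^{1/2})$. This follows from the closed form $r^2 \arccos(1 - s/r) - (r - s)\sqrt{2 r s - s^2}$, whose Taylor expansion near $s = 0$ starts at $(4\sqrt{2}/3)\, s^{3/2} r^{1/2}$ and whose value at $s = 2r$ is $\pi r^2 = (\pi/2\sqrt{2})(2r)^{3/2} r^{1/2}$. Applied to the two caps, this gives $A = \Theta(a_1^{3/2} r_1^{1/2}) + \Theta(a_2^{3/2} r_2^{1/2})$. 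The first summand is $\Theta(\delta^{3/2} r_1^{1/2})$ by the previous paragraph, furnishing the lower bound. For the matching upper bound, the inequality $a_2 \le (r_1/r_2)\, a_1$ gives
\[
a_2^{3/2}\, r_2^{1/2} \;\le\; \frac{r_1}{r_2}\, a_1^{3/2}\, r_1^{1/2} \;\le\; \delta^{3/2}\, r_1^{1/2},
\]
so the $B_2$-cap does not exceed the target and we conclude $A = \Theta(\delta^{3/2}\, r_1^{1/2})$.

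The one real subtlety is the asymmetric regime $r_2 \gg r_1$, where the $B_2$-cap is a thin sliver along the chord: the factor $\sqrt{r_2/r_1}$ appearing in its sagitta-area formula has to be absorbed by $a_2$ being correspondingly small. The bound $a_2 \le (r_1/r_2)\, a_1$, which drops out cleanly from the overlap inequality $d + u \le 2 r_2$, is precisely what delivers this cancellation; without it, one would only have the trivial $a_2 \le a_1$, which is too weak when $r_2$ is much larger than $r_1$.
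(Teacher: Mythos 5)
Your proof is correct, and it takes a genuinely different route from the paper's. The paper's proof sets $r = \min\{r_1,r_2\}$, argues that $A$ is within a constant factor of the area $A_\between(\delta)$ of the \emph{equal-radius} lens formed by two balls of radius $r$ with the same width $\delta$, and then computes $A_\between(\delta) = \pi r^2 - F(2r-\delta)$ so that the known derivative $F'(x)=\sqrt{4r^2-x^2}$ gives $A'_\between(\delta)=\Theta(\delta^{1/2}r^{1/2})$, hence $A_\between = \Theta(\delta^{3/2}r^{1/2})$ by integration. You instead split the lens along its chord into two circular segments, compute the sagittas $a_1, a_2$ explicitly, and use the closed-form segment area $r^2\arccos(1-s/r)-(r-s)\sqrt{2rs-s^2} = \Theta(s^{3/2}r^{1/2})$. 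The structural differences buy you something: the paper's comparison step (that $A$ lies between $\tfrac12 A_\between$ and $A_\between$) is stated without justification and is actually off by a constant --- when $r_2 \gg r_1$ the two segments degenerate into a single $B_1$-segment of sagitta $\approx \delta$, giving $A \to \sqrt{2}\,A_\between$ rather than $A \le A_\between$ --- whereas your decomposition handles the asymmetry head-on, and your bound $a_2/a_1 \le r_1/r_2$ is exactly the quantitative statement needed to show the large-radius cap cannot dominate. So your argument is somewhat longer but more self-contained and more careful in the asymmetric regime $r_2 \gg r_1$; the paper's is shorter because it leans on the $F$-machinery already introduced for Lemma~\ref{lem:adjacent-distances}.
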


\begin{proof}
There are many ways to see this, and we prove an $m$-dimensional generalization in Lemma~\ref{lem:lens-width-volume} below. Here we give a proof using the function $F(x)$ from the previous section. 

Let $r=\min\{r_1,r_2\}$. Without loss of generality, we can assume $\delta < r/10$. Otherwise, $L$ includes a constant fraction of the ball of radius $r$, and $A=\Theta(\delta^{3/2} \,r^{1/2}) = \Theta(r^2)$ as stated.

Now note that $A$ is between $1/2$ and $1$ times the area of a lens of width $\delta$ formed by two balls $B_1,B_2$ of radius $r$ whose centers are $x=2r-\delta$ apart. This lens is $B_1$ minus the lune $B_1 \setminus B_2$, so its area is
\[
A_\between(\delta) = \pi r^2 - F(2r-\delta) \, .
\]
where $F$ is defined as in~\eqref{eq:F}. From~\eqref{eq:F-diffeq} we have the following differential equation for $A_\between$, 
\[
A'_\between(\delta) 
= -F'(2r-\delta) 
= \sqrt{4r^2-x^2} 
= \sqrt{4 \delta r - \delta^2} 
= \Theta(\delta^{1/2} \,r^{1/2}) \, .
\]
Integrating this from the boundary condition $A_\between(0)=0$ to a given $\delta$ gives $A=\Theta(\delta^{3/2} \,r^{1/2})$ and completes the proof.
\end{proof}
\begin{remark}
Note in the definition of $\delta$ the cases $2r_1$ and $2r_2$ refer to the situation when one ball is totally inside the other ball. In our   random geometric graphs, all the neighborhood 
balls have the same radius. But as it will be the case in the proof of Theorem~\ref{thm:greedy-routing}, we also argue about lunes between  balls with different  radii,  therefore the need to define  $\delta$ in full generality.
\end{remark}

The main result of this section is the following theorem,
\begin{theorem}
\label{thm:greedy-routing}
There exist absolute constants $C_1, C_2, C_3$ such that, for all $n \ge 1$, all $r \ge C_1 \sqrt{\log n}$, with probability at least $1 - C_2/n^2$, all pairs of vertices $u,v$ satisfy
\begin{equation}
\left\lceil \frac{\| u-v \|}{r} \right\rceil
\le d_G(u,v) 
\le \left\lceil \frac{\|u-v\| + \kappa}{r}  \right\rceil \, ,
\label{eq:lenses}
\end{equation}
where 
\begin{equation}
\label{eq:kappa}
\kappa = C_3 \left( \frac{\|u-v\|}{r^{4/3}} + \frac{\log n}{r^{1/3}} \right) \, . 
\end{equation}
\end{theorem}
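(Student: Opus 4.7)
The lower bound is immediate: every edge has Euclidean length less than $r$, so any $k$-edge path covers distance less than $kr$, and integrality of $d_G(u,v)$ forces $d_G(u,v) \ge \lceil \|u-v\|/r \rceil$. For the upper bound I would analyze greedy routing. Write $D = \|u-v\|$, set $u_0 = u$, iteratively pick $u_{i+1}$ to be any neighbor of $u_i$ that minimizes $\|u_{i+1}-v\|$, stop at the first $k$ with $\|u_k-v\| \le r$, and then take a final edge to $v$. Define the slack $\delta_i := \|u_{i+1}-v\| - (\|u_i-v\|-r) \ge 0$, which measures the shortfall relative to an ideal step of length $r$ straight toward $v$. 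Telescoping gives $\|u_k-v\| = D - kr + S_k$ with $S_k = \sum_{i<k} \delta_i$, so the stopping criterion yields
\[
d_G(u,v) \,\le\, k+1 \,\le\, \left\lceil \frac{D + S_k}{r} \right\rceil,
\]
and it suffices to show $S_k \le \kappa$ with probability at least $1 - 1/n^4$.

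The per-step estimate is the core of the argument. The event $\delta_i > t$ holds iff the lens $L_i^t := B(u_i,r) \cap B(v,\, \|u_i-v\| - r + t)$ contains no vertex of $V$. The lens has width $t$ and narrower radius $\min(r, \|u_i-v\|-r+t)$; as long as the greedy walk has not nearly reached $v$, this minimum is $\Omega(r)$, so by Lemma~\ref{lem:lens-width-area} the lens area is $\Omega(t^{3/2} r^{1/2})$. (When $u_i$ lies within $r$ of $\partial \SRn$ the lens is partially clipped, but since $u_i \in \SRn$ a positive constant fraction of $L_i^t$ always remains inside, so the same bound survives up to an absolute constant.) Since the probability that a region of area $A$ contains no vertex is at most $\e^{-A}$, conditional on the greedy history through step $i$,
\[
\Prob{\delta_i > t} \,\le\, \exp\!\left(-c\, t^{3/2} r^{1/2}\right),
\]
whence $\Exp[\delta_i] = O(r^{-1/3})$ and $\mathrm{Var}[\delta_i] = O(r^{-2/3})$.

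To pass from the per-step tail to a deviation bound on $S_k$, I would apply a martingale Bernstein-type inequality: although the $\delta_i$ are adaptively chosen, their conditional MGFs are uniformly controlled by the sub-Weibull envelope above. This yields, with failure probability at most $1/n^4$,
\[
S_k \,\le\, C\,\frac{k}{r^{1/3}} \,+\, C'\,\frac{\sqrt{k \log n}}{r^{1/3}} \, .
\]
Substituting the crude a priori bound $k \le 2D/r + O(1)$ (self-consistent as soon as $r^{4/3}$ exceeds a fixed constant) produces $S_k = O(D/r^{4/3}) + O(\sqrt{D \log n}/r^{5/6})$. The cross term is controlled by AM--GM via $\sqrt{D \log n}/r^{5/6} \le \tfrac{1}{2}(D/r^{4/3} + \log n/r^{1/3})$, so $S_k \le C_3(D/r^{4/3} + \log n/r^{1/3}) = \kappa$ for a suitable absolute $C_3$. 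A final union bound over the at most $n^2$ pairs $(u,v)$ delivers the stated failure probability $O(1/n^2)$.

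I expect two technical obstacles. First, adaptivity of the slacks: each $\delta_i$ depends on all previous greedy choices, so the concentration step must be set up as a martingale Bernstein inequality against the conditional tail bound, not as an i.i.d.\ Bernstein estimate. Second, the last few steps of greedy (when $\|u_i-v\|$ is close to $r$) are genuinely different, because the narrower lens radius degenerates and the sharp $\Omega(t^{3/2}r^{1/2})$ bound weakens. I would handle this by cutting greedy off once $\|u_i-v\| \le 3r/2$ and extending to $v$ in at most two extra edges, exploiting that $B((u_i+v)/2,\, r/4) \subset B(u_i,r) \cap B(v,r)$ has area $\Omega(r^2) \gg \log n$ and so contains a vertex \whp. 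Boundary effects at $\partial \SRn$ are, by contrast, more bookkeeping than obstruction, since $L_i^t$ is a cap of a ball centered at the interior vertex $u_i$ and always retains a constant-fraction area inside $\SRn$.
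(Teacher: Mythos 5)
Your proposal is essentially correct and follows the same high-level plan as the paper (greedy routing, telescoping of per-step slacks, a union bound over pairs), but it handles the key concentration step by a genuinely different route. The paper's trick is to tile each greedy step by a nested sequence of lenses, each of area exactly $\ln 2$ in the Poisson model, so that whether each successive annular region is empty becomes a sequence of \emph{i.i.d.\ fair coin flips}; the total slack is then controlled by bounding the number of coins via a plain Chernoff bound (plus a separate bound on long runs of tails). You instead bound the slack $\delta_i$ directly by the sub-Weibull tail $\Prob{\delta_i > t} \le \exp(-c\, t^{3/2} r^{1/2})$ coming from Lemma~\ref{lem:lens-width-area}, and then invoke a martingale Bernstein inequality. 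Both approaches yield $S_k = O(D/r^{4/3} + \log n/r^{1/3}) = O(\kappa)$; the paper's reduction to Bernoulli coins avoids heavy-tailed concentration machinery at the cost of the somewhat opaque area-$\ln 2$ device, while your version is more conceptually direct. Your handling of the degenerate final lens (cutting greedy off at $\|u_i - v\| \le 3r/2$ and using that $B(u_i,r) \cap B(v,r)$ has area $\Theta(r^2)$) is a valid alternative to the paper's implicit treatment, and the observation that $\|u_{k-1}-v\|>r$ (not just $\|u_k-v\|\ge 0$) is what delivers the ceiling is the right one.

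One point you have not fully closed: the conditional tail bound $\Prob{\delta_i > t \mid \mathcal{F}_i} \le \exp(-c\,t^{3/2} r^{1/2})$ is not automatic, because $\mathcal{F}_i$ includes the fact that the previous lens $L_{i-1}^{\delta_{i-1}}$ (minus the point $u_i$ on its boundary) was empty. If $L_i^t$ overlaps that already-revealed empty region, the conditional emptiness probability of $L_i^t$ is no longer bounded by $\e^{-c\,\mathrm{area}(L_i^t)}$, because part of the area has been conditioned empty. Observing that $L_i^t \subset B(v, \|u_i-v\|-r+t)$ while $B(u_{i-1},r)$ sits at distance $\|u_i-v\| - \delta_{i-1}$ from $v$ shows the two lenses are disjoint whenever $\delta_{i-1} + t < r$; so the fix is to condition away the rare event that any $\delta_i > r/2$, which by your own tail bound has probability $\exp(-\Omega(r^2)) = O(n^{-4})$ once $r \ge C_1\sqrt{\log n}$, and then restrict the martingale analysis to $t < r/2$. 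The paper does exactly this (``this can only happen if there is a very long sequence of consecutive `no' answers''). You flag \emph{adaptivity} as a concern, but adaptivity of which lens to examine is handled by the martingale framing; the thing that actually needs a separate argument is \emph{non-disjointness} of consecutive lenses, and your write-up should call that out explicitly rather than fold it into ``set it up as a martingale.''
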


\begin{figure}
  \centering
  \includegraphics[width=4.5in]{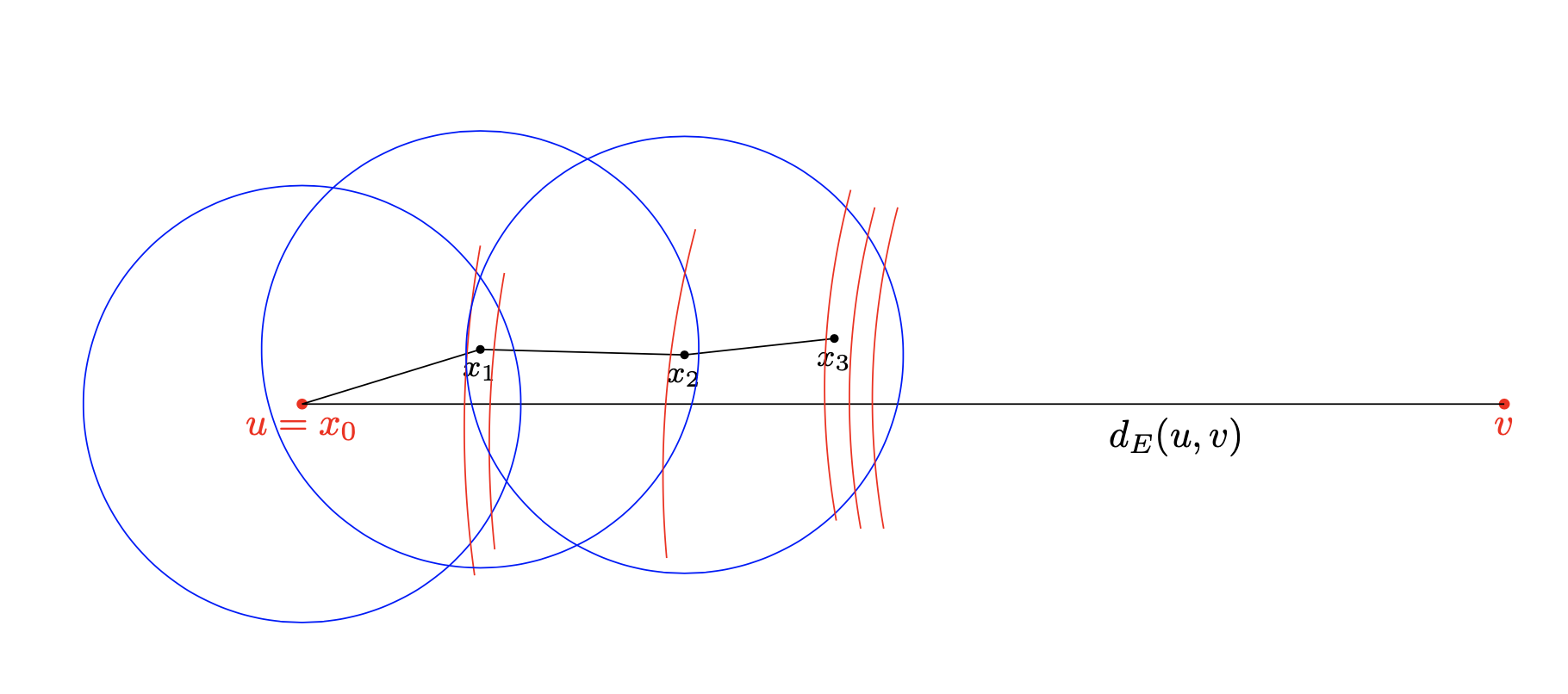}
  \caption{The greedy routing analysis of Theorem~\ref{thm:greedy-routing}. At each step we go from $x_i$ to the neighbor $x_{i+1}$ closest to $v$. In the analysis,
 we consider the intersections of $x_i$'s neighborhood with balls centered at $v$, with the radii of the latter chosen so that these intersections have area $\ln 2$, $2 \ln 2$, $3 \ln 2$, and so on. Each of these intersections contains a point with constant probability, so that most steps make significant progress towards $v$.}   
  \label{fig:Longd}
\end{figure}

\begin{proof}
The first inequality is easy: since every edge has Euclidean length at most $r$, we have $\|u - v\| \le d_G(u,v) r$ by the Triangle Inequality. Since $d_G(u,v)$ is an integer, rounding it up does not change its value, so
\[
\left\lceil \frac{\| u-v \|}{r} \right\rceil \le d_G(u,v) \, .
\]

We now turn our attention to the second  inequality of \eqref{eq:lenses}, the upper bound on $d_G(u,v)$.  
In an attempt to find a short path from $u$ to $v$, we consider the following greedy algorithm, 
(see Fig~\ref{fig:Longd}). 
Let $x_0 = u$.  For $i \ge 1$, we define $x_{i+1}$ to be the neighbor of $x_i$ that has minimal Euclidean distance to $v$ (note that $x_{i+1}$ is unique with probability $1$).  The algorithm terminates if no neighbor of $x_i$ is closer to $v$ than $x_i$ is. Hopefully this is because $x_i = v$, in which case we say there exists a greedy path from $u$ to $v$.  However, it may instead happen that the greedy algorithm gets stuck in a local minimum, and never reaches $v$.

We will prove that, with probability $1 - O(n^{-3})$, a greedy path exists from $u$ to $v$, and its length is at most the desired upper bound.  Taking a union bound over all pairs $u,v$ completes the proof. 
Notice this greedy algorithm has nothing to do with our reconstruction algorithm: it is purely for our analysis, i.e., to prove that short paths probably exist.

We will discuss our solution in terms of a sequence of independent fair coin flips.
These coin flips are the outcomes of a sequence of experiments of the form 
``Is $\|x_{i+1} - v \| \le a$?'', where $x_1, \dots, x_{\ell}$ is our greedy walk from $u$ to $v$.
The values $a$ are chosen adaptively to make each of these coin flips fair, conditioned on previously revealed steps of the greedy walk, and on previous coin flips.

Specifically, assume that $x_i$ has just been revealed.  Our first coin flip asks whether $x_{i+1}$ lies in the lens $L_1$  defined as the intersection of the ball of radius $r$ centered at $x_i$ with the ball of radius $a_1$ centered at $v$, with $a_1$ chosen so that $L_1$ has area $\ln 2$. Working in the Poisson model, the probability that $L_1$ contains at least one vertex is exactly $1/2$.  If the coin comes up ``no'', i.e., $L_1$ is empty, we ask the same question, while increasing the radius of the ball around $v$ to $a_2$, and then $a_3$, and so on, with $a_t$ chosen so that the resulting lens $L_t$ has area $t \ln 2$. Each time we increment $t$, we gain an additional region $L_t \setminus L_{t-1}$ of area $\ln 2$, which corresponds to another coin flip. We continue until the $t$th coin comes up ``yes'', i.e., $L_t$ contains at least one point. At that point we know that  $x_{i+1} \in L_t \setminus L_{t-1}$ and therefore $\|x_{i+1}-v\| \le a_t$ as shown in Figure~\ref{fig:coins}. We then reveal  $x_{i+1}$ (note that it might be one of several points in that region). 

\begin{figure}
    \centering
    \includegraphics[width=3in]{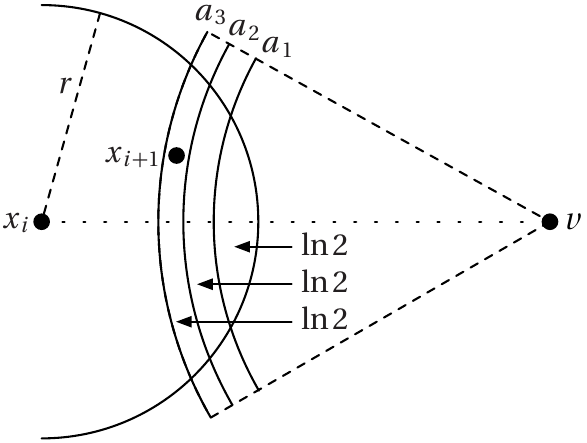}
    \caption{The analysis of greedy routing in  Theorem~\ref{thm:greedy-routing}. We consider the intersections of the neighborhood of the current point $x_i$ with balls centered at $v$, with the radius of the latter chosen so that these intersections have area $\ln 2$, $2 \ln 2$, $3 \ln 2$, and so on. Each additional region of area $\ln 2$ is nonempty with probability $1/2$, creating a fair coin flip. When this coin gives ``yes'', i.e., that region is nonempty, we reveal $x_{i+1}$ in that region.}
    \label{fig:coins}
\end{figure}

We then repeat the whole process with $x_{i+1}$ in place of $x_i$.  This continues until $x_i$ is within the disk of radius $r$ around $v$, 
after which $x_{i+1}=v$ and we're done.

Note that the above argument breaks down if the sequence of lenses examined in stage $i+1$ overlaps the lenses examined in stage $i$.  In this case, since the planar regions being examined aren't disjoint, the second sequence of coin flips is no longer conditionally independent of the first.  Fortunately, this can only happen if there is a very long sequence of consecutive ``no'' answers in stage $i$, which is unlikely in a sequence of fair coin flips.  Specifically, we are only worried about the event that $\|x_i - v\| \ge \|x_{i-1} - v\| - r/2$, which, since the area of the lens of width $r/2$ with centers at $x_{i-1}$ and $v$ and radii $r$ and $\|x_i-v\|-r/2$, respectively, is $\Theta(r^2)$ by Lemma~\ref{lem:lens-width-area}.  Choosing the constants carefully, we can ensure that the probability of this lens being empty is $O(1/n^4)$, and hence with high probability it doesn't happen at any stage of the 
greedy algorithm.

The next step in the analysis is to relate the number of ``no'' answers received in the aforementioned sequence of fair coin flips to the progress made by the greedy algorithm.  Note that each sequence of $k-1$ ``no'' answers followed by one ``yes,'' corresponds to $x_{i+1}$ being in the lens of area $k \ln 2$, which has, by Lemma~\ref{lem:lens-width-area}, width $\Theta(k^{2/3} r^{-1/3})$, which is $O(k r^{-1/3})$. Thus 
\[
\|x_{i+1}-v\| 
\le a_k = \|x_i-v\| - r  + O(k r^{-1/3}) \, .
\]
It follows that, if the total number of coins flipped on the journey from $u$ to $v$ is $m$, we necessarily have $\|u-v\| \ge r (d_G(u,v) - 1) - O(m r^{-1/3})$.   

Setting $m = \frac{4 \| u - v \|}{r} + 48 \log n$,  Chernoff's bound tells us that 
the probability of getting fewer than $m/4$ heads in a sequence of $m$ coin flips is at most
$\exp(-m/16) = O(n^{-3})$.  The second kind of bad event is getting a sequence of more than 
$4 \log n$ consecutive tails.  Since the chance of having a particular set of $k$ coin flips be all tails
is $2^{-k}$, and there are at most $m = O(\sqrt{n}/r)$ possible positions for a run to start, a
union bound shows that the combined probability of having such a run is $O(n^{-4} \sqrt{n}/r) = O(n^{-3})$.

Putting all of the above together, except for an $O(n^{-3})$ probability of failure, we have
\begin{align*}
\|u-v\| &\ge  r (d_G(u,v) - 1) - O(m r^{-1/3}) \\
&= r(d_G(u,v) - 1) - O\left(  \frac{\|u-v\|}{r^{4/3}} + r^{-1/3} \log n \right),
\end{align*}
or equivalently, for some constant $C_3$
\begin{align*}
d_G(u,v) &\le 1 + \frac{\|u-v\| (1 + O(r^{-4/3})) + O(r^{-1/3} \log n)}{r}\\
&\le 1 + \frac{\|u-v\|}{r} +  C_3\left( \frac{\|u-v\|}{r^{7/3}} + \frac{\log n}{r^{4/3}}\right)\\
&= 1 + \frac{\|u-v\| +  \kappa}{r}\, .
\end{align*}

Since $d_G(u,v)$ is an integer, and $\frac{\|u-v\| +  \kappa}{r}$ being an integer is a measure zero event, it follows that 
\[
d_G(u,v) \le \left\lceil \frac{\|u-v\| +  \kappa}{r}\right\rceil .
\]
\end{proof}

Let us discuss how we will use Theorems~\ref{thm:short-dis} and~\ref{thm:greedy-routing} to break the $\Omega(r)$ barrier in distance estimation, and thus in reconstruction.
Suppose $r = n^\alpha$ where $0 < \alpha < 1/2$ is a constant. 
Then since $\Vert u-v\Vert  = O(n^{1/2})$, we have from~\eqref{eq:kappa}

\begin{equation}
\label{eq:kappa-alpha}
\frac{\kappa}{r} 
= O\!\left( \max\!\left( n^{\frac{1}{2}-\frac{7}{3}\alpha}, n^{-\frac{4}{3} \alpha} \log n \right) \right) \, ,
\end{equation}
and since $\frac{1}{2}-\frac{7}{3}\alpha > - \frac{4}{3} \alpha$ we have
\begin{equation}\label{eq:kappa-o}
    \kappa = O( n^\beta ) ,
    \quad \text{where} \quad 
    \beta = \frac{1}{2} - \frac{4}{3}\,\alpha \, .
\end{equation}
If $\alpha > 3/14$, then $\beta < \alpha$ and $\kappa = o(r)$. In this case the upper and lower bounds on $d_G(u,v)$ differ by at most $1$, and moreover are equal for most pairs of vertices, making $d_G(u,v)$ a nearly-deterministic function of $\|u-v\|$. Using $\lceil x \rceil \le x+1$ and multiplying through by $r$ gives the bounds
\[
d_G(u,v)r - (r+\kappa) \le \|u-v\| \le d_G(u,v) r \, ,
\]
so that $d_G(u,v)r$ is an estimate of $\|u-v\|$ with error $r+\kappa = (1+o(1))r$. Previous work~\cite{diaz2016relation,diaz2019learning} used this bound to reconstruct the graph with a distortion of $(1+\eps)r$ for arbitrarily small constant $\eps$. 
This gives the performance shown by the dotted line in Figure~\ref{fig:comparison}. 

But in fact $d_G(u,v)r$ is a much more accurate estimate of $\|u-v\|$ for certain pairs of vertices. Namely, if $\|u-v\|$ is just below a multiple of $r$, then rounding up the left and right sides of~\eqref{eq:lenses} doesn't change either very much. We state this with the following corollary. 
\begin{corollary}
\label{cor:kappa}
With $\kappa = \kappa(\|u-v\|)$ defined as in~\eqref{eq:kappa}, suppose that for some $\delta$ and some non-negative integer $t$ we have
\[
tr - (\kappa+\delta) < \|u-v\| < tr - \kappa \, .
\]
Then 
\[
d_G(u,v) r - (\kappa+\delta) \le \|u-v\| \le d_G(u,v) r \, .
\]
\end{corollary}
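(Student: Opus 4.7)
The plan is to apply Theorem~\ref{thm:greedy-routing} to squeeze $d_G(u,v)$ between two values that both equal $t$, and then read off the conclusion by substituting $t = d_G(u,v)$ into the hypothesis. In other words, I would argue that under the stated two-sided pinch on $\|u-v\|$, the ceilings on the left and right of \eqref{eq:lenses} collapse to a single integer.

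First I would handle the upper bound $d_G(u,v) \le t$. Since the hypothesis gives $\|u-v\| < tr - \kappa$, we have $(\|u-v\| + \kappa)/r < t$, so $\lceil (\|u-v\|+\kappa)/r \rceil \le t$, and the right-hand inequality of \eqref{eq:lenses} yields $d_G(u,v) \le t$. Next I would establish the matching lower bound $d_G(u,v) \ge t$. From $\|u-v\| > tr - (\kappa + \delta)$ we get $\|u-v\|/r > t - (\kappa+\delta)/r$; provided $\kappa + \delta \le r$ this is strictly greater than $t-1$, so $\lceil \|u-v\|/r \rceil \ge t$, and the left-hand inequality of \eqref{eq:lenses} gives $d_G(u,v) \ge t$.

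Combining the two bounds forces $d_G(u,v) = t$ exactly. Substituting this equality back into the hypothesis $tr - (\kappa+\delta) < \|u-v\| < tr - \kappa \le tr$ rewrites the two strict inequalities as
\[
d_G(u,v)\,r - (\kappa+\delta) < \|u-v\| < d_G(u,v)\,r \, ,
\]
which implies the (non-strict) bounds in the corollary.

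The only subtle point, and really the sole obstacle, is the side condition $\kappa + \delta \le r$ needed in the lower-bound step: without it, $\|u-v\|$ could fall well below $(t-1)r$ and the ceiling could drop to $t-1$ or less, in which case the corollary would be vacuous anyway. In the regime treated in the rest of the paper one has $\kappa = o(r)$, so this condition is automatic for any $\delta = o(r)$; I would simply note this in passing rather than belabor it.
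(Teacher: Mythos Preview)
Your argument is correct once you assume $\kappa+\delta \le r$, and it even yields the stronger intermediate fact $d_G(u,v)=t$. However, the corollary as stated places no such restriction on $\delta$, and your remark that the statement becomes ``vacuous'' otherwise is not accurate: when $\kappa+\delta>r$ one may well have $d_G(u,v)<t$, yet the two-sided bound on $\|u-v\|$ in the conclusion still holds and still requires proof.

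The paper's argument sidesteps this by not trying to pin down $d_G(u,v)$. The two hypotheses give $(\|u-v\|+\kappa)/r < t$ and $t < (\|u-v\|+\kappa+\delta)/r$, hence
\[
\left\lceil \frac{\|u-v\|+\kappa}{r} \right\rceil \;\le\; t \;\le\; \frac{\|u-v\|+\kappa+\delta}{r}\,.
\]
Combining this with the right-hand inequality of \eqref{eq:lenses} yields $d_G(u,v)\,r \le \|u-v\|+\kappa+\delta$, which is the lower bound in the conclusion. The upper bound $\|u-v\|\le d_G(u,v)\,r$ is simply the left-hand inequality of \eqref{eq:lenses} and holds for all pairs, independent of the hypothesis. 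No size constraint on $\delta$ is needed. So your route buys the extra information $d_G(u,v)=t$ at the cost of an added hypothesis (harmless in the application, where indeed $\kappa+\delta=o(r)$), while the paper's route proves the corollary exactly as stated.
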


\begin{proof}
We have
\[
\left\lceil \frac{\|u-v\|+\kappa}{r} \right\rceil \le \frac{\|u-v\|+\kappa+\delta}{r} \, , 
\]
and~\eqref{eq:lenses} then gives the stated result.
\end{proof}

Thus, if $\|u-v\|$ is in one of these intervals, Theorem~\ref{thm:greedy-routing} lets us estimate $\|u-v\|$ from the adjacency matrix with error $\delta+\kappa$ instead of $r+\kappa$. Below we will combine this with the more precise estimate of short-range distances from Lemma~\ref{lem:adjacent-distances} to achieve this error for all pairs $u,v$ of vertices where $v$ is deep, not just those for which $\|u-v\|$ is almost a multiple of $r$. As a result, the error in our distance estimates and the distortion of our reconstruction is $O(r^\beta)$ where $\beta$  decreases from $1$ to $0$ as $\alpha$ increases as shown by the solid line in Figure~\ref{fig:comparison}. Specifically, we obtain a nontrivial result for any $\alpha > 0$ and a more accurate reconstruction than~\cite{diaz2019learning} in the range $\alpha > 3/14$ where their theorem applies. At $\alpha=3/8$, where $\beta=0$ another source of error takes over, leaving us with $O(\sqrt{\log n})$ distortion.

\subsection{Hybrid estimates of long-range distances}\label{sec:hybrid}

In order to combine the long-range estimates of Theorem~\ref{thm:greedy-routing} with the more precise short-range estimates of Lemma~\ref{lem:adjacent-distances}, 
it will be helpful to set up some general machinery.

\begin{definition}
Suppose $V \subset \R^2$. Let $d : V^2 \to [0, \infty)$ and $\eps: \R \to [0, \infty)$ be two functions satisfying,
for all $u, v \in V$, 
\[
d(u,v) - \eps(u,v) \le \|u-v\| \le d(u,v) .
\]
Then we say $d$ is an upper bound on Euclidean distance with error function $\eps$.
\end{definition}

\noindent These error functions will often be bounded by functions of the Euclidean distance, in which case we will write $\eps(\|u-v\|)$ rather than $\eps(u,v)$. In our application, $V$ consists of all vertices in the geometric graph, but we will achieve particularly small $\eps(u,v)$ when $v$ is deep.

\begin{figure}
\begin{center}
\includegraphics[width=3.5in]{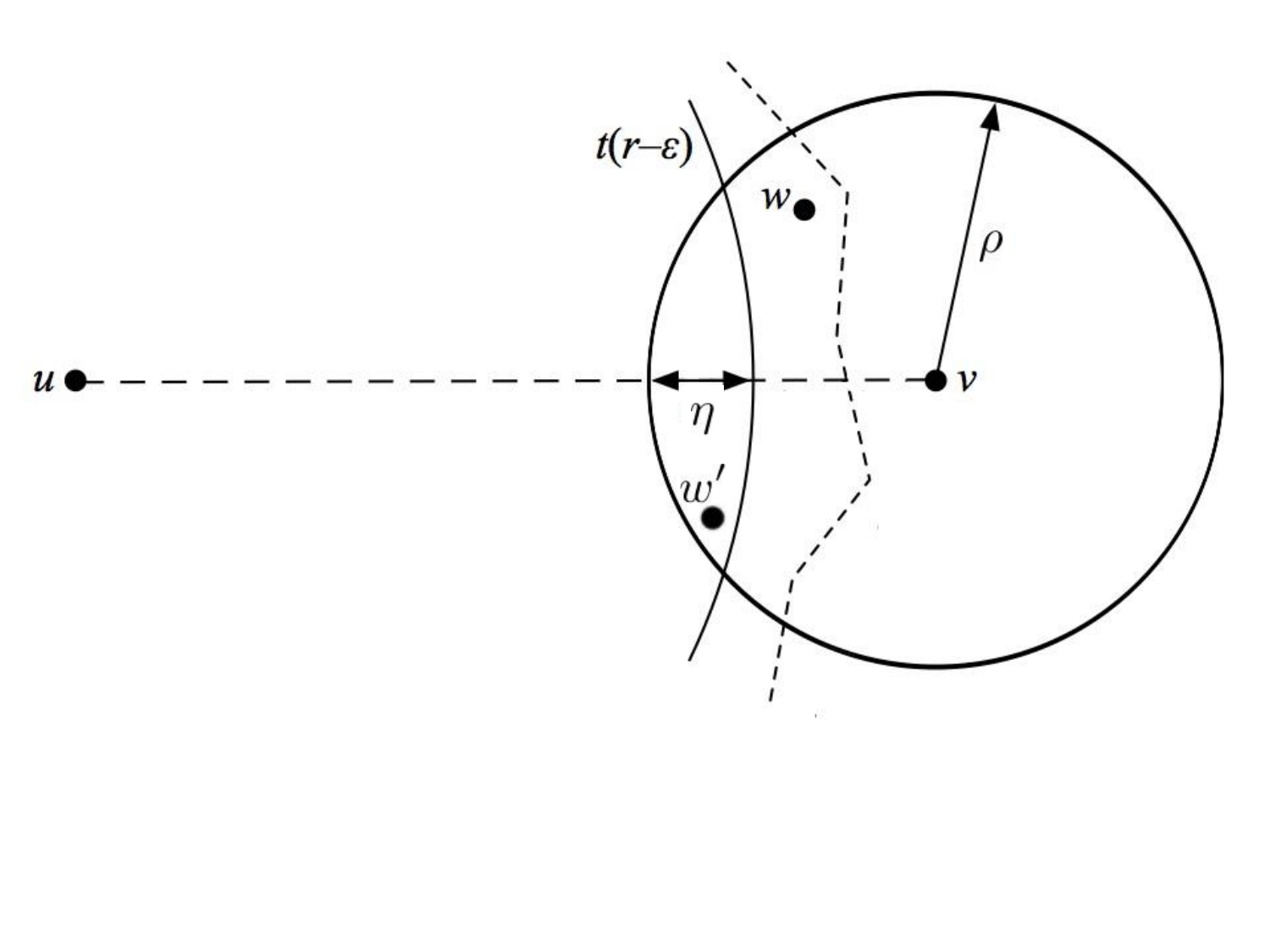}
\end{center}
\caption{For any intermediate point $w$, the hybrid distance estimate $d_1(u,w)+d_2(w,v)$ is an upper bound on $\Vert u-v\Vert $ with error bounded by Lemma~\ref{lem:dhat-error}.}
\label{fig:getting-close}
\end{figure}
A basic tool for combining distance estimates is the following:

\begin{lemma} 
\label{lem:min-dist-estimate}
If $d_1$ and $d_2$ are upper bounds on Euclidean distance with error functions $\eps_1, \eps_2$ respectively, then $\min\{d_1, d_2\}$ is an upper bound on Euclidean distance with error $\min\{\eps_1, \eps_2\}$.
\end{lemma}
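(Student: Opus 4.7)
The plan is to verify, at an arbitrary pair $u,v \in V$, the two inequalities that define ``upper bound on Euclidean distance with error function.'' Write $d := \min\{d_1(u,v), d_2(u,v)\}$ and $\eps := \min\{\eps_1(u,v), \eps_2(u,v)\}$; I need to show $d - \eps \le \|u-v\| \le d$.

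The upper bound $\|u-v\| \le d$ is immediate, since $\|u-v\| \le d_i(u,v)$ for each $i \in \{1,2\}$ by hypothesis, so $\|u-v\|$ is at most their pointwise minimum.

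The lower bound is the only part that requires any work. The key observation is that $d \le d_i(u,v)$ for both $i$, so for each $i$ the hypothesis gives
\[
d - \eps_i(u,v) \;\le\; d_i(u,v) - \eps_i(u,v) \;\le\; \|u-v\|.
\]
Taking the maximum of the two left-hand sides (over $i \in \{1,2\}$) yields
\[
d - \min\{\eps_1(u,v), \eps_2(u,v)\} \;\le\; \|u-v\|,
\]
which is exactly $d - \eps \le \|u-v\|$.

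I do not expect any obstacle: the lemma is essentially a bookkeeping observation about how the definition of ``upper bound with error function'' behaves under pointwise minima. The only point worth articulating carefully is why the smaller of the two error bounds remains valid after replacing $d_i$ by the possibly smaller $d$, and the two-line chain above makes that explicit. Note also that everything is done pointwise at a single pair $(u,v)$, so there is no issue with mixing error bounds across different pairs.
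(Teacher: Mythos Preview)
Your proof is correct and is essentially the same as the paper's: both use that $\|u-v\|\le d_i$ for each $i$ to get the upper bound, and for the lower bound both observe that $\|u-v\|\ge d_i-\eps_i\ge \min\{d_1,d_2\}-\eps_i$ for the index $i$ with the smaller $\eps_i$. The paper phrases the latter via a WLOG assumption $\eps_1\le\eps_2$ rather than your ``take the max over $i$,'' but the argument is identical.
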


\begin{proof}
For any $u,v$ we have $\|u-v\| \le \min\{d_1,d_2\}$. On the other side, assume without loss of generality that $\eps_1 \le \eps_2$. Then $\|u-v\| \ge d_1-\eps_1 \ge \min\{d_1,d_2\}-\eps_1$.
\end{proof}

The next lemma shows another way to combine two upper bounds on $\Vert u-v \Vert$. We choose a vertex $w$ between $u$ and $v$ and use the triangle inequality, using $d_1$ to bound $\Vert u-w \Vert$ and $d_2$ to bound $\Vert w-v \Vert$ (see Fig~\ref{fig:getting-close}). Finally, we minimize over all intermediate vertices $w$. This hybrid is especially useful when, as with our long-range and short-range estimates, $d_1$ and $d_2$ have different ranges of $\Vert u-v\Vert $ in which they achieve small error.
\begin{figure}
\begin{center}
\includegraphics[width=6cm]{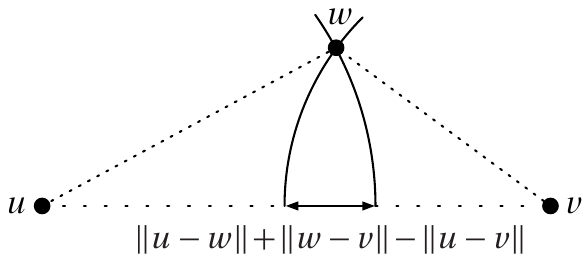}
\end{center}
\caption{Figure for the proof in Lemma~\ref{lem:dhat-error}.}
\label{fig:uvwd}
\end{figure}

\begin{lemma} 
\label{lem:dhat-error}
Suppose $d_1$ and $d_2$ are upper bounds on Euclidean distance with error functions $\eps_1$ and $\eps_2$ respectively. Define the hybrid distance estimate $\dhat$ by
\[
\dhat = \min_w \big( d_1(u,w) + d_2(w,v) \big) \, . 
\]
Then $\dhat$ is an upper bound on Euclidean distance with error 
\[
\epshat(u,v) \le \min_w \Big[ \eps_1(u,w) + \eps_2(w,v) + \Vert u-w\Vert  + \Vert w-v\Vert  - \Vert u-v\Vert  \Big] \, .
\]
\end{lemma}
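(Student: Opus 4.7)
The plan is to verify two things separately: that $\dhat$ really is an upper bound on $\|u-v\|$, and that the gap $\dhat(u,v) - \|u-v\|$ is at most the asserted minimum. Both steps are essentially the triangle inequality plus unpacking the definition ``$d(u,v)-\eps(u,v) \le \|u-v\| \le d(u,v)$'', which is equivalent to $\|u-v\| \le d(u,v) \le \|u-v\|+\eps(u,v)$.

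For the upper bound direction, I would fix an arbitrary intermediate point $w$ and use the hypothesis that $d_1,d_2$ are themselves upper bounds to write $d_1(u,w) \ge \|u-w\|$ and $d_2(w,v) \ge \|w-v\|$. Adding these and applying the Euclidean triangle inequality $\|u-w\|+\|w-v\| \ge \|u-v\|$ gives $d_1(u,w)+d_2(w,v) \ge \|u-v\|$. Since this holds for every $w$, taking the minimum yields $\dhat(u,v) \ge \|u-v\|$.

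For the error bound, I would again fix $w$ and now use the upper halves of the inequalities for $d_1$ and $d_2$: $d_1(u,w) \le \|u-w\| + \eps_1(u,w)$ and $d_2(w,v) \le \|w-v\| + \eps_2(w,v)$. Summing gives
\[
d_1(u,w)+d_2(w,v) \;\le\; \|u-w\|+\|w-v\| + \eps_1(u,w)+\eps_2(w,v).
\]
Subtracting $\|u-v\|$ and using the fact that $\dhat(u,v) \le d_1(u,w)+d_2(w,v)$ for every $w$, we obtain
\[
\dhat(u,v)-\|u-v\| \;\le\; \eps_1(u,w)+\eps_2(w,v) + \|u-w\|+\|w-v\|-\|u-v\|
\]
for every $w$. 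Taking the minimum over $w$ gives precisely the claimed $\epshat(u,v)$.

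There is no real obstacle here; the only thing to watch is that the statement packages two inequalities into one (lower and upper bounds on $\|u-v\|$ in terms of $d$), so one must apply the two halves in the two different directions. The intuitive content is that the slack $\|u-w\|+\|w-v\|-\|u-v\|$ quantifies how much the triangle inequality overshoots for that choice of waypoint $w$: when $w$ lies nearly on the segment $uv$ this term is small, and one then pays only the sum of the individual error functions.
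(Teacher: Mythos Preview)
Your argument is correct and matches the paper's own proof essentially step for step: both verify $\dhat \ge \|u-v\|$ via the triangle inequality and the upper-bound property of $d_1,d_2$, and both bound the error by summing $d_i \le \|\cdot\| + \eps_i$ and subtracting $\|u-v\|$. The only cosmetic difference is that the paper phrases the final ``take the minimum over $w$'' step as an appeal to the preceding min-of-two-estimates lemma, whereas you argue it directly.
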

\begin{proof}
Fix a pair of vertices $u, v$. For any $w$, by the triangle inequality we have
\[
\Vert u-v\Vert  \le \Vert u-w\Vert  + \Vert v-w\Vert  \le d_1(u,w) + d_2(w,v) \, ,
\]
so $\dhat$ is an upper bound on Euclidean distance. On the other hand, as shown in Figure~\ref{fig:uvwd} we have
\begin{align*}
\Vert u-v\Vert 
&= \Vert u-w| + \Vert v-w\Vert  - \big( \Vert u-w\Vert +\Vert w-v\Vert -\Vert u-v\Vert  \big) \\
&\ge d_1(u,w) + d_2(v,w) - \big( \eps_1(u,w) + \eps_2(w,v) + \Vert u-w\Vert +\Vert w-v\Vert -\Vert u-v\Vert  \big) \, .
\end{align*}
Using Lemma~\ref{lem:min-dist-estimate} to minimize the error over $w$ completes the proof. 
\end{proof}

Next, we will use the fact that if a lens is large enough to contain at least one point $w$ with high probability, this yields an upper bound on the minimum in Lemma~\ref{lem:dhat-error}.
\begin{lemma}\label{lem:hybrid-lens}
Suppose $G$ is a random geometric graph, and that with high probability, $d_1$ and $d_2$ are upper bounds on Euclidean distance with errors $\eps_1(u,v)=\eps_1(\|u-v\|)$ 
and $\eps_2(u,v)=\eps_2(\|u-v\|)$. Define the hybrid distance $\dhat$ as in Lemma~\ref{lem:dhat-error}. Then there is a constant $C$ such that, with high probability, $\dhat$ is also an upper bound on Euclidean distance, with error $\epshat(u,v) = \epshat(\|u-v\|)$ where
\begin{equation}
\label{eq:min-over-x}
\epshat(\Vert u-v\Vert ) 
\le \min_{0 < x < \Vert u-v\Vert } 
\max_{0 \le \delta_1, \delta_2 \le \delta(x)} 
\big[ \eps_1(x + \delta_1) + \eps_2(\Vert u-v\Vert  - x - \delta_2) + \delta(x) \big] ,
\end{equation}
where 
\begin{equation}
\label{eq:delta-big-enough}
\delta(x) = C (\log n)^{2/3} \left( \min\{x,\Vert u-v\Vert -x\} \right)^{-1/3} \, . 
\end{equation}
\end{lemma}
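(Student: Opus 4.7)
The plan is to find, for each pair of vertices $u,v$ and each candidate intermediate distance $x\in(0,\|u-v\|)$, an actual vertex $w$ whose distances from $u$ and $v$ are close to $x$ and $\|u-v\|-x$, and then feed that $w$ into Lemma~\ref{lem:dhat-error}. The natural region is the lens $L(x) = B(u,\, x+\delta(x)) \cap B(v,\, \|u-v\|-x)$ shown in Figure~\ref{fig:hybrid-lens}: the two radii sum to $\|u-v\|+\delta(x)$ while the centers are $\|u-v\|$ apart, so the balls overlap in a lens of width exactly $\delta(x)$.

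The first main step is to check that $L(x)$ is nonempty with high probability. Lemma~\ref{lem:lens-width-area} gives $\mathrm{Area}(L(x)) = \Theta\!\bigl(\delta(x)^{3/2}\min(x,\|u-v\|-x)^{1/2}\bigr)$, and substituting the definition $\delta(x)=C(\log n)^{2/3}\min(x,\|u-v\|-x)^{-1/3}$ makes this area at least any desired multiple of $\log n$ once $C$ is large. Corollary~\ref{cor:chernoff-for-area} then shows that a single $L(x)$ is nonempty with probability $1-n^{-\Omega(1)}$; a union bound over all pairs $u,v$ and over a sufficiently fine polynomial grid of $x$-values shows that, with high probability, every such $L(x)$ contains at least one vertex $w$.

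The second step is to translate this geometric fact into the claimed error bound. Fix any $w\in L(x)$. From the definition of $L(x)$ one has $\|u-w\|\le x+\delta(x)$ and $\|w-v\|\le\|u-v\|-x$, and combining these with the triangle inequality $\|u-w\|+\|w-v\|\ge\|u-v\|$ gives the matching lower bounds $\|u-w\|\ge x$ and $\|w-v\|\ge\|u-v\|-x-\delta(x)$. Setting $\delta_1=\|u-w\|-x$ and $\delta_2=(\|u-v\|-x)-\|w-v\|$, both lie in $[0,\delta(x)]$, and $\|u-w\|+\|w-v\|-\|u-v\|=\delta_1-\delta_2\le\delta(x)$. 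Feeding this $w$ into Lemma~\ref{lem:dhat-error} and bounding each $\eps_i$ by its worst case over $\delta_1,\delta_2\in[0,\delta(x)]$ yields exactly the right-hand side of \eqref{eq:min-over-x}; minimizing over the choice of $x$ completes the proof.

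The main technical wrinkle I expect is the union bound over the continuous parameter $x$, which I would handle by discretizing $x$ on a grid with spacing much smaller than $\delta(x)$ so that restricting the outer minimum to the grid only perturbs the bound by a constant factor that can be absorbed into $C$. A secondary nuisance is the boundary regime where $\delta(x)$ is comparable to $\min(x,\|u-v\|-x)$, i.e., $x$ is very close to $0$ or to $\|u-v\|$; there the lens-area estimate of Lemma~\ref{lem:lens-width-area} degenerates, but such $x$ are uninteresting (the intermediate point coincides essentially with $u$ or $v$) and can simply be omitted from the outer minimum without weakening the stated bound.
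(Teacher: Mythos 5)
Your approach is essentially the paper's: place a lens $L(x)$ of width $\delta(x)$ between $u$ and $v$, show it is nonempty w.h.p.\ via Lemma~\ref{lem:lens-width-area} (whose area estimate, after substituting $\delta(x)$, is $\Theta(C^{3/2}\log n)$) together with a union bound over a polynomial-size family, and then feed any witness $w$ into Lemma~\ref{lem:dhat-error} with $\delta_1,\delta_2\in[0,\delta(x)]$ and $\|u-w\|+\|w-v\|-\|u-v\|=\delta_1-\delta_2\le\delta(x)$ exactly as you compute. The only (minor) difference is in handling the continuum of $x$-values: rather than restricting the outer minimum to a grid and arguing stability of the bound under perturbation of $x$, the paper covers \emph{all} $x$ at once by observing that each $L(x)$ contains one of a fixed family of $O(\|u-v\|/\delta_{\min})$ smaller lenses of width $\delta_{\min}/2$ (with $\delta_{\min}=\Omega\bigl((\log n)^{2/3}\|u-v\|^{-1/3}\bigr)$), so that nonemptiness of those few lenses implies nonemptiness of every $L(x)$ without any continuity assumption on $\eps_1,\eps_2$.
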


\begin{figure}
\begin{center}
\includegraphics[width=6.5cm]{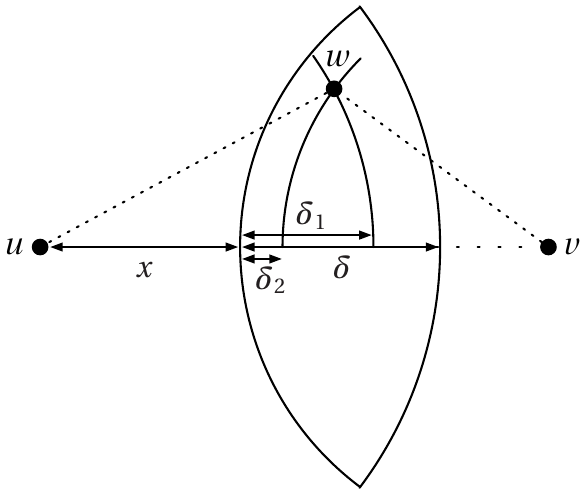}
\end{center}
\caption{The lens $L(x)$ of Lemma~\ref{lem:hybrid-lens}. If $\delta$ is large enough, this lens is nonempty with high probability, in which case we can use any point $w$ in it as an intermediate point for Lemma~\ref{lem:dhat-error}.}
\label{fig:hybrid-lens}
\end{figure}
\begin{proof}
Fix a pair $u,v$, and consider the lens $L(x)$ of width $\delta$ consisting of the intersection of the balls centered at $u$ and $v$ of radius $r_1 = x+\delta$ and $r_2 = \|u-v\| - x$ respectively as shown in Figure~\ref{fig:hybrid-lens}. Lemma~\ref{lem:lens-width-area} and $x \le r_1$ shows that the area of $L(x)$ is proportional to $C^{3/2} \log n$.

While $x$ takes an infinite number of values, for all $x$ we have 
\[
\delta(x) \ge \delta_{\min} = \Omega\!\left( (\log n)^{3/2} \|u-v\|^{-1/3} \right) \, . 
\]
Now consider $2 \|u-v\| / \delta_{\min}$ lenses of width $\delta_{\min}/2$, where $x$ is an integer multiple of $\delta_{\min}/2$. For any $x$, $L(x)$ contains one of these smaller lenses, so if they are all nonempty, so is $L(x)$ for all $x$. Each of these small lenses also has area $\Omega(\log n)$, and there are $O(\|u-v\|/\delta_{\min}) = O(\|u-v\|^{4/3})=O(n^{2/3})$ of them. Thus if we set the constant $C$ large enough for the area of the smaller lenses to be $3 \log n$, say, the probability that any of them are empty, for any pair $u,v$, is $o(1)$. 

Hence $L$ contains at least one vertex $w$ with high probability. Applying Lemma~\ref{lem:dhat-error} to $w$, writing $\|u-w\| = x + \delta_1$ and $\|w-v\| = \|u-v\|-x-\delta_2$ as shown in Figure~\ref{fig:hybrid-lens}, and pessimistically maximizing $\eps_1(u,w)$ and $\eps_2(w,v)$ over the lens yields the desired upper bound on the error of $\dhat$. 
\end{proof}

We use the previous lemma to break the $\Omega(r)$ barrier for the error in estimating Euclidean distances in $G \in \RG$.

Now assume $v$ is deep, and define $d_1$ and $d_2$ as follows:
\begin{align}\label{eq:dmind1d2}
d_1(u,v) &= r d_G(u,v) , \nonumber \\
d_2(u,v) &= \begin{cases} 
\dtilde(u,v) + C_2 \sqrt{\log n} & \text{if $d_G(u,v) \le 2$} , \nonumber \\
+\infty & \text{otherwise} \, ,
\end{cases}
\end{align}
where $\dtilde$ is defined as in equation~\eqref{eq:dtilde} and $C_2$ is the constant in equation~\eqref{eq:dtilde-good}. Thus $d_1$ is the upper bound of Corollary~\ref{cor:kappa}, and $d_2$ is the precise short-range estimate $\dtilde$ of Lemma~\ref{lem:adjacent-distances} with a small increment to make it an upper bound on Euclidean distance with high probability.

\begin{remark}\label{rem:dijkstra}
Given this choice of $d_1$ and $d_2$, the hybrid estimate $\dhat(u,v)$ is the graph distance from $u$ to $v$ in a weighted graph $G_v$ where each edge $(w,v)$ with $d_G(w,v) \le 2$ has weight $\dtilde(w,v) + C_2 \sqrt{\log n}$ and all other edges have weight $r$. Thus, for any fixed $v$, we can compute $\dhat(u,v)$ for all $u$, using a modified breadth-first search algorithm, in $O(n + m + \Delta \log \Delta)$ steps, where $\Delta$ is the degree of vertex $v$.  The only modification to the standard queue-based implementation of BFS, is to sort the level-one nodes by increasing order of $\dtilde$, before placing them in the queue.  Since all other edges have the same weight, $r$, which exceeds the difference between any two vertices in the queue, 
it follows by induction that all vertices will be discovered in increasing order of $\dhat$.
\end{remark}

We bound the errors $\eps_1$ and $\eps_2$ as follows. As discussed after Theorem~\ref{thm:greedy-routing}, for most values of $\Vert u-v\Vert $ we have $\eps_1(\Vert u-v\Vert ) = \Theta(r)$. However, we will choose the lens in Lemma~\ref{lem:hybrid-lens} so that $\Vert u-w\Vert $ is almost a multiple of $r$, in which case Corollary~\ref{cor:kappa} shows that $\eps_1(\Vert u-w\Vert )$ is much smaller.  

For $\eps_2$,  Lemma~\ref{lem:adjacent-distances} implies that, for some absolute constant $C_4$, with high probability
\begin{equation}\label{eq:eps2}
\eps_2(\|u-v\|) \le \begin{cases}  
C_4 \sqrt{\log n} & \text{if $\|u-v\| \le 2r - C_4 r^{-1/3} \log n$} ,  \\
+\infty & \text{otherwise} \, .
\end{cases} 
\end{equation}
Here we used equation~\eqref{eq:kappa} and the upper bound of Theorem~\ref{thm:greedy-routing} to show that with high probability $d_G(u,v) \le 2$ whenever
\[
\|u-v\| 
\le 2r-\kappa(2r) 
= 2r - C_3 r^{-1/3} (2 + \log n) \, ,
\]
and we set $C_4 > \max\{C_3, 2C_2 \}$. 

Having gathered these facts, we will apply Lemma~\ref{lem:hybrid-lens} to $d_1$ and $d_2$ with a judicious choice of lens $L(x)$. First note that, since $d_2(w,v)=+\infty$ if $d_G(w,v) > 2$, we can write the hybrid distance estimate as,
\begin{equation}\label{eq:dhat}
    \dhat(u,v) = \min_{w: d_G(w,v) \le 2} d_1(u,w)+d_2(w,v)
\end{equation}

\begin{theorem}
\label{thm:dhatmind1d2}
Let $r=n^\alpha$ for a constant $0 < \alpha < 1/2$. For all pairs $u, v$ where $v$ is deep, define $\dhat(u,v)$ as in equation~\eqref{eq:dhat}. Then with high probability, $\dhat$ is an upper bound on the Euclidean distance $\Vert u-v\Vert $ with error 
\begin{equation}
\label{eq:epshat}
\epshat(u,v) \le C' \begin{cases}
n^{\frac{1}{2} - \frac{4}{3} \alpha} & \alpha < 3/8, \\
\sqrt{\log n} & 3/8 \le \alpha < 1/2 ,
\end{cases}
\end{equation}
for some absolute constant $C'$. That is,  
\[
\dhat(u,v) - \epshat(u,v) \le \Vert u-v\Vert  \le \dhat(u,v) \, . 
\]
\end{theorem}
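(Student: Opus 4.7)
The plan is to instantiate Lemma~\ref{lem:hybrid-lens} with $d_1 = r\,d_G$ and $d_2 = \dtilde + C_2 \sqrt{\log n}$, whose error functions $\eps_1,\eps_2$ are supplied respectively by Corollary~\ref{cor:kappa} and by~\eqref{eq:eps2}, and then to pick a single well-chosen value of the intermediate radius $x$ in the minimization on the right-hand side of~\eqref{eq:min-over-x}. The guiding intuition is that the long-range estimate $r\,d_G(u,w)$ is poor in general, losing up to $\Theta(r)$, but is sharp to within $\kappa + \delta$ whenever $\|u-w\|$ happens to land in a window of the form $(tr - \kappa - \delta,\,tr - \kappa)$ just below a multiple of $r$. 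Our job is to choose $x$ so that the lens produced by Lemma~\ref{lem:hybrid-lens} sits in the sweet spot where both this ``just-below-a-multiple'' condition on $\|u-w\|$ and the short-range condition $\|w-v\| \le 2r - C_4 r^{-1/3}\log n$ from~\eqref{eq:eps2} hold for every $w$ in the lens.

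Concretely, for any pair $u,v$ with $\|u-v\| > 2r$, let $t$ be the largest integer with $tr \le \|u-v\| - r + C_4 r^{-1/3}\log n$, and set $x = tr - \kappa(\|u-v\|) - \delta(x)$, using $\delta(x) = O((\log n)^{2/3}/r^{1/3})$ since $\min\{x,\|u-v\|-x\} = \Theta(r)$ in this regime. The admissible window for $x$ modulo $r$ has width $r - O(r^{-1/3}\log n)$, which comfortably exceeds $\delta(x)$, so such a $t$ exists. For any $w$ in the lens $L(x)$, the distance $\|u-w\|$ lies in $[tr - \kappa - \delta(x),\,tr - \kappa]$, so Corollary~\ref{cor:kappa} gives $\eps_1(\|u-w\|) \le \kappa(\|u-v\|) + \delta(x)$; meanwhile $\|w-v\| \le \|u-v\| - x \le 2r - C_4 r^{-1/3}\log n$, so~\eqref{eq:eps2} gives $\eps_2(\|w-v\|) \le C_4\sqrt{\log n}$. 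Plugging these bounds into Lemma~\ref{lem:hybrid-lens} yields
\[
\epshat(u,v) \;\le\; \kappa(\|u-v\|) + 2\delta(x) + C_4\sqrt{\log n}.
\]
Pairs with $\|u-v\| \le 2r - C_4 r^{-1/3}\log n$ are handled separately by taking $w = u$ in~\eqref{eq:dhat}, giving $\dhat(u,v) \le d_2(u,v)$ with error at most $C_4\sqrt{\log n}$ directly from Lemma~\ref{lem:adjacent-distances}.

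Finally, I translate the bound to the stated exponents using $r = n^\alpha$ and $\|u-v\| = O(\sqrt{n})$. From~\eqref{eq:kappa} one gets $\kappa = O(n^{1/2}/r^{4/3} + r^{-1/3}\log n) = O(n^{1/2 - 4\alpha/3})$, while $\delta(x) = O((\log n)^{2/3}/n^{\alpha/3}) = o(\sqrt{\log n})$. Thus $\epshat = O(n^{1/2 - 4\alpha/3} + \sqrt{\log n})$; the polynomial term dominates precisely when $1/2 - 4\alpha/3 > 0$, i.e. for $\alpha < 3/8$, and the $\sqrt{\log n}$ term takes over for $\alpha \ge 3/8$, matching~\eqref{eq:epshat}. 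A union bound combines the high-probability events from Theorem~\ref{thm:greedy-routing}, Lemma~\ref{lem:adjacent-distances}, and Lemma~\ref{lem:hybrid-lens} over all pairs. The main obstacle I anticipate is the endpoint bookkeeping: one must verify that pushing $\|u-w\|$ by any $\delta_1 \in [0,\delta(x)]$ never escapes the good window for $\eps_1$, which is exactly why the explicit offset $x = tr - \kappa - \delta(x)$ is preferable to a simpler centered choice, and why the construction requires the window width to dominate $\delta(x)$ — a condition that holds throughout the range $\alpha \in (0,1/2)$.
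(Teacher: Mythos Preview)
Your approach is the same as the paper's: instantiate Lemma~\ref{lem:hybrid-lens} with $d_1=r\,d_G$ and $d_2=\dtilde+O(\sqrt{\log n})$, choose $x$ so the lens lands just below a multiple of $r$, invoke Corollary~\ref{cor:kappa} for $\eps_1$ and~\eqref{eq:eps2} for $\eps_2$, and read off $\epshat\le\kappa+O(\delta)+O(\sqrt{\log n})$. The paper makes the slightly more conservative choice of $t$ as the largest integer with $\|u-v\|>tr+r/2$, which targets $\|w-v\|\le 3r/2+\kappa+\delta$ rather than pushing $\|w-v\|$ all the way up near $2r$.

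Two small bookkeeping slips to fix. First, your claim that $\|u-v\|-x\le 2r-C_4 r^{-1/3}\log n$ does not follow from your definition of $t$: you only get $\|u-v\|-tr<2r-C_4 r^{-1/3}\log n$, and since $x=tr-\kappa-\delta$ you actually have $\|u-v\|-x<2r-C_4 r^{-1/3}\log n+\kappa+\delta$, which can overshoot the threshold in~\eqref{eq:eps2}. Either subtract an extra $\kappa+\delta$ in your choice of $t$, or (as the paper does) aim for $\|w-v\|\approx 3r/2$, which leaves a comfortable $r/2$ margin whenever $\kappa=o(r)$. Second, your case split into ``$\|u-v\|>2r$'' and ``$\|u-v\|\le 2r-C_4 r^{-1/3}\log n$'' leaves the thin strip in between uncovered; your hybrid argument in fact works there too (it only needs $t\ge 1$), so just move the split point. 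For $\alpha\le 3/14$ where $\kappa\ge\Omega(r)$, both your argument and the paper's need the observation that $d_1$ alone already has error $r+\kappa=O(\kappa)=O(n^{1/2-4\alpha/3})$, making the bound trivially true in that regime.
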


\begin{proof}
Below we will bound the error of $\dhat$ as $\epshat = \kappa + O(\sqrt{\log n})$. Recall that if $0 < \alpha < 1/2$, Equation~\eqref{eq:kappa} gives 
\[
\kappa = (1+o(1)) C_3 n^\beta 
\quad \text{where} \quad 
\beta = \frac{1}{2} - \frac{4}{3} \,\alpha \, . 
\]
If $\alpha < 3/8$ then $\beta > 0$ and $\epshat = (1+o(1)) \kappa$. If $\alpha \ge 3/8$ then $\beta \le 0$, $\kappa=O(1)$, and $\epshat = O(\sqrt{\log n})$.

We can upper bound equation~\eqref{eq:min-over-x} with any choice of $x$ between $0$ and $\Vert u-v\Vert $, i.e.\ any lens between $u$ and $v$ that \whp\ contains an intermediate point $w$. We choose this lens as follows. First let $t$ be the largest integer such that $\Vert u-v\Vert  > tr + r/2$. We assume without loss of generality that $t \ge 1$, since if $t=0$ we have $\Vert u-v\Vert  < 3r/2$ and equation~\eqref{eq:eps2} gives $\epshat \le \eps \le \eps_2 = O(\sqrt{\log n})$. 

Let $C$ be the constant required by Lemma~\ref{lem:hybrid-lens}. If we define
\begin{equation}
\label{eq:hybrid-delta}
\delta = C (\log n)^{2/3} (r/2)^{-1/3} 
\end{equation}
and set $x = tr-(\kappa+\delta)$, we have $\min\{x,\Vert u-v\Vert -x\} \ge r/2$, and $\delta$ satisfies the condition~\eqref{eq:delta-big-enough}. That is, the lens $L(x)$ of width $\delta$ shown in Figure~\ref{fig:hybrid-lens} contains at least one point $w$ with high probability. 

Now we can bound the errors in $d_1(u,w)$ and $d_2(w,v)$. First,\ we have chosen $L(x)$ so that $\Vert u-w\Vert $ is almost an integer, i.e.\ $tr - (\kappa+\delta) < \Vert u-w\Vert  < tr - \kappa$. Then Corollary~\ref{cor:kappa} tells us that $d_1(u,w)$ has error bounded by  $\eps_1 \le \kappa+\delta$. Thus, for all $0 \le \delta_1 \le \delta$, the first term of  equation~\eqref{eq:min-over-x} is bounded by $\eps_1(x + \delta_1) \le \kappa + \delta$. 

Second, we have $\Vert v-w\Vert  \le \Vert u-v\Vert  - x \le 3r/2 + \kappa + \delta$, so \whp\ $d_G(w,v) \le 2$ and equation~\eqref{eq:eps2} gives $\eps_2(w,v) \le C_4 \sqrt{\log n}$. Thus, for all $0 \le \delta_2 \le \delta$, the second term of equation~\eqref{eq:min-over-x} is bounded by $\eps_2(\Vert u-v\Vert  - x - \delta_2) \le C_4 \sqrt{\log n}$.

Combining all of this, Lemma~\ref{lem:hybrid-lens} tells us that with high probability
\[
\epshat(\Vert u-v\Vert ) \le \kappa + \delta + C_4 \sqrt{\log n} + \delta 
= \kappa + C_4 \sqrt{\log n} + o(1) \, .
\]
Setting $C' > \max\{C_3,C_4\}$ completes the proof.
\end{proof}
\section{The Reconstruction Algorithm}\label{sec:reconstruction}

In this section we use our distance estimates to reconstruct the positions of the points up to a symmetry of the square. Our global strategy is similar to~\cite{diaz2019learning}: we first fix a small number of ``landmark'' vertices $v$ whose positions can be estimated accurately up to a symmetry of the plane. Then for each vertex $u$ we use the estimated distances $\dhat(u,v)$ to reconstruct $u$'s position by triangulation. In~\cite{diaz2019learning}, the landmarks are vertices close to the corners of the square. Here they will instead be a set of three deep vertices that are far from collinear, forming a triangle which is acute and sufficiently large.  
\begin{lemma}\label{lem:xyz}
Let $x,y,z,u$ be four points in the plane.
Suppose $x,y,z$ form an acute triangle with minimum side length at least $\ell$. Then, if we know the positions of $x,y,z$ with error at most $\eta$, and we have upper bounds $\dhat(u,v)$ on the Euclidean distances $\|u-v\|$ for all $v \in \{x,y,z\}$ with error $\epshat$, and all of these distances are at most $D$, we can determine the position of $u$ relative to $x,y,z$ with error at most 
\begin{equation}\label{eq:xyz}
C_5 \frac{D (\epshat + \eta)}{\ell} ,
\end{equation}
for an absolute constant $C_5$. 
\end{lemma}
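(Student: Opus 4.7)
My plan is to reconstruct $u$ via trilateration from the three landmarks and analyze the error by a first-order perturbation argument. Let $\tilde x, \tilde y, \tilde z$ denote the estimated landmark positions (so $\|\tilde v - v\| \le \eta$), and take $u^*$ to be any point with $\bigl|\|u^* - \tilde v\| - \dhat(u,v)\bigr| \le \epshat + \eta$ for every $v \in \{x,y,z\}$. The true $u$ is such a point, since $|\|u - \tilde v\| - \|u - v\|| \le \eta$ and $|\|u - v\| - \dhat(u,v)| \le \epshat$. Writing $\delta = u^* - u$ and using the identity $\|u^* - v\|^2 - \|u - v\|^2 = 2(u - v)\cdot \delta + \|\delta\|^2$ (together with the $\eta$-shift from $\tilde v$ to $v$), a short calculation yields the linearized bound $|\hat n_v \cdot \delta| \le O(\epshat + \eta) + O(\|\delta\|^2 / \|u - v\|)$ for each $v$, where $\hat n_v = (u - v)/\|u - v\|$. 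Equivalently, $\|A \delta\|_\infty = O(\epshat + \eta)$ modulo the quadratic term, where $A \in \R^{3 \times 2}$ is the matrix with rows $\hat n_v$; hence $\|\delta\| \le \sqrt{3}\,\|A\delta\|_\infty/\sigma_{\min}(A) = O((\epshat+\eta)/\sigma_{\min}(A))$.

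The crux is the geometric bound $\sigma_{\min}(A) \ge c\,\ell/D$ for some absolute constant $c > 0$. Since $\|v - u\| \le D$ gives the PSD matrix inequality $D^2\,\hat n_v \hat n_v^T \succeq (v-u)(v-u)^T$, summing over $v$ yields $A^T A \succeq D^{-2} B B^T$, where $B = [x-u \mid y-u \mid z-u]$ is $2 \times 3$. Letting $\bar v = (x+y+z)/3$ be the centroid of the landmarks and writing $v - u = (v - \bar v) + (\bar v - u)$, the identity $\sum_v (v - \bar v) = 0$ gives $B B^T = C + 3(\bar v - u)(\bar v - u)^T \succeq C$, where $C = \sum_v (v - \bar v)(v - \bar v)^T$ depends only on the shape of the triangle. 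Thus $\sigma_{\min}(A)^2 \ge \lambda_{\min}(C)/D^2$, and it remains to prove $\lambda_{\min}(C) = \Omega(\ell^2)$ for any acute triangle with minimum side $\ell$.

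For this last step, I will use $\lambda_{\min}(C) \ge \det(C)/\mathrm{tr}(C)$. By Cauchy--Binet, together with the fact that the centroid splits the triangle into three sub-triangles of equal area $K/3$ (where $K$ is the triangle's area), $\det(C) = 4 K^2/3$; and $\mathrm{tr}(C) = (a^2 + b^2 + c^2)/3$. The problem therefore reduces to the algebraic claim $16 K^2 \ge \ell^2(a^2 + b^2 + c^2)$ for acute triangles with all sides at least $\ell$. Via Heron's formula, $16 K^2 - \ell^2(a^2+b^2+c^2)$ is a downward-opening quadratic in $p = a^2$ (with $q = b^2$ and $r = c^2$ fixed), so it suffices to verify it at the two ends of the valid range of $p$: at the minimum-side boundary $p = \ell^2$ it reduces to $(q - r)^2 \le \ell^2(q + r - 2\ell^2)$ and follows from $|q - r| < \ell^2$ (a consequence of the acute condition) together with $q, r \ge \ell^2$; at the right-angle boundary $p = q + r$ it reduces to $2qr \ge \ell^2(q + r)$, equivalent to the harmonic mean of $q$ and $r$ being at least $\ell^2$, which again follows from $q, r \ge \ell^2$. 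Combining everything yields $\sigma_{\min}(A) \ge \ell/(2D)$ and hence the desired bound $\|u^* - u\| \le C_5 D(\epshat + \eta)/\ell$ for a suitable absolute constant $C_5$. The main obstacle is the acute-triangle lower bound on $\lambda_{\min}(C)$, where the acuteness hypothesis enters in an essential way; the remaining perturbation analysis is routine, with the quadratic-in-$\delta$ term absorbed once $\|\delta\|$ is small compared to the landmark distances.
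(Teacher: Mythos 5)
Your approach is genuinely different from the paper's, and it is worth contrasting them. The paper never linearizes: it applies the polarization identity $2(y-x)\cdot(u-u') = \|x-u\|^2 - \|x-u'\|^2 - \|y-u\|^2 + \|y-u'\|^2$ to the \emph{difference} of two constraints, so the $\|\delta\|^2$ terms cancel exactly, and then uses the observation that in an acute triangle the three side directions divide the circle of directions (mod $\pi$) into arcs of lengths $A,B,C < \pi/2$, so some side makes an angle $\le \pi/4$ with $u-u'$. That gives the bound directly for any two feasible points, with no smallness assumption. Your route -- passing to the $3\times 2$ matrix of unit vectors $\hat n_v$, bounding $\sigma_{\min}(A)$ via the centroid scatter matrix $C$, and controlling $\lambda_{\min}(C)$ through Cauchy--Binet and Heron -- does work and arguably generalizes more transparently to $\R^m$ (it is essentially what the paper invokes informally in Section~5), but it is considerably heavier than needed here; note also that your key bound $\lambda_{\min}(C) = \Omega(\ell^2)$ has a two-line proof using the paper's angle observation, by picking the side $x-y$ with $|\hat w \cdot (x-y)|\ge \ell/\sqrt2$ and noting $\max(|\hat w\cdot(x-\bar v)|,|\hat w\cdot(y-\bar v)|)\ge \ell/(2\sqrt2)$.

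There are, however, two gaps you should address. First, the linearized constraint $|\hat n_v\cdot\delta|\le O(\epshat+\eta)+O(\|\delta\|^2/\|u-v\|)$ requires a bootstrap: you need $\|\delta\|$ already known to be small compared to $\|u-v\|$ (and to $\sigma_{\min}(A)\min_v\|u-v\|$), but the lemma is stated for arbitrary $u$, which may be arbitrarily close to a landmark, and a priori the feasible set could have large diameter. Subtracting pairs of constraints via the polarization identity would eliminate the quadratic term exactly and close this gap. Second, in the Heron step you fix $q,r$ and check the concave quadratic in $p$ at $p=\ell^2$ and $p=q+r$, invoking $|q-r|<\ell^2$ at the left endpoint; but when $|q-r|>\ell^2$ the left endpoint of the valid range is $p=|q-r|$, not $p=\ell^2$, and $f(\ell^2,q,r)=\ell^2(q+r-2\ell^2)-(q-r)^2$ is actually negative there (e.g.\ $q=100\ell^2$, $r=\ell^2$). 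The inequality does hold at $p=|q-r|$ -- a short computation gives $f(q-r,q,r)=4r(q-r)-2\ell^2 q \ge 2\ell^2(r-\ell^2)\ge 0$ when $q>r+\ell^2$ and $r\ge\ell^2$ -- so your conclusion survives, but that is the case you must actually verify.
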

%
\begin{proof}
First, let us assume fixed positions for $x,y,z$ within $\eta$ of their estimated positions (which we can always do so that they form an acute triangle). By the triangle inequality, this changes the distances $\|u-v\|$ for $v \in \{x,y,z\}$ by at most $\pm\eta$. Thus $u$ is in the intersection $U$ of three annuli,
\begin{equation}
\label{eq:annuli}
U = \bigcap_{v \in \{x,y,z\}}
B(v, \dhat(u,v) + \eta) \setminus
B(v, \dhat(u,v) - \eta - \epshat) \, .
\end{equation}
Any point $u'$ in $U$ gives an approximation of $u$'s position with error at most the Euclidean diameter of $U$, namely $\max_{u,u' \in U} \|u-u'\|$. We will show this diameter is bounded by equation~\eqref{eq:xyz}.

We use some basic vector algebra. Let $\eps' = \epshat + 2\eta \le 2(\epshat+\eta)$. For any $u,u' \in U$ we have, for all $v \in \{x,y,z\}$,
\[
-\eps' \le \Vert  u - v \Vert  - \Vert  u' - v\Vert  \le \eps' \, .
\]
Since the triangle $x,y,z$ is acute, at least one of its sides makes an angle $\varphi$ with the vector $u - u'$ where $0 \le \varphi \le \pi/4$.  Taking this side to be $(x,y)$ we have, without loss of generality,
\[
(y - x) \cdot (u - u') = \Vert  y - x \Vert  \; \Vert u - u'\Vert  \cos \varphi \ge \Vert  y - x \Vert  \; \Vert u - u'\Vert  \sqrt{\frac12} .
\]
Next, we rewrite this dot product as follows,
\begin{align*}
2 (y-x) \cdot (u - u') 
&= \Vert x - u\Vert ^2 - \Vert x - u'\Vert ^2 - \Vert  y - u\Vert ^2 + \Vert y - u'\Vert ^2 \\
&= (\Vert x-u\Vert  - \Vert x-u'\Vert )(\Vert x-u\Vert +\Vert x-u'\Vert ) \\
& \ \ \ \  - (\Vert y-u\Vert  - \Vert y-u'\Vert )(\Vert y-u\Vert +\Vert y-u'\Vert ) \\
& \le  \eps' (\Vert x-u\Vert +\Vert x-u'\Vert  + \Vert y-u\Vert +\Vert y-u'\Vert ) \, ,
\end{align*}
where the first line is a classical polarization identity. Putting these together, we have
\begin{align*}
\Vert u - u'\Vert  
&\le \frac{\sqrt 2}{\Vert y-x\Vert } \,\eps' 
\big( \Vert x-u\Vert +\Vert x-u'\Vert  + \Vert y-u\Vert +\Vert y-u'\Vert  \big) 
 \le \frac{4 \sqrt{2} R \eps'}{\ell} \, ,
\end{align*}
completing the proof with $C_5 = 8\sqrt{2}$.
\end{proof}
\begin{remark}
In our application, $D$ is at most the diameter $\sqrt{2n}$ of the square, $\ell=\Omega(\sqrt{n})$, 
and $\eta=O(\epshat)$. Thus we can reconstruct $u$'s position relative to $x,y,z$ with error 
$O(\epshat)$.
\end{remark}
\begin{theorem}\label{thm:main}
Let $r=n^\alpha$ for a constant $0 < \alpha < 1/2$. There is an algorithm with running time $O(n^2)$ that with high probability reconstructs the vertex positions of a random geometric graph, modulo symmetries of the square, with distortion an absolute constant times times $\epshat$ as defined in~\eqref{eq:epshat}, i.e.\ 
\begin{equation*}
d^* = C'' \begin{cases}
n^{\frac{1}{2} - \frac{4}{3} \alpha} & \mbox{ if } \alpha < 3/8 , \\
\sqrt{\log n} & \mbox{ if } 3/8 \le \alpha < 1/2 ,
\end{cases}
\end{equation*}
for some absolute constant $C''$.
\end{theorem}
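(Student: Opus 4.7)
The plan is to convert the distance oracle built in Section~\ref{sec:dist} into a coordinate reconstruction by triangulating from three well-chosen deep landmark vertices, exactly in the spirit of Lemma~\ref{lem:xyz}.

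\textbf{Algorithmic phase.} I would first identify deep vertices by thresholding the row sums of $A+A^2$ at $11r^2$. Then I would run an all-pairs shortest path subroutine, e.g., Seidel's algorithm in $O(n^\omega \log n)$ time, to obtain $d_G$ and hence $d_1(u,v) = r\,d_G(u,v)$. The quantities $|N(u)\cap N(v)|$ and $|N(u)\setminus N(v)|$ come directly from $A^2$ in $O(n^\omega)$ time and yield $\dtilde$, and hence $d_2$. Since we only need $\dhat(u,v)$ for $v$ in a constant-size landmark set, and $d_2(w,v)=+\infty$ unless $d_G(w,v)\le 2$, each value $\dhat(u,v) = \min_{w:d_G(w,v)\le 2}\bigl(d_1(u,w)+d_2(w,v)\bigr)$ is a minimum over $O(r^2)=O(n)$ terms and can be computed in $O(n^2)$ time per landmark. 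The dominant cost is Seidel's APSP, matching the stated $O(n^\omega\log n)$.

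\textbf{Landmark selection.} By Lemma~\ref{lem:lots-of-deep} there are $\Omega(n)$ deep vertices spread throughout $[2r,\sqrt{n}-2r]^2$ with high probability. I would pick any deep $x$, then a deep $y$ maximizing $\dhat(x,y)$, then a deep $z$ maximizing the minimum of $\dhat(x,z)$ and $\dhat(y,z)$. By Theorem~\ref{thm:dhatmind1d2} we have $\dhat = \|\cdot\| + O(\epshat)$ with $\epshat = o(\sqrt{n})$, so these greedy choices approximate the true maxima within $O(\epshat)$, forcing $\|x-y\|,\|x-z\|,\|y-z\|=\Omega(\sqrt{n})$ and, after a short geometric argument (or a secondary refinement picking $z$ to maximize its estimated distance to the midpoint of $xy$), an acute triangle with minimum side $\ell=\Omega(\sqrt{n})$. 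I would then fix candidate positions $\hat x,\hat y,\hat z$ in a working coordinate frame by trilateration from the pairwise $\dhat$, obtaining positions that agree with the true $x,y,z$ up to a rigid motion of the plane within error $\eta=O(\epshat)$.

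\textbf{Triangulation and alignment.} For every remaining vertex $u$ I would apply Lemma~\ref{lem:xyz} with $D\le\sqrt{2n}$, $\ell=\Omega(\sqrt{n})$, and $\eta=O(\epshat)$; the lemma produces a reconstructed position $\hat u$ with error $C_5 D(\epshat+\eta)/\ell = O(\epshat)$, matching the two cases of~\eqref{eq:epshat}. To pass from ``up to a rigid motion of the plane'' to ``up to a symmetry of the square,'' I would align the working frame to $[0,\sqrt{n}]^2$ by snapping the axis-aligned bounding box of $\{\hat u\}$ to the true square; since the reconstructed cloud is within $O(\epshat)$ of the true one and the latter fills the square up to a boundary strip of width $o(\sqrt{n})$, this alignment is correct up to one of the eight elements of the dihedral group $D_8$ and contributes only $O(\epshat)$ extra error. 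A union bound over the high-probability events of Lemmas~\ref{lem:deep-are-safe} and~\ref{lem:lots-of-deep} and Theorem~\ref{thm:dhatmind1d2} completes the argument. The main obstacle I anticipate is the landmark step: I must be certain that the greedy procedure produces a sufficiently acute and balanced triangle despite only seeing the noisy estimates $\dhat$, and that the bounding-box alignment really does pick out $D_8$ and no larger ambiguity. Both reduce to short planar-geometry arguments relying on the fact that deep vertices fill $[2r,\sqrt{n}-2r]^2$ fairly densely.
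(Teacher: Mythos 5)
Your overall plan matches the paper's: compute $d_G$ via Seidel's APSP (getting $A^2$ and hence $\dtilde$ and deepness labels for free), pick three deep landmarks forming a large acute triangle, triangulate every other vertex via Lemma~\ref{lem:xyz}, and finally align the working frame to $[0,\sqrt{n}]^2$. The running-time accounting and the invocation of Theorem~\ref{thm:dhatmind1d2} and Lemma~\ref{lem:xyz} are correct. However, two of your steps are handled quite differently in the paper, and as you wrote them they have genuine gaps.

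\textbf{Landmark selection.} Your greedy rule (pick $x$, then $y$ maximizing $\dhat(x,y)$, then $z$ maximizing $\min\{\dhat(x,z),\dhat(y,z)\}$) drives all three landmarks toward corners of the deep region: $y$ lands near the corner opposite $x$, and $z$ lands near one of the remaining two corners, giving an approximately right isosceles triangle. This is not acute, which is exactly the hypothesis Lemma~\ref{lem:xyz} needs (the proof uses that every direction is within $\pi/4$ of a side direction, which fails at $90^\circ$). Your "secondary refinement" of maximizing distance to the midpoint of $xy$ has the same failure mode, since that midpoint is near the center. The paper avoids this by deliberately choosing \emph{medium}-distance landmarks: it takes any triple of deep vertices whose pairwise graph distances all lie in a fixed window $[c_1\sqrt{n}/r,\,c_2\sqrt{n}/r]$, which forces the triangle to be nearly equilateral and bounded away from degenerate; such a triple is found either by a simple two-pass scan from a central vertex or by random sampling.

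\textbf{Alignment to the square.} After triangulation you have positions correct only up to an arbitrary rotation and reflection of the plane. The axis-aligned bounding box of a rotated square is \emph{not} a $\sqrt{n}\times\sqrt{n}$ square (it is strictly larger unless the rotation is a multiple of $\pi/2$), so "snapping the axis-aligned bounding box to the true square" does not identify the rotation and does not reduce the ambiguity to $D_8$. You would need something like a minimum-area oriented bounding rectangle, but that is neither what you wrote nor what the paper does. The paper instead appeals to a fact from~\cite{diaz2019learning}: w.h.p.\ the four lowest-degree vertices (chosen to be pairwise nonadjacent) lie within $O(\sqrt{\log n})$ of the four corners. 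From these approximate corners one directly reads off the rotation and translation, adding only $O(\sqrt{\log n})=O(\epshat)$ to the distortion. Both of your flagged obstacles are real, and the paper resolves them by these two concrete devices rather than by the greedy/bounding-box route.
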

%
\begin{proof} 

We use the fact, proved in~\cite{diaz2019learning}, that \whp\ the true positions of the lowest-degree vertices are within $\sqrt{\log n}$ of the corners of the square, and we can find those lowest degree vertices  in $O(n^2)$. Call these vertices $a,b,c,d$. 

We construct a good triple, i.e., deep vertices $x,y,z$ that form an acute triangle with a minimum side length at least $\ell=0.1 \sqrt{n}$. Recall that by the bounds of Theorem~\ref{thm:greedy-routing}, with high probability any triple of deep vertices whose graph distances are in the range $[0.1 \sqrt{n}/r, 0.14 \sqrt{n}/r]$ qualifies.

There are many ways to find such a triple. One is to find a vertex $x$ deep inside the square, e.g., with a graph distance at least $0.65 \sqrt{n}/r$ from $a,b,c,d$. We can then take $y$ to be any vertex such that $d_G(x,y)$ is in the interval $[0.1 \sqrt{n}/r, 0.14 \sqrt{n}/r]$, and $z$ to be any vertex such that $d_G(x,z)$ and $d_G(y,z)$ are both in this interval. At each stage of this process such a vertex exists with very high probability, and by Theorem~\ref{thm:greedy-routing} and Lemma~\ref{lem:lots-of-deep} all of them are deep. Finding $x$ takes $O(n)$ time by breadth-first search from $a,b,c,d$, and finding $y$ and $z$ similarly takes $O(n)$ time each. (A randomized algorithm can simple sampling triples uniformly at random: since a constant fraction of triples qualify, this succeeds with high probability within $O(\log n)$ tries.)

Theorem~\ref{thm:dhatmind1d2} gives us estimated side lengths $\dhat(x,y), \dhat(y,z), \dhat(x,z)$ that have error $\epshat$. This lets us estimate the positions of $x,y,z$ modulo an isometry of $\R^2$, i.e.\ a translation, rotation, or reflection of the plane. Equivalently, it lets us construct a triangle $x, y, z$ which is congruent to their true positions up to distortion~$\eta = O(\epshat)$. Then we use Lemma~\ref{lem:xyz} to reconstruct the position of each vertex $u$ relative to this triangle. Given $\dhat(u,v)$ for all $v \in \{x,y,z\}$, in~\eqref{eq:xyz} we have $D \le \sqrt{2n}$ and $\ell \ge 0.1 \sqrt{n}$, 
Lemma~\ref{lem:xyz} gives us $u$'s position relative to $x,y,z$ with error $10 \sqrt{2} C_5 (\eps+\eta) = O(\epshat)$. 

This gives us a reconstruction up to an isometry as shown in Figure~\ref{fig:rotated}. Finally, we rotate and translate this reconstruction to the square $[0,\sqrt{n}]^2$. It is easy to compute an isometry that sends $\{a,b,c,d\}$ to the corners $\SRn$ with error $d^* = O(\sqrt{\log n})$: for instance, translate $a$ to $(0,0)$ and then rotate one of the closer corners to $(0,\sqrt{n})$. This gives a reconstruction which, up to a rotation or reflection of the square, has distortion $d^* = O(\epshat+\sqrt{\log n})=O(\epshat)$.

Step~1 can be done by breadth-first search, first from $a,b,c,d$ and then from $x$ and $y$, and thus takes $O(n)$ time. 
The bulk of the running time comes from computing the graph distances. Notice we have a few real-valued and geometric calculations to do. 
In order to compute $\dhat(u,v)$ for all $u$ and all $v \in \{x,y,z\}$, we need $\dtilde(w,v)$ for all $v \in \{x,y,z\}$ and all $w$ with $d_G(w,v) \le 2$. There are $O(n^2)$ such $w$. Inverting the function $F$ in equation~\eqref{eq:dtilde} is a matter of arithmetic; we need to do this to $O(\log r) \le O(\log n)$ bits of precision, which can be done in $\polylog(n)$ time. Thus computing these $\dtilde$ can be done in time $O(r^2 \,\polylog(n))$, which is $o(n)$ since $\alpha < 1/2$.

Once we have the distance estimates $\dhat(u,v)$, finding estimated positions $u$ in each region $U$ defined in equation~\eqref{eq:annuli} is a geometric calculation which can be done to $O(\log n)$ bits of precision in $\polylog(n)$ time as in~\cite{diaz2019learning}. The same is true of computing the angle by which we need to rotate the reconstruction to $[0,\sqrt{n}]^2$ after translating one corner to the origin. 

Since the typical degree in the graph is $\pi r^2 = O(n^{2\alpha})$ where $\alpha < 1/2$, and since Dijkstra's algorithm in a graph with $n$ vertices and $m$ edges runs in time $O(m+n \log n)$, the running time is \whp\ $O(n^{2\alpha}+1)=o(n^2)$. Thus the total running time is dominated by that of Dijkstra's algorithm, which we generously bound as $O(n^2)$.
\end{proof}

\begin{figure}
    \centering
    \includegraphics[width=9cm]{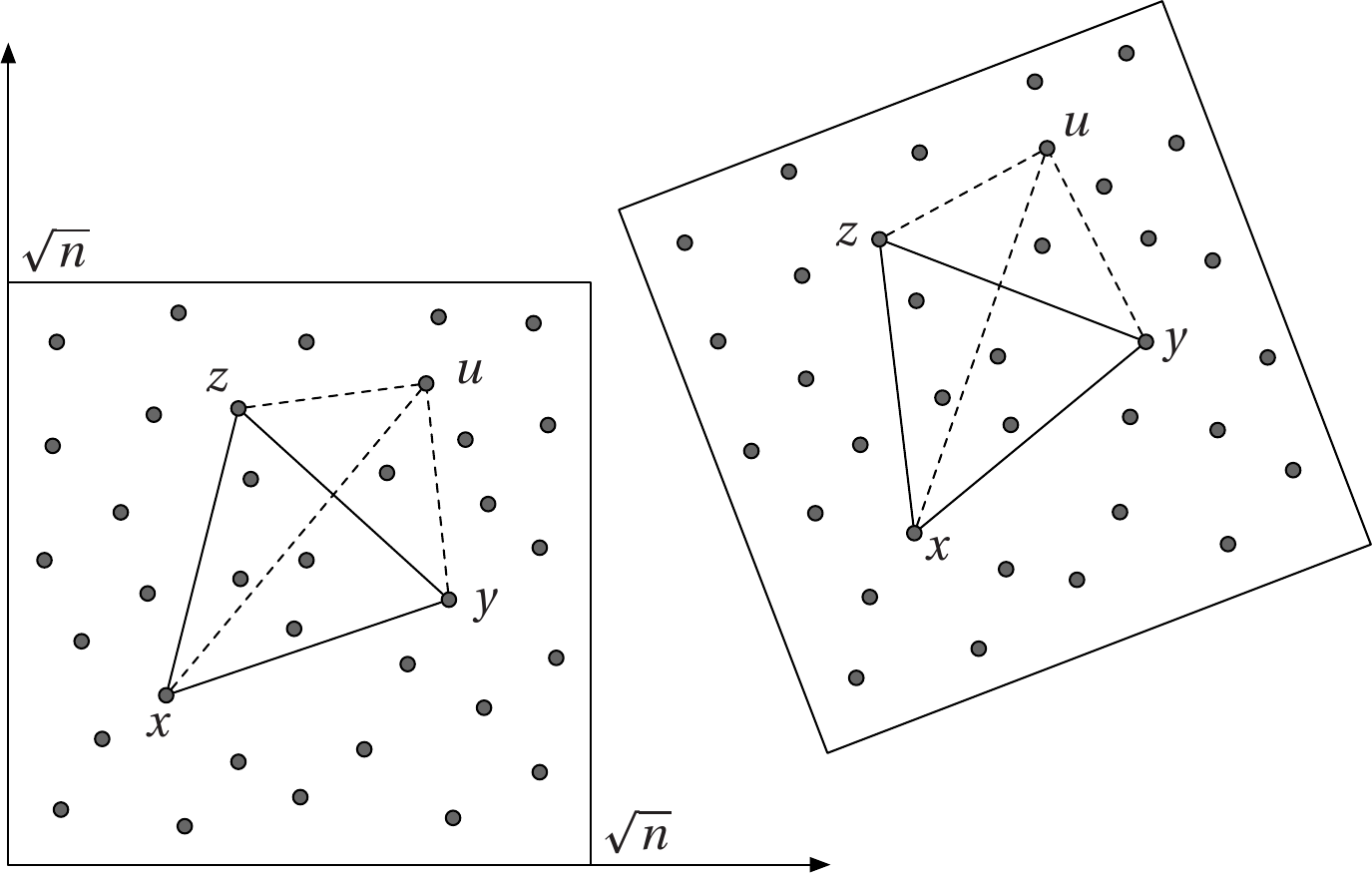}
    \caption{Our reconstruction is built around a triangle $x,y,z$ of deep vertices. It may be translated, rotated, or reflected in $\R^2$ by an isometry, but it can then be shifted to the square $[0,\sqrt{n}]^2$. Then it will be a good reconstruction up to a rotation or reflection of the square.}
    \label{fig:rotated}
\end{figure}

\begin{remark}
After we have reconstructed the positions of all vertices, i.e. their coordinates, we can extend our distance estimates to all the pairs $u,v$ of vertices that are not deep, simply by estimating $\Vert u-v \Vert$ as the distance between the reconstructed positions of $u$ and $v$.
\end{remark}

\section{Extensions to Other Domains}\label{sec:extensions}
Our results can be generalized from the square to a number of alternative domains for random geometric graphs, including to higher-dimensional Euclidean space and to some curved manifolds. Here we show that our results extend to the $m$-dimensional hypercube and the $3$-dimensional sphere, solving an open problem posed in~\cite{diaz2019learning}.
\subsection{Reconstruction in higher-dimensional Euclidean space}
The simplest generalization is where the underlying domain is $[0,n^{1/m}]^m \subset \R^m $, 
i.e., an $m$-dimensional hypercube with volume $n$. where $m$ is fixed, i.e., it does not vary with $n$. Let the radius of \jdc{$G$} be $r$. Then, $n$ points are selected at random from the hypercube and pairs of  them are adjacent if they are within Euclidean distance $r$. As before, the goal is to reconstruct the points' positions based on the adjacency matrix of the graph. 

The following lemma generalizes Lemm~\ref{lem:lens-width-area}, and relates the width of a lens defined as the intersection of two balls of radius $r$ to its $m$-dimensional volume. 

\begin{lemma} \label{lem:lens-width-volume}
Let $B_1$, $B_2$ be two overlapping Euclidean balls of radius $r_1 \le r_2$, respectively, in $\R^m$.  
Consider the lens $L = B_1 \cap B_2$.  
Let $\eps$ denote the width of $L$, that is, 
\[
\eps = \min\{r_1 + r_2 - d, 2r_1 \},
\]
where $d$ is the distance between the two centers.
Then the volume $V$ of $L$ satisfies
\[
V = \Theta\left( \sqrt{\eps^{m+1} r_1^{m-1}} \right),
\]
where the hidden constant depends only on $m$.
\end{lemma}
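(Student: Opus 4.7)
The plan is to parametrize the lens by slicing perpendicular to the line joining the two centers, reducing the volume computation to a one-dimensional integral of $(m-1)$-dimensional ball volumes. First I would dispose of the easy case: if $\eps = \Theta(r_1)$, then $L$ contains a constant fraction of the smaller ball $B_1$, so $V = \Theta(r_1^m)$, which matches $\Theta(\sqrt{\eps^{m+1} r_1^{m-1}})$. Thus we may assume $\eps \le r_1/10$ (say) in the sequel.

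Place coordinates so that the center of $B_1$ is at the origin and the center of $B_2$ is at $(d, 0, \ldots, 0)$. The lens $L$ lives in the slab $t \in [d - r_2, r_1]$ of length $r_1 + r_2 - d = \eps$, and at each $t$ the cross-section is an $(m-1)$-dimensional ball of radius
\[
\rho(t) = \min\!\left\{ \sqrt{r_1^2 - t^2},\ \sqrt{r_2^2 - (d-t)^2} \right\}.
\]
Changing variables to $s = r_1 - t \in [0, \eps]$ and using the assumption $\eps \ll r_1 \le r_2$ to factor $r_1^2 - t^2 = (r_1 - t)(r_1 + t)$ (and similarly for the $B_2$ term), one obtains
\[
\rho(t) = \bigl(1 + O(\eps/r_1)\bigr) \min\!\left\{ \sqrt{2 r_1 s},\ \sqrt{2 r_2 (\eps - s)} \right\}.
\]
Hence $V = \Theta\!\bigl( \int_0^\eps \rho(s)^{m-1} \, ds \bigr)$ up to the dimensional constant $c_{m-1}$ (the volume of the unit $(m-1)$-ball).

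For the upper bound, simply drop the $B_2$ term:
\[
V \le c_{m-1} \int_0^\eps (2 r_1 s)^{(m-1)/2} \, ds = O\!\bigl( r_1^{(m-1)/2} \eps^{(m+1)/2} \bigr).
\]
For the matching lower bound, restrict the integration to the window $s \in [\eps/4, \eps/2]$. On this window $\sqrt{2 r_1 s} \ge \sqrt{r_1 \eps / 2}$, and using $r_2 \ge r_1$ we also have $\sqrt{2 r_2 (\eps - s)} \ge \sqrt{r_2 \eps} \ge \sqrt{r_1 \eps}$, so $\rho \ge \Omega(\sqrt{r_1 \eps})$ throughout. Integrating gives $V \ge \Omega(\eps \cdot (r_1 \eps)^{(m-1)/2}) = \Omega(r_1^{(m-1)/2} \eps^{(m+1)/2})$, which matches.

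The only slightly delicate step is justifying the factorization $r_1^2 - t^2 \approx 2 r_1 (r_1 - t)$ and its $B_2$ analogue with uniform constants; that is why it is cleanest to handle the two regimes ($\eps$ comparable to $r_1$, versus $\eps$ much smaller than $r_1$) separately. Otherwise the argument is essentially a one-variable calculus exercise, and by construction it recovers Lemma~\ref{lem:lens-width-area} when $m = 2$ and $r_1 = r_2 = r$.
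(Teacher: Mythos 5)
Your proof is correct, and it takes a slightly different route from the paper. The paper sandwiches the lens between a circumscribed cylinder (height $\eps$, radius $O(\sqrt{r_1\eps})$, giving the upper bound) and an inscribed cylinder obtained by scaling the circumscribed one down by a factor of $2$ (giving the lower bound); the only computation is the volume of a single cylinder. You instead compute the volume by Cavalieri's principle, integrating the $(m-1)$-dimensional cross-sectional volume along the axis, dropping one of the two constraints for the upper bound and restricting to a constant-fraction window for the lower bound. The integration route is more explicit and mechanical, and it avoids having to argue geometrically that a translated half-scale cylinder fits inside the lens; the paper's sandwich argument is shorter but relies on that geometric observation. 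One small tightening worth making in your write-up: the case split at $\eps \le r_1/10$ is not really necessary, since the exact factorizations $r_1^2 - t^2 = s(2r_1 - s)$ and $r_2^2 - (d-t)^2 = (\eps-s)(2r_2 - \eps + s)$ already give $\rho(s)^2 \ge \eps r_1/4$ on the window $s \in [\eps/4,\eps/2]$ for all $\eps \le 2r_1$, and the upper bound $\rho(s)^2 \le 2r_1 s$ likewise holds unconditionally, so the $(1+O(\eps/r_1))$ Taylor step can be replaced by these exact two-sided inequalities and the ``easy case'' claim that $L$ contains a constant fraction of $B_1$ (which itself needs a line of justification) can be dropped.
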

\begin{proof}
Assume $\eps \le r_1$, since otherwise the upper bound follows trivially from the volume of the ball.
Our lower bound for the case $\eps > r_1$ will follow from the case $\eps \le r_1$.

Let $C$ be the cylinder circumscribed around $L$, with axis of symmetry the line joining the centers of $B_1$ and $B_2$.  See Figure~\ref{fig:lens-area}. Then $C$ has height $\eps$, and radius $\sqrt{r_1^2 - (r_1-\eps_1)^2}
= \sqrt{r_2^2 - (r_2-\eps_2)^2} = O(\sqrt{r_1 \eps_1}) = O(\sqrt{r_2 \eps_2})$, where $\eps_1, \eps_2$ are the heights of the two spherical segments comprising $L$.  Since $\eps_1 \le \eps$, 
we find that the volume of this cylinder is $O(\eps (\sqrt{r_1 \eps})^{m-1})$, as desired.

\begin{figure}
\begin{center}
\includegraphics[width=4in]{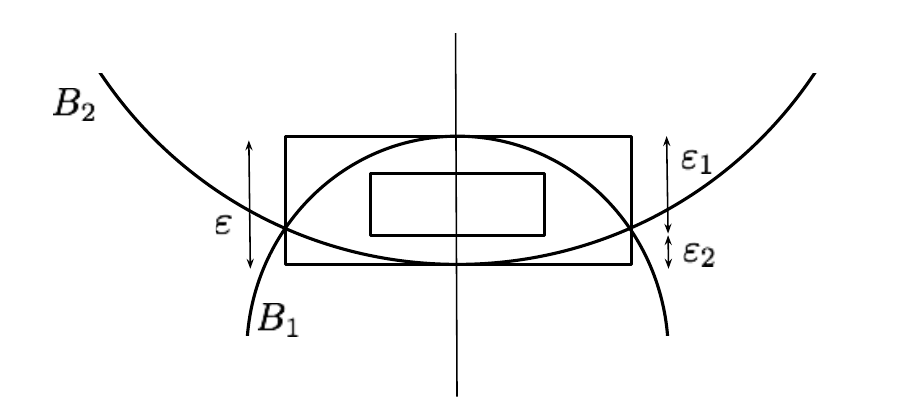}
\end{center}
\caption{The figure illustrates how we can bound the area of a lens using inscribed and circumscribed cylinders.}
\label{fig:lens-area}
\end{figure}

Similarly, if we scale down the height and radius of $C$ by a factor of $2$, and translate it appropriately,
we can place this re-scaled cylinder completely within the lens, $L$.  Since this changes the volume of the cylinder by a factor of $2^{-m}$, which depends only on $m$, this establishes the desired lower bound.
\end{proof}

Given this relation between the width and volume of the lens, analogously to the work in Section~\ref{sec:dist} , we can compute both short and long range estimates of the distance, and then combine them into a hybrid estimate. The error in the hybrid estimate is given by the following theorem.
\begin{theorem}
\label{thm:highdim-dhatmind1d2}
Let $r=n^\alpha$ for a constant $0 < \alpha < 1/m$. For all pairs $u, v$ where $v$ is deep, define $\dhat(u,v)$ be the hybrid estimate of the distance.  Then with high probability, $\dhat$ is an upper bound on the Euclidean distance $\|u-v\|$ with error 
\begin{equation}
\label{eq:hd-epshat}
\epshat(u,v) \le C_m \begin{cases}
n^{\frac{1}{m} - \frac{2m}{m+1} \alpha} & \alpha < \frac{m+1}{2m^2} , \\
\sqrt{\log n} &  \frac{m+1}{2m^2}\le \alpha < \frac{1}{m} ,
\end{cases}
\end{equation}
for some dimension-dependent constant $C_m$.
\end{theorem}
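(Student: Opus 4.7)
The plan is to generalize each ingredient of Section~\ref{sec:dist} from two to $m$ dimensions, with Lemma~\ref{lem:lens-width-volume} playing the role of Lemma~\ref{lem:lens-width-area}. First I would let $F_m(x)$ denote the $m$-dimensional volume of the lune $B(v,r)\setminus B(w,r)$ when $\|v-w\|=x$; a direct computation gives $F_m'(x) = \vol(B_{m-1}(\sqrt{r^2-(x/2)^2}))$, which is $\Theta(r^{m-1})$ for $0\le x\le r$. Inverting $F_m$ then yields an analog of Lemma~\ref{lem:adjacent-distances} with short-range error $O(\sqrt{(\log n)\,d/r^{m-1}} + (\log n)/r^{m-1})$. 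For pairs at graph distance two, the intersection $\vol(B_m(r)) - F_m(x)$ has derivative $\Theta((2r-x)^{(m-1)/2} r^{(m-1)/2})$ near $x=2r$ by Lemma~\ref{lem:lens-width-volume}, producing the appropriate analog of Lemma~\ref{lem:2-apart-distances}.

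Next I would redo the greedy-routing analysis of Theorem~\ref{thm:greedy-routing} almost verbatim. The fair-coin setup is unchanged: radii around $v$ are chosen so that each successive region of the ``comet tail'' has volume $\ln 2$. By Lemma~\ref{lem:lens-width-volume}, a lens of volume $k\ln 2$ has width $\Theta(k^{2/(m+1)} r^{-(m-1)/(m+1)}) = O(k\,r^{-(m-1)/(m+1)})$, so summing over the $M=O(\|u-v\|/r + \log n)$ coin flips on a greedy path gives aggregate deviation $O(M\,r^{-(m-1)/(m+1)})$. Rearranging produces the $m$-dimensional analog of~\eqref{eq:kappa},
\begin{equation*}
\frac{\kappa}{r} \;=\; C_m \left( \frac{\|u-v\|}{r^{\,1+2m/(m+1)}} + \frac{\log n}{r^{\,2m/(m+1)}} \right),
\end{equation*}
and substituting the diameter bound $\|u-v\|=O(n^{1/m})$ together with $r=n^\alpha$, $\alpha<1/m$ shows that the first term dominates and gives $\kappa = O(n^{1/m - 2m\alpha/(m+1)})$.

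Finally, I would apply the hybrid machinery of Lemmas~\ref{lem:min-dist-estimate}, \ref{lem:dhat-error}, and~\ref{lem:hybrid-lens} following the same pattern as Theorem~\ref{thm:dhatmind1d2}. Requiring the intermediate lens to be nonempty with high probability now translates to width $\delta = \Theta((\log n)^{2/(m+1)} r^{-(m-1)/(m+1)})$ by Lemma~\ref{lem:lens-width-volume}, which is $o(\kappa)$ throughout the regime $\alpha<1/m$. Choosing this lens so that $\|u-w\|$ lands just below an appropriate multiple of $r$ bounds $\eps_1$ by $\kappa+\delta$, while $\|w-v\|=\Theta(r)$ sits a constant fraction below $2r$, so the short-range estimate bounds $\eps_2$ by $O(\sqrt{\log n})$. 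Adding these gives $\epshat = \kappa + O(\sqrt{\log n})$: for $\alpha<(m+1)/(2m^2)$ this is dominated by $n^{1/m-2m\alpha/(m+1)}$, and for $\alpha\ge(m+1)/(2m^2)$ both terms are $O(\sqrt{\log n})$, matching~\eqref{eq:hd-epshat}.

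The step I expect to be most delicate is the short-range lens estimate when $\|v-w\|$ is near $2r$: the exponent $(3-m)/4$ in the $m$-dimensional analog of Lemma~\ref{lem:2-apart-distances} turns negative for $m\ge 4$, so the lens estimate genuinely degrades as the lens shrinks. This is not a true obstacle because the hybrid construction always selects intermediate points with $\|v-w\|$ a constant fraction of $r$ away from $2r$, but it means one must verify that the ``safe'' range of $\|v-w\|$ is preserved by the lens choice, and that the dimension-dependent constants $C_m$ propagate correctly through the Chernoff and union bounds at every stage.
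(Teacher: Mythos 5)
Your proposal is correct, and it follows exactly the route the paper sketches: the paper omits this proof entirely, saying only that it ``closely follows the steps in Section~\ref{sec:dist},'' and your proposal fills in those details faithfully by replacing Lemma~\ref{lem:lens-width-area} with Lemma~\ref{lem:lens-width-volume} throughout (yielding $F_m'(x)=\Theta(r^{m-1})$, lens width $\Theta(V^{2/(m+1)}r^{-(m-1)/(m+1)})$, and hence the $m$-dimensional $\kappa/r$), then re-running the hybrid-lens argument of Theorem~\ref{thm:dhatmind1d2}. Your exponent bookkeeping (in particular $1+(m-1)/(m+1)=2m/(m+1)$ and the crossover at $\alpha=(m+1)/(2m^2)$) checks out, and your observation that the lens-estimate exponent $(3-m)/4$ goes negative for $m\ge4$ is a genuine but harmless subtlety, since the hybrid always chooses $\|w-v\|$ with $2r-\|w-v\|=\Theta(r)$.
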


\noindent We omit the details of the proof except to comment that it closely follows the steps in Section~\ref{sec:dist}. 

In order to use the hybrid estimates for reconstruction, we need to find an appropriate number of deep landmarks. By standard linear algebra, it suffices to have $m+1$ landmarks that form a non-degenerate simplex. As in the proof of Theorem~\ref{thm:main}, we can find an approximately equilateral $m$-simplex in quadratic time, directly from the adjacency matrix of the graph. Finally, once we have reconstructed the relative positions of all the points, it is easy to compute an isometry that shifts the reconstructed hypercube to the origin. 

In order to use the hybrid estimates for reconstruction, we need to find an appropriate number of deep landmarks. By standard linear algebra, it suffices to have $m+1$ landmarks that form a non-degenerate simplex. As in the proof of Theorem~\ref{thm:main}, we can find an approximately equilateral $m$-simplex in quadratic time, directly from the adjacency matrix of the graph. Finally, once we have reconstructed the relative positions of all the points, it is easy to compute an isometry that shifts the reconstructed hypercube to the origin. 

Putting it all together we have the following theorem for reconstruction of RGGs in $[0, n^{1/m}]^m$. 
\begin{theorem}\label{thm:highdim}
Let $r=n^\alpha$ for a constant $0 < \alpha < 1/m$. There is an algorithm with running time $O(n^\omega \log n)$, where $\omega < 2.373$ is the matrix multiplication exponent~\cite{AlmanW21}, that with high probability reconstructs the vertex positions of a random geometric graph, modulo symmetries of the hypercube, with distortion 
\begin{equation*}
d^* \le C_m \begin{cases}
n^{\frac{1}{m} - \frac{2m}{m+1} \alpha} & \mbox{ for } \alpha < \frac{m+1}{2m^2}, \\
\sqrt{\log n} &  \mbox{ for } \frac{m+1}{2m^2}\le \alpha < \frac{1}{m} ,
\end{cases}
\end{equation*}
for some dimension-dependent constant $C_m$.
\end{theorem}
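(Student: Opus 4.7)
The plan is to mirror the 2D reconstruction in Theorem~\ref{thm:main}, replacing triangulation with simplex-based localization and invoking Theorem~\ref{thm:highdim-dhatmind1d2} for the hybrid distance estimates. First I would compute the graph distances $d_G$ for all pairs via Seidel's APSP algorithm, which takes $O(n^\omega \log n)$ time. As a byproduct, squaring the adjacency matrix yields $|N(w)\cap N(v)|$ for every pair, giving us both the common-neighbor counts needed to compute the short-range estimates $\dtilde$ and the labeling of deep vertices (here the threshold in Definition~\ref{def:deep-vert} must be replaced with a dimension-dependent constant times $r^m$, with analogues of Lemma~\ref{lem:deep-are-safe} and Lemma~\ref{lem:lots-of-deep} obtained via the same "flower" construction applied to a regular $m$-dimensional polytope inscribed around a candidate vertex). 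With these ingredients, Theorem~\ref{thm:highdim-dhatmind1d2} provides upper-bound distance estimates $\dhat(u,v)$ with error $\epshat$ as in~\eqref{eq:hd-epshat}, for every $u$ and every deep~$v$.

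Next I would generalize the landmark step. In $\R^m$, one needs $m+1$ non-collinear landmarks whose pairwise distances pin down a point's position by a system of linear equations. Specifically, I would find deep vertices $v_0,v_1,\ldots,v_m$ such that all their pairwise graph distances lie in a window of the form $[c_1 n^{1/m}/r, c_2 n^{1/m}/r]$ for small constants $c_1<c_2$, so that Theorem~\ref{thm:greedy-routing} guarantees their pairwise Euclidean distances are $\Theta(n^{1/m})$ and, with a suitable window, the simplex they form has all edge lengths within a constant factor of each other. A greedy construction analogous to the 2D case works: pick one deep vertex $v_0$ that lies well inside the hypercube, then pick $v_1,\ldots,v_m$ sequentially, each at roughly the right graph distance from all previously-chosen landmarks, in $O(n^2)$ time total. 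With high probability such a simplex exists and is nondegenerate in a quantitative sense (the smallest singular value of the edge-vector matrix is $\Omega(n^{1/m})$).

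Once the landmarks are fixed, I would prove the $m$-dimensional analogue of Lemma~\ref{lem:xyz}: given estimated positions of $v_0,\ldots,v_m$ with error $\eta$ and estimated distances $\dhat(u,v_i)$ with error $\epshat$, the position of $u$ is determined up to error $O(D(\epshat+\eta)/\ell)$, where $D$ is the diameter and $\ell$ is the minimum edge length of the landmark simplex, with the hidden constant depending on the conditioning of the simplex (hence on $m$). The argument uses the same polarization identity $2(v_i-v_0)\cdot(u-u') = \|v_0-u\|^2-\|v_0-u'\|^2-\|v_i-u\|^2+\|v_i-u'\|^2$ applied for each $i$; since the vectors $v_i-v_0$ span $\R^m$ with minimum singular value $\Omega(\ell)$, inverting the resulting linear system bounds $\|u-u'\|$. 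Plugging in $D=O(n^{1/m})$, $\ell=\Omega(n^{1/m})$, and $\eta=O(\epshat)$ yields position error $O(\epshat)$ for every vertex.

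Finally, I would align the reconstruction with the true hypercube up to a symmetry. The reconstruction so far is accurate up to an isometry of $\R^m$; to recover the hypercube orientation, I would generalize the low-degree corner trick from~\cite{diaz2019learning} by picking $2^m$ vertices of minimum degree that are pairwise at graph distance greater than two, verifying (by the same Poisson-emptiness argument) that with high probability these lie within $O(\sqrt{\log n})$ of the $2^m$ corners of $[0,n^{1/m}]^m$. Computing the unique orthogonal transformation and translation mapping these approximate corners onto the true corner set is a fixed-dimension linear-algebraic calculation costing $\polylog(n)$ time. The main obstacle I anticipate is the quantitative simplex conditioning: one must guarantee that the landmark simplex is well-conditioned with high probability, so that the constant $C_m$ in the triangulation bound does not blow up. A random-triple-style argument---sampling $(m{+}1)$-tuples of deep vertices at the right mutual graph distances and arguing that a constant fraction of them form simplices whose Gram matrices have bounded condition number---handles this, since such tuples correspond to points almost uniformly spread in an $\Omega(n^{1/m})$-scale region. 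Combining all of the above gives distortion $d^*=O(\epshat+\sqrt{\log n})=O(\epshat)$ within the promised running time.
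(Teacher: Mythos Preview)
Your proposal is correct and follows essentially the same route as the paper, which itself only sketches the argument: compute all graph distances and common-neighbor counts via Seidel's algorithm, invoke Theorem~\ref{thm:highdim-dhatmind1d2} for the hybrid estimates, choose $m{+}1$ deep landmarks forming an approximately equilateral (hence well-conditioned) simplex, triangulate every vertex via the $m$-dimensional analogue of Lemma~\ref{lem:xyz}, and finally rigidly align with the hypercube. Your additional detail on the polarization-identity localization and the simplex-conditioning issue is exactly what the paper leaves implicit when it says ``we omit further details of the proof.''
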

\subsection{Reconstruction on the sphere}

Finally, we argue that our algorithm also works on some curved manifolds and submanifolds where the geometric graph is defined in terms of geodesic distance. In particular we claim this for the $m$-dimensional spherical (hyper)surface $S_m$ of a ball in $\R^{m+1}$. Here we sketch the proof for the two-dimensional surface of a sphere in $\R^3$. Note that the distortion is now defined by minimizing over the sphere's continuous symmetry group, i.e., over all rotations and reflections of the sphere. 

In previous work, the authors of~\cite{Bubeck16} gave a procedure to distinguish random geometric graphs on $S^m$ from Erd\H{o}s-R\'enyi random graphs. In addition, \cite{araya2019} gave a spectral method for reconstructing random graphs generated by a sparsified graphon model on the sphere, but this does not include the geodesic disk model we study here.

We define random geometric graphs on the sphere as follows. We scale the sphere so that its surface area is $n$, setting its radius to $R=\sqrt{n/(4\pi)}$. We scatter $n$ points uniformly at random on it, or generate them with a Poisson point process with intensity $1$. In either model the expected number of points in a patch of surface is equal to its area. We define the graph as $(u,v) \in E$ if and only if 
$\|u-v\|_g < r$ where $\|u-v\|_g$ is the geodesic distance, i.e., the length of the shorter arc of a great circle that connects $u$ and $v$. If we associate each point $u$ with a unit vector $\vec{u} \in \R^3$ that points toward it from the center of the sphere, $\|u-v\|_g$ is $R$ times the angle between $\vec{u}$ and 
$\vec{v}$. 
\begin{theorem}
\label{thm:sphere}
Let $r=n^\alpha$ for a constant $0 < \alpha < 1/2$. There is an algorithm with running time $O(n^\omega \log n)$ that with high probability reconstructs the vertex positions of a random geometric graph, modulo a rotation or reflection of the sphere, with distortion an absolute constant times $n^{\frac{1}{2} - \frac{4}{3} \alpha}$ if $\alpha < 3/8$ and $\sqrt{\log n}$ if $\alpha \ge 3/8$.
\end{theorem}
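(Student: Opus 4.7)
The plan is to adapt the entire pipeline of Sections~\ref{sec:dist} and~\ref{sec:reconstruction} to spherical geometry, exploiting that $r/R = O(n^{\alpha-1/2}) = o(1)$ so geodesic disks of radius $r$ are locally nearly flat. First I would replace the area function $F$ in~\eqref{eq:F} by its spherical analog $F_S(x)$ giving the area of the geodesic lune $B_g(v,r)\setminus B_g(w,r)$ as a function of the geodesic distance $x$. A direct calculation using $\cos(x/R) = 1 - x^2/(2R^2) + O((x/R)^4)$ for $x \le 2r$ yields $F_S(x) = F(x)\,(1 + O((r/R)^2))$ and analogous bounds for $F_S'$ and the spherical lens area, with relative error $O(n^{2\alpha-1})$. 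These corrections are dominated by every other error term in the chain, so Lemmas~\ref{lem:adjacent-distances} and~\ref{lem:2-apart-distances} carry over with $\dtilde$ defined through $F_S^{-1}$, and the greedy routing analysis behind Theorem~\ref{thm:greedy-routing} goes through once Lemma~\ref{lem:lens-width-area} is replaced by its intrinsic spherical version (proved by the same circumscribed-cylinder argument, performed in the tangent space). The hybrid estimate $\dhat$ consequently continues to satisfy the error bound~\eqref{eq:epshat}.

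Because the sphere has no boundary, every vertex is automatically deep and the machinery of the deep-vertex construction becomes unnecessary. The reconstruction step, however, needs a modification: three landmarks no longer suffice, since in $\R^3$ three distances to points on the sphere pin down an unknown point only up to reflection through the affine plane they span, and this ambiguity cannot be absorbed into a single global element of $O(3)$ if different vertices are reflected inconsistently. I would therefore pick four landmarks $x_1, x_2, x_3, x_4$ whose associated unit vectors $\vec{x_i} \in \R^3$ are well spread out, for instance by selecting four vertices whose pairwise graph distances all lie in $[0.2\sqrt{n}/r,\, 0.3\sqrt{n}/r]$, so that the $\vec{x_i}$ together with the sphere's center span a non-degenerate tetrahedron whose minimal face angle is bounded below by an absolute constant. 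With high probability such a quadruple exists and can be found in $O(n^2)$ time from the graph distances already computed. The pairwise estimates $\dhat(x_i,x_j)$ then place these four landmarks in $\R^3$ up to a global element of $O(3)$, which is exactly the symmetry we quotient by.

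For each remaining vertex $u$ the estimate $\dhat(u,x_i)$ gives the angle $\|u-x_i\|_g/R$ and hence the inner product $\vec{u}\cdot\vec{x_i}$ up to additive error $O(\epshat/R)$. The spherical analog of Lemma~\ref{lem:xyz} that I would prove is then essentially a conditioning statement in linear algebra: the overdetermined $4 \times 3$ system recovering $\vec{u}$ from these four inner products has condition number bounded by an absolute constant whenever the $\vec{x_i}$ are spread out as above, so $\vec{u}$ is recovered with angular error $O(\epshat/R)$, hence geodesic error $O(\epshat)$. The main obstacle is precisely this linear-algebraic lemma, and in particular making it quantitatively clear that the fourth landmark kills the reflection ambiguity uniformly across vertices: without it, two different vertices could independently land on opposite sides of the plane through three landmarks, and no single global element of $O(3)$ could repair the resulting reconstruction. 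Everything else, including the $O(n^\omega \log n)$ running time (dominated by Seidel's algorithm plus $O(n)$ per-vertex trilaterations), follows the same outline as Theorem~\ref{thm:main}.
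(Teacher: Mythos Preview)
Your proposal is essentially correct and mirrors the paper's proof sketch closely: the adaptation of the lune/lens areas via a spherical $F_S$ with relative error $O((r/R)^2)$, the greedy-routing argument, the observation that every vertex is deep, and the $O(n^\omega\log n)$ running time all match the paper's treatment.

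The one substantive difference is in the reconstruction step, and here your worry about reflection ambiguity is misplaced. Three landmarks \emph{do} suffice on the sphere, provided their position vectors $\vec{x_1},\vec{x_2},\vec{x_3}$ are linearly independent in $\R^3$. Geodesic distance on the sphere determines the angle and hence the inner product $\vec{u}\cdot\vec{x_i}$ directly; once you know three inner products against a basis of $\R^3$, the vector $\vec{u}$ is uniquely determined by linear algebra---there is no reflection to resolve. Your reflection through the affine plane spanned by the three landmarks would map the sphere to itself only if that plane passed through the center, i.e.\ only if the $\vec{x_i}$ were linearly dependent, which is exactly what one avoids. The paper exploits this by choosing $x,y,z$ with pairwise geodesic distance within $O(\sqrt{\log n})$ of $(\pi/2)R$, so that $\vec{x},\vec{y},\vec{z}$ are nearly orthonormal, and then simply writes
\[
\vec{u}\;\approx\;\sum_{v\in\{x,y,z\}}\vec{v}\,\cos\!\frac{\dhat(u,v)}{R}\,,
\]
normalizing afterward. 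Your four-landmark overdetermined system also works and the conditioning argument you outline is sound, but the extra landmark is unnecessary and the ``main obstacle'' you single out is not actually present.
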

\begin{proof}[Proof Sketch]
First note that objects of size $o(R)$ act nearly as they do in the flat plane. In particular, the neighborhood $B(u,r)$ of each point $u$ is a spherical cap with angular radius $r/R$. The area of this cap is 
\[
2 \pi R^2 \left( 1-\cos (r/R) \right) 
= \pi r^2 \left( 1 - O(r^2/R^2) \right)
= \pi r^2 \left( 1 - O(r^2/n) \right) \, .
\]
If $r=n^\alpha$ where $\alpha < 1/2$, the area of this neighborhood is $1-o(1)$ times the area $\pi r^2$ of the corresponding ball in the flat Euclidean plane. Similarly, Lemma~\ref{lem:lens-width-area} for the area of a lens-shaped intersection between two caps holds as long as $r_1,r_2 = o(R)$.

The area of the lune $B(u,r) \setminus B(v,r)$ is a continuous function $F(\|u-v\|)$ analogous to equation~\eqref{eq:F}, and if we define the short-range distance estimate $\dtilde$ as in equation~\eqref{eq:dtilde} then Lemma~\ref{lem:adjacent-distances} holds unchanged. Theorem~\ref{thm:greedy-routing} holds as well: the only significant change to the regions shown in Figure~\ref{fig:coins} occurs if $v$ is nearly antipodal to $x_0=u$, but this only helps us since the greedy routing algorithm can get closer to $v$ by moving in any direction on its first step. Thus we can compute hybrid distance estimates $\dhat(u,v)$ as before, with error $\epshat$ as defined in~\eqref{eq:epshat}. 

If anything, the triangulation of Lemma~\ref{lem:xyz} is easier on the sphere than on the plane. We can easily find three points $x,y,z$ whose pairwise geodesic distances are within $O(\sqrt{\log n})$ of $(\pi/2)R$, so that the vectors $\vec{x},\vec{y},\vec{z}$ are nearly perpendicular. We can start with any point $x$ (conveniently, all vertices are deep) and finding $y$ and $z$ takes $O(n)$ time each: they exist with high probability since, as in flat space, a spherical cap with radius $\Omega(\sqrt{\log n})$ is nonempty with high probability.

We start our reconstruction by assuming that $\vec{x},\vec{y},\vec{z}$  are the three orthonormal basis vectors in $\R^3$. Then we can estimate $\vec{u}$ as
\[
\vec{u} 
\approx \sum_{v \in \{x,y,z\}} \vec{v} (\vec{u} \cdot \vec{v}) 
= \sum_{v \in \{x,y,z\}} \vec{v} \cos \frac{\|u-v\|}{R} 
\approx \sum_{v \in \{x,y,z\}} \vec{v} \cos \frac{\dhat(u,v)}{R} \, , 
\]
normalizing if we like so that $\|\vec{u}\|=1$. This gives $\vec{u}$ with error $O(\epshat/R)$, and mapping to the sphere of radius $R$ reconstructs $u$'s position relative to $x,y,z$ with error $O(\epshat)$. The right spherical triangle formed by our assumed positions for $x,y,z$ is congruent to their true positions up to $O(\sqrt{\log n})$. So, up to some rotation or reflection of the sphere, this yields a reconstruction with distortion $O(\epshat+\sqrt{\log n})=O(\epshat)$. 
\end{proof}

\begin{remark}
We have taken advantage of the fact that the 2-sphere has a convenient embedding in $\R^3$. A more general approach, which we claim applies to any compact Riemannian submanifold with bounded curvature, would be to work entirely within the manifold itself, building a sufficient mesh of landmarks and then triangulating using geodesic distance. In particular, in the popular model of hyperbolic embeddings (e.g.~\cite{Krioukov2010,kitsak-hyper,blasius-hyper}) where the submanifold is a ball of radius $\ell$ in a negatively curved space with radius of curvature $R$, we believe similar algorithms will work as long as $\ell/R=O(1)$. We leave this for future work.
\end{remark}
\begin{figure}
\begin{center}
\includegraphics[width=8cm]{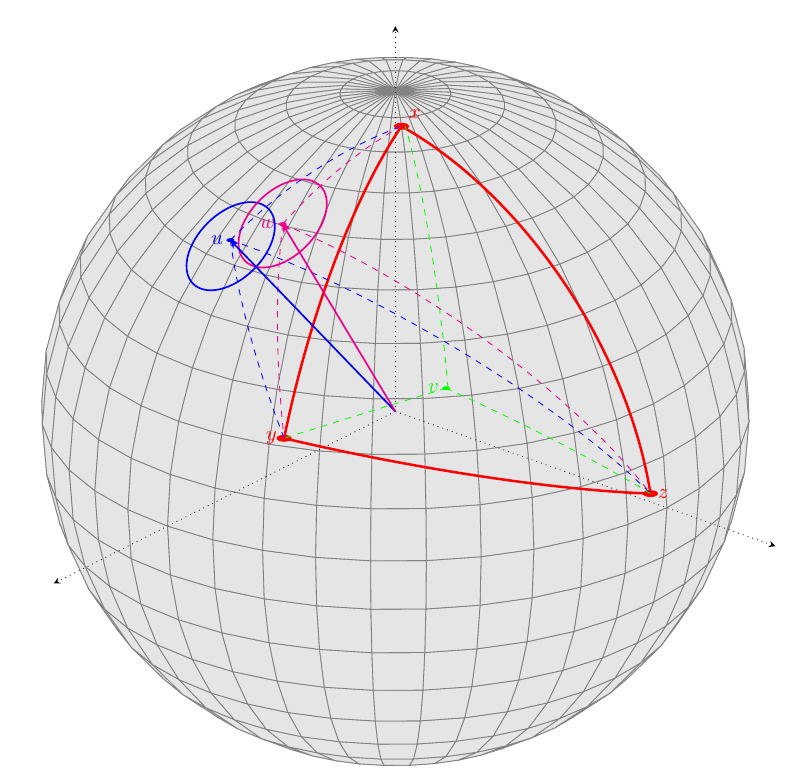}
\end{center}
\caption{Using three landmarks $x,y,z$ on the sphere to triangulate to other points in Theorem~\ref{thm:sphere}. \label{fig:3Dsphere}}
\end{figure}

\section{Missing Edges}\label{sec:missing}




In this section, we will briefly address a variant of the random geometric graph model, in which, before we are given the adjacency matrix, each edge is, independently, kept with probability $p$, and otherwise dropped. 
We will show that, in this setting, we can still accurately
reconstruct the positions of the vertices, as long as $p$ is not too small.  

There are several reasons for considering this sort of generalization of random geometric graphs.  It is a  special case of a more general model in which, after placing the vertices in the plane, each edge $(v,w)$ is included with probability $f(\|v-w\|)$ 
where $f$ is an arbitrary function, often required to be nonincreasing. For example
\newcommand{\indicator}{\mathbb{1}}
\begin{itemize}
    \item The usual RGG model is a special case where $f(x) = \indicator(x \le r)$.
    \item The model currently under discussion corresponds to $f(x) = p \;\! \indicator(x \le r)$.
    This so-called ``soft RGG'' model was studied by Penrose in~\cite{penrose-soft}.
    \item Kleinberg's~\cite{kleinberg-sw} small-world graph corresponds to $f(x) = \max\{1,x^{-C}\}$, where $C$ is a constant, although in his model only the edges are random; the vertices form an $n \times n$ square array.   He also uses $\ell_1$ distance rather than $\ell_2$.  An earlier small-world graph model due to Watts and Strogatz~\cite{watts-strogatz} corresponds to $f(x) = \indicator(x \le r) + p \;\! \indicator(x > r)$, although in that case, the underlying graph is an $n$-cycle.
\end{itemize}

In the soft RGG model, since the average degree is now $p \pi r^2$ except for vertices close to the boundary of the square, and we clearly cannot reconstruct the location of an isolated vertex,
we can only hope to reconstruct locations  when $p r^2 = \Omega(\log n)$. 
Let us make the somewhat stronger assumption that
\begin{equation} \label{eq:p^2r^2}
p^2 r^2 = \Omega(\log n).
\end{equation}
This will ensure that whenever $\|u-v\| \le r$ they almost surely have a common neighbor, so $d_G(u,v) \le 2$. 
Moreover, this condition will be sufficient to allow our short-range distance reconstruction procedure to succeed, with slight modifications. 
Under these assumptions, our main arguments still work, with only minor modifications.  The main technical issue is that, whereas before,
for a region $A \subset B(v,r) \cap B(w,r)$, we could confidently predict that 
\[
|N(v) \cap N(w) \cap A| \approx |N(v) \cap A| \approx |V \cap A| \approx \mbox{area}(A),
\]
after a $1-p$ fraction of the edges have been dropped, we expect instead that
\begin{equation}\label{eq:fn-areas}
|N(v) \cap N(w) \cap A| \approx p |N(v) \cap A| \approx p^2 |V \cap A| \approx p^2 \mbox{area}(A).
\end{equation}
By our assumption, \eqref{eq:p^2r^2}, the above expectation is large enough to ensure that as long as the area of $A$ is not too much smaller than $\pi r^2$,
$|N(v) \cap N(w) \cap A|$ is concentrated near its expectation.  For the larger quantity $|N(v) \cap A|$, the expected size is much larger, and, broadly speaking,
we can handle much smaller regions, $A$. 

\begin{proposition} \label{prop:param-estimation-false-neg}
Given the adjacency matrix of a graph $G = \RGp$, 
where $p$ and $r$ are not given, there is an efficient algorithm which estimates $p$ and $r$
accurately with error probability $O(n^{-C})$.
\end{proposition}

\begin{proof}[Sketch.] We can use the fact that for a deep vertex $v$, the expected degree is $p \pi r^2$.
And for a pair of deep vertices whose distance is $x \le r$, their expected co-degree, $|N(v) \cap N(w)|$
is $p^2 F(x)$, where $F$ is the area of $B(v,r) \cap B(w,r)$, for which a formula was given in the proof of Lemma~\ref{lem:adjacent-distances}.
This can be used to deduce that the average co-degree of two adjacent vertices is, with high probability, $C p^2 \pi r^2$, where $C$ is a 
known constant.  Combining this with the above prediction for the average degree, we can recover accurate estimates for $p$ and $r$.
\end{proof}


\noindent
{\bf Short range distance estimates}\\
Let $G = \RGp$, where  $r\, p \ge 100 \sqrt{\log n}$. 
Suppose $v,w$ are vertices for which $d_G(v,w) \le 2$.  We want to accurately estimate $\|v-w\|$ as a function of $|N(v) \cap N(w)|$. In light of ~\eqref{eq:fn-areas}, we can set
\[
\dtilde(v,w) = F^{-1}\left( \max \left\{F(r), \pi r^2 - \frac{|N(v) \cap N(w)|}{p^2} \right\}\right)
\]
where $F$, defined just before the statement of Lemma~\ref{lem:concave-MVT}, 
is the function determining the area of the lune $B(v,r) \setminus B(w,r)$, 
so that $\pi r^2 - F$ is the area of the corresponding lens $B(v,r) \cap B(w,r)$.

\begin{theorem} \label{thm:short-dis-false-neg}
Let $G = \RGp$  where  $r\, p \ge 100 \sqrt{\log n}$. 
With probability at least $1 - 2/n^2$ we have, for all vertices $v \ne w$ such that 
$d_G(v,w) \le 2$ and $v$ is deep, 
\begin{equation}
\label{eq:dtilde-good-combined-fneg}
\left| \Vert  v - w \Vert  - \dtilde(v,w) \right| \le  C \eta (\Vert v-w\Vert ) \log n \, ,
\end{equation}
where $C>0$ is a constant and $\eta: [0,2r] \to [0,1]$ is defined by 
\[
\eta(x) = \begin{cases}
\frac{1}{p} \left(\frac{2r-x}{r}\right)^{1/4} & \mbox{for $0 < x \le 2r - \frac{(\log n)^{2/3}}{r^{1/3}}$}, \\
\frac{(\log n)^{1/6}}{p \; r^{1/3}} & \mbox{for $2r - \frac{(\log n)^{2/3}}{r^{1/3}} \le x \le 2r$}.
\end{cases}
\]
In particular, $\eta(x) \le \tfrac{2^{1/4}}{p} < \tfrac{1.2}{p}$ for all $0 \le x \le 2r$.
\end{theorem}

Note that, because the asymmetric difference sets, $N(v) \setminus N(w)$ are no longer contained within the lune $B(v,r) \setminus B(w,r)$,
we are now basing our distance estimations only on the intersections $N(v) \cap N(w)$, which are still contained within the lens
$B(v,r) \cap B(w,r)$.  Consequently, our error estimate $\eta(x)$ no longer approaches zero as $x \to 0$.

\vspace{.4cm}


\noindent
{\bf Long range distance estimates:}\\
Here is our main theorem about long-range distance estimates with missing edges.

\begin{theorem}
\label{thm:greedy-routing-false-neg}
Let $G = \RGp$, where  $r\, p \ge 100 \sqrt{\log n}$. There exist absolute constants $C_1, C_2, C_3$ such that, for all $n \ge 1$, all $r \ge C_1 \sqrt{\log n}$, with probability at least $1 - C_2/n^2$, all pairs of vertices $u,v$ satisfy
\begin{equation}
\left\lceil \frac{\| u-v \|}{r} \right\rceil
\le d_G(u,v) 
\le \max \left\{ 2, \left\lceil \frac{\|u-v\| + \kappa p^{-2/3} + \kappa_2 p^{-4/3}}{r}  \right\rceil \, \right\}
\label{eq:lenses-fneg}
\end{equation}
where $\kappa$ is the same as in Theorem~\ref{thm:greedy-routing}, namely,
\begin{equation}
\label{eq:kappa-misisng-edges}
\kappa = C_3 \left( \frac{\|u-v\|}{r^{4/3}} + \frac{\log(n)}{r^{1/3}} \right) \,
\end{equation}
and $\kappa_2$ is given by
\[
\kappa_2 = \frac{C_4 (\log n)^{2/3}}{r^{1/3}}.
\]
\end{theorem}

There are three differences between the conclusion of Theorem~\ref{thm:greedy-routing-false-neg} and the original
Theorem~\ref{thm:greedy-routing}.  Firstly, the original error term, $\kappa$, is now scaled by a factor of $p^{-2/3}$,
because the greedy algorithm now makes less progress per step on average, due to the culled edges.  
There is also a second error term, $\kappa_2$ due to the possibility that the final edge for the greedy algorithm
may have been dropped.  To avoid this, we stop running the greedy algorithm once the distance to the target vertex gets
a little smaller than $2r$, and directly verify that almost surely only two more hops are required.  Finally, in case
we start out at Euclidean distance less than $r$, we still may need two hops, hence the ``max'' in the formula.


\vspace{.4cm}

\noindent
{\bf Hybrid distance estimates:}\\
Finally we use the framework of Section~\ref{sec:hybrid} to create hybrid estimates of long-range distances. That is, we set
\begin{equation}\label{eq:dhat-fneg}
    \dhat(u,v) = \min_{w: d_G(w,v) \le 2} d_1(u,w)+d_2(w,v)
\end{equation}
where $d_1$ is the long range estimate and $d_2$ is the short-range estimate defined above. Then, our main result is to 
compute the region of $r$ and $p$ for which reconstruction is possible. Specifically,
\begin{theorem}
\label{thm:dhatmind1d2-fneg}
Let $r=n^\alpha$ and $p = n^{-\gamma}$ for constants
$0 \le \gamma \le \alpha \le 1/2$.
For all pairs $u, v$ where $v$ is deep, define $\dhat(u,v)$ as in equation~\eqref{eq:dhat}. 
Then with high probability, $\dhat$ is an upper bound on the Euclidean distance $\Vert u-v\Vert $ with error 
\begin{equation}
\label{eq:epshat-fneg}
\epshat(u,v) \le C n^{\lambda/6} \log(n), \mbox{ where } \lambda = 
\max \left\{ 
6 \gamma, \  3 - 8 \alpha + 4 \gamma 
\right\}
\end{equation}
and $C$ is a constant.
\end{theorem}

For this result, we use our previously established hybrid distance construction, based on 
Theorems~\ref{thm:short-dis-false-neg} and~\ref{thm:greedy-routing-false-neg}.  
For the short-distance estimates, we have error uniformly upper-bounded by $1/p$, which is $O(r \sqrt{\log n})$.
For the long-range estimate, the error is smallest when the Euclidean distance is close to a multiple of $r$,
in which case it is approximately $\frac{\kappa}{p^{2/3}} + \frac{\kappa_2}{ p^{4/3}}$.  So the hybrid
distance is likely to select a waypoint close to a multiple of $r$ units from the landmark, and have total error
\[
\frac{1.2}{p} + \frac{\kappa}{p^{2/3}} + \frac{\kappa_2}{p^{4/3}}.
\]
Plugging in $r = n^{\alpha}$ and $p = n^{-\gamma}$, and doing a case analysis to identify the largest of the above three error terms,
we deduce the result.  

\begin{corollary}
For $p = \Theta(1)$ and $r \ge n^{3/8}$, $\dhat$ is an upper bound on the Euclidean distance $\Vert u-v\Vert $
with error $O(\log n)$.  For $r = n^{\alpha}$ and $p = n^{-\gamma}$, where $3 + 4\gamma \le 14 \alpha$
and $\gamma \le \alpha \le 1/2$, $\dhat$ is an upper bound on $\Vert u-v\Vert$ with error $O(r \log n)$.
\end{corollary}

\section{Conclusions and further work}\label{sec:conclusion}
We have shown how a combination of geometric and analytic ideas can be used to reconstruct random geometric graphs with lower distortion than in previous work~\cite{diaz2019learning}, achieving a distortion of $o(r)$ whenever $r = n^\alpha$ for $\alpha > 3/14$. Here we pose several questions for further work.

First, let us call a reconstruction $\phi$ \emph{consistent} if its distances are consistent with the graph: that is, if $(u,v) \in E$ if and only if $\Vert \phi(u)-\phi(v) \Vert \le r$. Even if $\phi$ has small distortion $d^*$, it might not be consistent: some edges $(u,v) \in E$ might have $\Vert \phi(u) - \phi(v) \Vert$ between $r$ and $r+2 d^*$, and similarly some non-neighboring pairs might have $\Vert \phi(u) - \phi(v) \Vert$ between $r-2 d^*$ and $r$. To the best of our knowledge, even finding a single consistent embedding for random geometric graphs is an open question. It might be possible to refine our embedding to make it consistent, by using ``forces'' to move neighbors slightly closer together, and push non-neighbors farther away.

Secondly, a natural question is whether we can prove a significant lower bound on the distortion. An information-theoretic approach to this question would be to show that even the Bayesian algorithm, which chooses from the uniform measure on all consistent embeddings, has a typical distortion. We have been unable to prove this. However, here we sketch an argument that there \emph{exist} consistent embeddings with a certain distortion by applying a continuous function $f$ to the square $[0,\sqrt{n}]^2$ that ``warps'' the true embedding. If $f$'s derivatives are at most $\delta$ in absolute value, then for each $v$, points close to the edge of $v$'s neighborhood may move $O(\delta r)$ closer or farther away. However, a typical $v$ has some $\eps=O(1/r)$ for which there are no points whose distance is between $r-\eps$ and $r+\eps$, since the area of the corresponding annulus is $O(1)$. This suggests that if $\delta=O(\eps/r)=O(1/r^2)$, the warped embedding is still consistent (except for a few vertices where we need to be more careful). On other other hand, even if $f$ does not change the distance between nearby vertices very much, it can still move some vertices $\delta \sqrt{n}$ from their true positions, giving a distortion $d^* = \Omega(\sqrt{n}/r^2)$. If $r=n^\alpha$ this gives $\Omega(n^{1/2-2 \alpha})$.

Even if this lower bound can be made rigorous, and even if it applies to typical consistent embeddings rather than just a few, there is a large gap between it and our upper bounds. Thus it is tempting to think that our algorithm can be improved, reducing the distortion still further.  One approach would be to try to extend the geometry of overlapping disks in Theorem~\ref{thm:short-dis} to larger graph distances. Another would be to combine them with the spectral ideas of e.g.\ \cite{araya2019}. 

Thirdly, we would like to extend the results of Section~\ref{sec:missing} to other soft RGG models where $(u,v) \in E$ with probability $f(\|u-v\|)$ for some known function $f$. It is  fairly easy to generalize our arguments to families of functions where $f(x)=0$ for $x>r$ so that the bound $d_G(u,v) \ge \|u-v\|/r$ still holds, although things change if $f(x)$ tends to zero as $x$ approaches $r$ from below. For instance, if $f(x)=(1-x/r)^\zeta$ for some constant $\zeta > 0$, the expected number of edges from $u$ to a lens of width $\delta$ and area $A$ just inside $B(u,r)$ is $\Theta((\delta/r)^\zeta A) = \Theta(\delta^{3/2+\zeta} \,r^{1/2-\zeta})$. In order for there to be $\Theta(1)$ such edges, so that the greedy routing of Theorem~\ref{thm:greedy-routing} can succeed, we need to use lenses of width $\delta = \Theta(r^{-(1-2\zeta)/(3+2\zeta)})$ rather than $\Theta(r^{-1/3})$, altering the exponent in one of our error terms. Similarly, while we can carry out short-range estimates by replacing $F(x)$ in~\eqref{eq:F} with an appropriate integral, the error bounds in  Theorem~\ref{thm:short-dis} would change when $\|u-v\|$ is close to zero or close to $2r$.

Perhaps the more interesting question is how to reconstruct vertex positions when $f(x) > 0$ for all $x \in \R^+$: for instance, if $f(x) = \e^{-x/r}$. In that case, there are a small number of ``long-range'' edges $(u,v)$ in the graph where $\|u-v\| \gg r$. As a consequence, for some pairs we have $d_G(u,v) \ll \|u-v\|/r$. Our intuition is that in this case, a greedy routing will not always give a good upper bound on $d_G(u,v)$, since it may be worth going ``out of your way'' to use these long-range edges in your path. At the same time, spectral methods may continue to perform well in this context. We leave this as a challenge for the future.

Finally, we would like to see how far these techniques can be extended to curved manifolds and submanifolds with boundary. In Theorem~\ref{thm:sphere} we took advantage of the fact that the 2-sphere has a convenient embedding in $\R^3$. A more general approach, which we claim applies to any compact Riemannian submanifold with bounded curvature, would be to work entirely within the manifold itself, building a sufficiently dense mesh of landmarks and then triangulating within mesh cells. In particular, in the popular model of hyperbolic embeddings (e.g.~\cite{Krioukov2010,kitsak-hyper,blasius-hyper}) where the submanifold is a ball of radius $\ell$ in a negatively curved space with radius of curvature $R$, we believe similar algorithms will work as long as $\ell/R=O(1)$. We leave this for future work. 
\medskip


\bibliography{references}{}
\bibliographystyle{plain}

\end{document}